\title{Reachability for updatable timed automata made faster and more effective}
\author{Paul Gastin}{LSV, ENS Paris-Saclay, CNRS, Universit\'e Paris--Saclay,
France}{paul.gastin@ens-paris-saclay.fr}{https://orcid.org/0000-0002-1313-7722}{}
\author{Sayan Mukherjee}{Chennai Mathematical Institute, India}{sayanm@cmi.ac.in}{https://orcid.org/0000-0001-6473-3172}{}
\author{B Srivathsan}{Chennai Mathematical Institute, India}{sri@cmi.ac.in}{https://orcid.org/0000-0003-2666-0691}{}
\authorrunning{P. Gastin, S. Mukherjee, and B. Srivathsan}
\keywords{Updatable timed automata, Reachability, Zones, Simulations,
  Static analysis}
\newcommand{\Aa}{\mathcal{A}}
\newcommand{\Bb}{\mathcal{B}}
\newcommand{\Gg}{\mathcal{G}}
\renewcommand{\d}{\delta}
\newcommand{\e}{\eta}
\newcommand{\Land}{\bigwedge}
\newcommand{\incl}{\subseteq}
\newcommand{\Rpos}{\mathbb{R}_{\ge 0}}
\newcommand{\Nat}{\mathbb{N}}
\newcommand{\Int}{\mathbb{Z}}
\newcommand{\PSPACE}{\textsc{Pspace}}
\newcommand{\lleq}{\mathrel{\triangleleft}}
\newcommand\sem[1]{{[\![ #1 ]\!]}}
\newcommand{\Valset}{\mathbb{V}}
\newcommand{\xra}{\xrightarrow}
\newcommand{\simmA}{\sqsubseteq}
\newcommand{\gsim}[1]{\sqsubseteq_{\scriptscriptstyle #1}}
\newcommand{\lug}{\gsim{G}}
\newcommand{\upinv}{\ensuremath{up}^{-1}}
\newcommand{\gq}{\gsim{\Gg(q)}}
\newcommand{\gqzero}{\gsim{\Gg(q_0)}}
\newcommand{\simg}{\gsim{g}}
\renewcommand{\wp}{\operatorname{pre}}
\newcommand{\gqp}{\gsim{\Gg(q')}}
\newcommand{\Mbar}{\overline{M}}
\newcommand{\Ggbar}{\overline{\Gg}}
\newcommand{\context}[1]{\overline{#1}}
\begin{document}

\maketitle

\begin{abstract}
  Updatable timed automata (UTA) are extensions of classic timed
  automata that allow special updates to clock variables, like $x:= x
  - 1$, $x := y + 2$, etc., on transitions.  Reachability for
  UTA is undecidable in general.  Various subclasses with decidable
  reachability have been studied.  A generic approach to UTA
  reachability consists of two phases: first, a static analysis of the
  automaton is performed to compute a set of clock constraints at each
  state; in the second phase, reachable sets of configurations, called
  zones, are enumerated.  In this work, we improve the algorithm for
  the static analysis.  Compared to the existing algorithm, our method
  computes smaller sets of constraints and guarantees termination for
  more UTA, making reachability faster and more effective. As the main
  application, we get an alternate proof of decidability and a more
  efficient algorithm for timed automata with bounded subtraction, a
  class of UTA widely used for modelling scheduling problems. We have
  implemented our procedure in the tool TChecker and conducted
  experiments that validate the benefits of our approach.
\end{abstract}

\section{Introduction}

Timed automata~\cite{Alur:TCS:1994} are finite automata equipped with
real-time variables called clocks. Values of the clock variables
increase at the same rate as time progresses. Transitions are guarded
by constraints over the clock variables. During a transition, the
value of a variable can be updated in several ways. In the classic
model, variables can be reset to $0$, written as a command $x := 0$ in
transitions. Generalizations of this involve $x := c$ with $c \ge 0$
or $x := y + d$ where $d$ is an arbitrary integer. Automata with these
more general updates are called \emph{Updatable Timed Automata
(UTA)}~\cite{Bouyer:2004:Updateable,Bouyer:2004:forwardanalysis}. The
updates provide a ``discrete jump'' facility during transitions. These
are useful syntactic constructs for modeling real-time systems and
have been used in several
studies~\cite{Fersman:InfComp:2007,Yi:2007:FORMATS:Schedulability-extended,Henzinger:1998:JCSS:Hybrid,Hatvani:2014:AVOCSjournal,Hourglass}.

On the one hand, variables with both a continuous and a discrete flow
offer modeling convenience. On the other hand, the discrete jumps are
powerful enough to simulate counter machines through the use of
$x := x + 1$ and $x := x - 1$ updates, in fact with zero time elapse
during the entire simulation~\cite{Bouyer:2004:Updateable}. This makes
reachability for this model undecidable. Various decidable subclasses
have been investigated over the
years~\cite{Bouyer:2004:Updateable,Fersman:InfComp:2007}. The most
common technique to prove decidability involves showing the existence
of a region automaton~\cite{Alur:TCS:1994}, which is a finite
automaton accepting the (untimed) sequences of actions that have a
timed run in the UTA.  Although this gives decidability, the algorithm
via the region construction is impractical due to the presence of
exponentially many regions. Practical algorithms in current tools like
UPPAAL~\cite{Larsen:1997:UPPAAL}, PAT~\cite{PAT-ModelChecker},
Theta~\cite{theta-fmcad2017} and TChecker~\cite{tchecker} work with
\emph{zones}, which are bigger sets of configurations than regions and
can be efficiently represented and manipulated using Difference-Bound
Matrices (DBMs)~\cite{Dill:1990:DBM}. Notably, these tools implement
zone based algorithms only for UTA with restricted updates $x : = c$
for $c \ge 0$, which behave similar to the reset $x := 0$. Most of the
efforts in making the zone based algorithm more efficient have
concentrated on this subclass of timed automata with only
resets~\cite{Behrmann:STTT:2006,Herbreteau:IandC:2016,Ocan:CAV:19:abstraction-refinement}.

Recently, we have presented a zone based algorithm for updatable timed
automata~\cite{CAV-19}. Due to the undecidability of the problem, it
cannot deal with the whole class of UTA. It however covers the
subclasses tabulated in~\cite{Bouyer:2004:Updateable}. The algorithm
consists of two phases: first, a static analysis of the automaton is
performed to compute a set of clock constraints at each state of the
automaton; in the second phase, reachable sets of configurations,
stored as zones, are enumerated.  None of these phases has a
guaranteed termination.  If the static analysis terminates, a
simulation relation between zones based on the constraints generated
in the static analysis can be used to guarantee termination of the
zone enumeration. Moreover, a smaller set of constraints in the static
analysis gives a coarser simulation which leads to a faster zone
enumeration.  The simulation used in \cite{CAV-19} lifts the efficient
$LU$-simulation~\cite{Behrmann:STTT:2006,Herbreteau:IandC:2016}
studied for diagonal-free reset-only timed automata to automata with
diagonal constraints and updates.

\subparagraph
{Contributions.} In this work, we strongly improve the
static analysis of \cite{CAV-19}. The new approach accumulates fewer
clock constraints and terminates for a wider class of UTA. In
particular, it terminates for \emph{timed automata with bounded
subtraction}, which was not the case before. This
class contains updates $x:= x - c$ with $c \ge 0$ along with
resets. However, an update $x:= x - c$ is allowed in a transition
only when there is a promise that each configuration that can take
this transition has a bounded $x$-value. This boundedness property
gives decidability thanks to a finite region equivalence. This
class has been used to model schedulability
problems~\cite{Fersman:InfComp:2007}, where updates $x := x - c$ have
been crucially used to model preemption. Thus, our new static analysis
allows to use efficient simulations during the zone enumeration for
this class.

At an algorithmic level, the new analysis is a slight modification of
the older one. However, this makes some of the technical questions
significantly harder: we show that deciding termination of the new
analysis can be done in polynomial-time if the constants in the guards
and updates are encoded in unary, whereas the problem is
$\PSPACE$-complete when the constants are encoded in binary. The older
analysis does not depend on the encoding of constants, and has a
polynomial-time algorithm for deciding termination.

For the experiments, the differences in the encoding and the hardness
result do not carry much importance. The static analysis is
implemented as a fixed-point iteration which can continue for a fixed
number of steps determined by the size of the automaton, or can be
stopped after a fixed time-out. We have implemented the new static
analysis in the open source tool TChecker~\cite{tchecker}.  We noticed
that the new method terminates and produces a result for more cases,
and when both methods produce a result, the new method is faster.

\subparagraph
{Related work.} Static analysis for timed automata without
diagonal constraints and with updates restricted to $x := c$ and
$x := y + c$ with $c \ge 0$ was studied in \cite{Behrmann:TACAS:2003}
in the context of $M$-simulations, which were implemented in earlier
versions of UPPAAL and KRONOS~\cite{KRONOS}. Latest tools implement a
more efficient
$LU$-simulation~\cite{Behrmann:STTT:2006,Herbreteau:IandC:2016}. Our
method clarifies how some optimizations of \cite{Behrmann:TACAS:2003}
can be lifted to the context of $LU$-simulations and more general
updates, and also provides additional
optimizations. TIMES~\cite{TIMES-Tool} is a tool for modeling
scheduling problems. It is mentioned in~\cite{Fersman:InfComp:2007}
that TIMES implements an algorithm using zones based on ``the UPPAAL
DBM library extended with a subtraction operator''. However, the exact
simulations used in the zone enumeration are not clear to us. A
different approach to reachability is presented in
\cite{Herbreteau:2013:CAV} where the constraints needed for simulation
are learnt during the zone enumeration directly. This potentially
gives more relevant constraints and hence coarser simulations. On the
flip side, it requires a sophisticated zone enumeration method with
observable overheads. Moreover \cite{Herbreteau:2013:CAV} deals with
timed automata without diagonal constraints and general
updates. Static analysis is lucrative since it is cheap, and maintains
the reachability procedure as two simple steps. Apart from
verification of UTA, studies on the expressive power of updates and
diagonal constraints have been carried out in
\cite{Bouyer:2004:Updateable,Bouyer:2005:Conciseness}. Timed register
automata~\cite{Bojanczyk:ICALP:2012} are a variant of UTA that have
been looked at in the context of canonical representations.

\subparagraph
{Organization.} Section~\ref{sec:preliminaries-new} gives the
preliminary definitions. Section~\ref{sec:better-gg-bounds} introduces
the new static analysis approach. Some classes of UTA where the new
static analysis can be applied are discussed in Section
\ref{sec:applications}. The subsequent Section~\ref{sec:termination}
discusses the termination problem for our proposed static
analysis. Section~\ref{sec:experiments} provides the results of our
experiments and Section~\ref{sec:benchmarks} contains details
about the models used for the experiments. We conclude with Section
\ref{sec:conclusion}.


\section{Preliminaries}
\label{sec:preliminaries-new}

We denote by $\mathbb{R}$ the set of reals, by $\Rpos$ the
non-negative reals, by $\Int$ the integers and by $\Nat$ the natural
numbers. Let $X$ be a finite set of variables over $\Rpos$ called
\emph{clocks}.  A \emph{valuation} is a function
$v\colon X \to \mathbb{R}_{\ge 0}$ that maps every clock to a
non-negative real number. For $\d \in \Rpos$ we define valuation
$v + \d$ as $(v + \d)(x) := v(x) + \d$. The set of valuations is
denoted by $\mathbb{V}$.

A \emph{non-diagonal constraint} is an expression of the form
$x \lleq c$ or $c \lleq x$, where $x \in X$, $c \in \Nat$ and
${\lleq}\in\{{<},{\leq}\}$, that is, $x \lleq 3$ stands for either
$x< 3$ or $x\le 3$.  A \emph{diagonal constraint} is an expression of
the form $x - y \lleq c$ or $c \lleq x - y$ where $x,y \in X$ are
clocks and $c \in \Nat$.  An \emph{atomic constraint} is either a
non-diagonal constraint or a diagonal constraint.  We also consider
two special atomic constraints $\top$ (true) and $\bot$ (false).  A
\emph{constraint} $\varphi$ is either an atomic constraint or a
conjunction of atomic constraints, generated by the following grammar:
$\varphi ::= \top \mid \bot \mid x \lleq c \mid c \lleq x \mid x - y
\lleq c \mid c \lleq x - y \mid \varphi \wedge \varphi$ with
$c \in \Nat$, ${\lleq}\in\{{<},{\leq}\}$.  Given a constraint
$\varphi$ and a valuation $v$, we write $v(\varphi)$ for the boolean
expression that we get by replacing every clock $x$ present in
$\varphi$ with the value $v(x)$. A valuation $v$ is said to
\emph{satisfy} a constraint $\varphi$, written as $v \models \varphi$,
if the expression $v(\varphi)$ evaluates to true.  For every valuation
$v$, we have $v \models \top$ and $v \not\models \bot$.  Given a
constraint $\varphi$, we define the set
$\sem{\varphi} := \{ v \mid v \models \varphi \}$.

An \emph{update} $up\colon \mathbb{V} \mapsto \mathbb{V}$ is a partial
function mapping valuations to valuations. The update $up$ is
specified by an \emph{atomic update} for each clock $x$, given as
either $x := c$ or $x := y + d$ where $c \in \Nat$, $d \in \Int$ and
$y \in X$ (is possibly equal to $x$). We write $up_x$ for the right
hand side of the atomic update of $x$, that is, $up_x$ is either $c$
or $y + d$. Note that we want $d$ to be an integer, since we allow for
decrementing clocks, and on the other hand $c \in \Nat$ since clock
values are always non-negative.  Given a valuation $v$ and an update
$up$, we define $v(up_x)$ to be $c$ or $v(y) + d$ depending on $up_x$
being $c$ or $y + d$. We say $up(v) \ge 0$ if $v(up_x)\ge 0$ for all
$x\in X$.  In this case the \emph{valuation} $up(v)\in\mathbb{V}$ is
defined by $up(v)(x)=v(up_x)$ for all $x\in X$. In general, due to the
presence of updates $up_x:= y +d$ with $d < 0$, the update may not
yield a clock valuation and for those valuations $v$, $up(v)$ is not
defined.  For example, if $v(x) = 5$ and $up_x = x - 10$ then $up(v)$
is undefined.  Hence, the domain of the partial function
$up\colon \mathbb{V}\to\mathbb{V}$ is the set of valuations $v$ such
that $up(v)\geq0$.  Updates can be used as transformations in timed
automata transitions.  An updatable timed automaton is an extension of
a classical timed automaton which allows updates of clocks on
transitions.

\begin{definition}
  An \emph{updatable timed automaton (UTA)} $\Aa = (Q, X, q_0, T, F)$
  is given by a finite set $Q$ of states, a finite set $X$ of clocks,
  an initial state $q_0$, a set $T$ of transitions and $F \incl Q$ of
  accepting states.  Transitions are of the form $(q, g, up, q')$
  where $g$ is a constraint (also called guard) and $up$ is an update,
  $q, q' \in Q$ are the source and target states respectively.
\end{definition}

Fix a UTA $\Aa := (Q, X, q_0, T, F)$ for the rest of this section. A
configuration of $\Aa$ is a pair $(q, v)$ with $q \in Q$ and
$v \in \Valset$. Semantics of $\Aa$ is given by a transition system
over its configurations. There are two kinds of transitions:
\emph{delay} and \emph{action}. For every configuration $(q, v)$ and
every $\d \in \Rpos$ there is a delay transition
$(q, v) \xra{\d} (q, v + \d)$. For every transition
$t:= (q, g, up, q')$ in the automaton, there is an action transition
$(q, v) \xra{t} (q', v')$ in the semantics if $v \models g$ ($v$
satisfies the guard), $up(v) \ge 0$ (the update on $v$ is defined) and
$v' = up(v)$. The initial configuration is $(q_0, v_0)$ with
$v_0(x) = 0$ for every clock $x$. We write
$(q, v) \xra{\d, t} (q',v')$ for the sequence of delay $\d$ and action
$t$ from $(q, v)$. A \emph{run} is an alternating sequence of delay
and action transitions starting from the initial configuration:
$(q_0, v_0) \xra{\d_1, t_1} (q_1, v_1) \xra{\d_2, t_2} \cdots
\xra{\d_{n}, t_{n}} (q_n, v_n)$. The run is accepting if $q_n \in F$.

The reachability problem for UTA asks if a given UTA has an accepting
run. This problem is undecidable in
general~\cite{Bouyer:2004:Updateable}. Various decidable fragments
with a $\PSPACE$-complete reachability procedure have been
studied~\cite{Bouyer:2004:Updateable,Maler:2006:Scheduling,CAV-19}. The
basic reachability procedure involves computing sets of reachable
configurations of the UTA stored as constraints which are popularly
called as \emph{zones}~\cite{Daws:TACAS:1998}. A zone is a set of
valuations given by a conjunction of atomic constraints $x \lleq c$,
$c \lleq x$, $x - y \lleq c$ and $c \lleq x - y$ with $c \in \Nat$ and
$x, y \in X$. For example $(x - y \le 5) \land (2 < x)$ is a
zone. Given a state-zone pair $(q,Z)$ (henceforth called a
\emph{node}) and a transition $t:= (q, g, up, q')$, the set of
valuations
$Z_t:= \{up(v) + \d \mid v \in Z, v \models g, up(v) \ge 0, \d \ge 0
\}$ is a zone. This is the set of valuations obtained from the $v$ in
$Z$ that satisfy the guard $g$ of the transition, get updated to
$up(v)$ and then undergo a delay $\d$.  The initial node $(q_0, Z_0)$
is obtained by delay from the initial configuration:
$Z_0 := \{ v_0 + \d \mid \d \ge 0 \}$ is a zone. This lays the
foundation for a reachability procedure: start with the initial node
$(q_0, Z_0)$; from each node $(q, Z)$ that is freshly seen, explore
the transitions $t:= (q, g, up, q')$ out of $q$ to compute resulting
nodes $(q', Z_t)$. If a pair $(q, Z)$ with $q \in F$ is visited then
the accepting state is reachable in the UTA. This na\"ive zone
enumeration might not terminate~\cite{Daws:TACAS:1998}. For
termination, \emph{simulations} between zones are used.

A simulation relation on the UTA semantics is a preorder relation (in
other words, a reflexive and transitive relation)
$(q, v) \simmA (q, v')$ between configurations having the same state
such that the relation is preserved (1) on delay:
$(q, v + \d) \simmA (q, v'+\d)$ for all $\d \in \Rpos$ and (2) on
actions: if $(q, v) \xra{t} (q_1, v_1)$, then
$(q, v') \xra{t} (q_1, v'_1)$ with $(q_1, v_1) \simmA (q_1, v'_1)$ for
all $t = (q, g, up, q_1)$. This relation gets naturally lifted to
zones: $(q, Z) \simmA (q, Z')$ if for all $v \in Z$ there exists a
$v' \in Z'$ such that $(q, v) \simmA (q, v')$. Intuitively, when
$(q, Z) \simmA (q, Z')$, all sequences of transitions enabled from
$(q, Z)$ are enabled from $(q, Z')$. Therefore, all control states
reachable from $(q, Z)$ are reachable from $(q, Z')$. This allows for
an optimization in the zone enumeration: a fresh node $(q, Z)$ is not
explored if there is an already visited node $(q, Z')$ with
$(q, Z) \simmA (q, Z')$.  A simulation $\simmA$ is said to be
\emph{finite} if in every sequence of the form
$(q, Z_0), (q, Z_1), \dots$ there are two nodes $(q, Z_i)$ and
$(q, Z_j)$ with $i < j$ such that $(q, Z_j) \simmA (q, Z_i)$. Using a
finite simulation in the reachability procedure ensures
termination. Various finite simulations have been studied in the
literature, the most prominent being $LU$-simulation
\cite{Behrmann:STTT:2006,Herbreteau:IandC:2016,Gastin:2018:CONCUR} and
more recently the $\Gg$-simulation~\cite{CAV-19}. In addition to
ensuring termination, one needs simulations which can quickly prune
the search.  One main focus of research in timed automata reachability
has been in finding finite simulations which are efficient in pruning
the search.

In a previous work~\cite{CAV-19}, we introduced a new simulation
relation for UTA, called the \emph{$\Gg$-simulation}. This relation is
parameterized by a set of constraints $\Gg(q)$ associated to every
state $q$ of the automaton. The sets $\Gg(q)$ are identified based on
the transition sequences from $q$. We now present the basic
definitions and properties of $\Gg$-simulation.  The presentation
differs from~\cite{CAV-19}, but the essence of the technical content
is the same.

\begin{definition}[$G$-preorder]
  \label{def:g-simulation}
  Given a finite or infinite set of constraints $G$, we say
  $v \lug v'$ if for every $\d \ge 0$, and every $\varphi \in G$:
  $v + \d \models \varphi$ implies $v' + \d \models \varphi$.
  
  We simply write $\gsim{\varphi}$ instead of $\gsim{\{\varphi\}}$
  when $G=\{\varphi\}$ is a singleton set.
\end{definition}

Directly from the definition of $\lug$, we get that the relation
$\lug$ is a preorder. The definition also entails the following useful
property: when $v \lug v'$, $v \models \varphi$ implies
$v' \models \varphi$ for all $\varphi \in G$.  This is a first step
towards getting a simulation on the UTA semantics. It says that all
guards that $v$ satisfies are satisfied by $v'$, and hence all
transitions enabled at $v$ will be enabled at $v'$ provided the
transition guards are present in $G$. Valuations get updated on
transitions and this property needs to be preserved over these
updates. This motivates the following definition. It gives a
constraint $\psi$ such that $v \gsim{\psi} v'$ will imply
$up(v) \gsim{\varphi} up(v')$.

\begin{definition}\label{def:upinv} Given an update $up$ and a
  constraint $\varphi$, we define $\upinv(\varphi)$ to be the
  constraint resulting by simultaneously substituting $up_x$ for $x$
  in $\varphi$: $\upinv(\varphi) := \varphi[up_x/x, \forall x \in X]$.
\end{definition}

For example, for $\varphi = x - y \lleq c$,
$up^{-1}(x - y \lleq c) = up_x - up_y \lleq c$.  Similarly,
$up^{-1}(x \lleq c) = up_x \lleq c$ and
$up^{-1}(c \lleq x) = c \lleq up_x$.  Note that, $up^{-1}(\varphi)$
need not be in the syntax defined by the grammar for constraints.
But, it can be easily rewritten to an equivalent constraint satisfying
this syntax.  For example: consider the constraint $x - y \lleq c$ and
the update $up_x = z + d$ and $up_y=y$, then
$up^{-1}(\varphi) 
= z + d - y \lleq c$, which is not syntactically a constraint.
However, it is equivalent to the constraint $z - y \lleq c - d$. If
$c - d < 0$, we further rewrite as $d - c \lleq y - z$.  It is also
useful to note that $up^{-1}(\varphi)$ may sometimes yield constraints
equivalent to $\top$ or $\bot$.  For example: if $\varphi = x \lleq c$
and $up_x = d$ with $d > c$, then the constraint $up^{-1}(\varphi)$ is
equivalent to $\bot$, similarly, if $d<c$ then $up^{-1}(\varphi)$ is
equivalent to $\top$.

\begin{lemma}
  \label{lem:upinv-imp-up}
  Given a constraint $\varphi$, an update $up$ and two valuations
  $v,v'$ such that $up(v)\geq0$ and $up(v')\geq0$, if
  $v \gsim{up^{-1}(\varphi)} v'$ then $up(v) \gsim{\varphi} up(v')$.
\end{lemma}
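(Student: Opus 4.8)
The plan is to unfold the definition of $\lug$ on both sides and reduce the statement to a syntactic identity about how substitution interacts with time-elapse. Fix valuations $v, v'$ with $up(v) \ge 0$ and $up(v') \ge 0$, assume $v \gsim{up^{-1}(\varphi)} v'$, and take an arbitrary $\d \ge 0$. We must show that $up(v) + \d \models \varphi$ implies $up(v') + \d \models \varphi$. The key observation is that, because $up_x$ is either a constant $c$ or an expression $y + e$, applying the update and then elapsing $\d$ commutes with first elapsing $\d$ and then applying a \emph{shifted} update: more precisely, for the update $up$ and a delay $\d$, define $up^{\d}$ by $up^{\d}_x := c$ if $up_x = c$, and $up^{\d}_x := y + e$ if $up_x = y + e$ (so the constant updates are "frozen" and the copy updates carry the elapsed time along). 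Then one checks directly from the definitions that $up(v) + \d$ and $up^{\d}(v + \d)$ agree on every clock: if $up_x = y + e$ then both equal $v(y) + \d + e$, and if $up_x = c$ then $up(v)+\d$ has $x$-value $c + \d$ while $up^{\d}(v+\d)$ has $x$-value $c$ — so in fact the clean identity only holds for the copy updates. I would instead avoid introducing $up^{\d}$ and argue pointwise on constraints.

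The cleaner route: for a fixed $\d \ge 0$, observe that the map $w \mapsto w + \d$ is a bijection on $\mathbb{V}$, and that for any constraint $\psi$, $w + \d \models \psi$ iff $w \models \psi^{+\d}$ where $\psi^{+\d}$ is $\psi$ with each bound on a non-diagonal atom shifted appropriately and each diagonal atom left unchanged (time-elapse preserves differences). Now the heart of the matter is the substitution identity
\[
  up(w) + \d \models \varphi
  \quad\Longleftrightarrow\quad
  w \models up^{-1}(\varphi^{+\d}) \cdot \text{(suitably interpreted)},
\]
but this again mixes the shift with the substitution in a way that needs care. So here is the argument I would actually write down. By Definition~\ref{def:g-simulation}, $v \gsim{up^{-1}(\varphi)} v'$ means: for all $\d \ge 0$, $v + \d \models up^{-1}(\varphi)$ implies $v' + \d \models up^{-1}(\varphi)$. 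I claim the single substitution identity
\[
  (v + \d)(up^{-1}(\varphi)) = (up(v) + \d)(\varphi)
\]
holds as an identity of boolean expressions, \emph{provided} we use the convention that the constant atomic updates $up_x := c$ are replaced by $x := c + \d$ inside $up^{-1}$ when matching against $\varphi + \d$. Since that convention is awkward, I will instead note the following two clean facts and compose them: (i) $(up(v))(\varphi) = v(up^{-1}(\varphi))$, which is immediate from Definition~\ref{def:upinv} — substituting $up_x$ for $x$ in $\varphi$ and then evaluating at $v$ gives the same boolean value as evaluating $\varphi$ at $up(v)$, because $up(v)(x) = v(up_x)$ by definition of $up(v)$; and (ii) the time-elapse shift: $v(\psi^{+\d})$ equals $(v+\d)(\psi)$ for the appropriately-shifted constraint $\psi^{+\d}$.

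Combining (i) and (ii): $(up(v) + \d)(\varphi) = (up(v))(\varphi^{+\d}) = v(up^{-1}(\varphi^{+\d})) = (v+\d)\big((up^{-1}(\varphi^{+\d}))^{-\d}\big)$, and here $(up^{-1}(\varphi^{+\d}))^{-\d}$ is built from $up^{-1}(\varphi)$ by shifting only the constant-update contributions — in particular, it is a Boolean combination of atoms each of which is, up to the shift on the \emph{guard/bound} side, one of the atoms of $up^{-1}(\varphi)$ with possibly a different constant. The honest fix is to enlarge the set: rather than claim $up(v) + \d \models \varphi$ reduces to a single membership in $\gsim{up^{-1}(\varphi)}$, I would verify directly from Definition~\ref{def:g-simulation} that $v \gsim{up^{-1}(\varphi)} v'$ already gives what we need by instantiating the "for all $\d' \ge 0$" quantifier in that definition at the value $\d' := \d$ \emph{after} absorbing the shift — which works precisely because $up^{-1}(\varphi + \d) $ and $(up^{-1}\varphi) + \d$ define the same set of valuations when $up$ has no constant updates affecting $\varphi$, and when $up_x := c$, the atom $up^{-1}(\varphi)$ restricting $x$ becomes a constant-free fact whose truth is unaffected by the $\d$-shift. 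Thus for every atom of $\varphi$ we land in an instance covered by the $\gsim{up^{-1}(\varphi)}$ hypothesis at delay $\d$, giving $up(v) + \d \models \varphi \Rightarrow up(v')+\d \models \varphi$; since $\d$ was arbitrary, $up(v) \gsim{\varphi} up(v')$.

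The main obstacle, and the place I would be most careful, is exactly this bookkeeping of how the delay $\d$ interacts with the substitution $up^{-1}$ when $up_x$ is a constant versus a clock expression: for copy updates $up_x := y + e$ the delay "passes through" the substitution cleanly, but for constant updates $up_x := c$ the substituted expression no longer mentions any clock, so adding $\d$ on the $up(v)$ side must be matched by leaving the corresponding substituted atom untouched on the $v$ side. Getting this case split right — and observing that in the constant case the relevant atom's truth value is simply $\d$-independent — is the crux; everything else is the routine unfolding of Definitions~\ref{def:g-simulation} and~\ref{def:upinv}.
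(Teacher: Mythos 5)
Your final paragraph is, at its core, the same case analysis the paper's proof uses: unfold Definition~\ref{def:g-simulation}, observe that for a copy update $up_x := y+e$ the delay $\d$ passes through the substitution (so the hypothesis instantiated at delay $\d$ applies directly), while for a constant update $up_x := c$ both $up(v)$ and $up(v')$ assign the same value $c$ to $x$ and the implication is immediate. Those two observations are correct and are exactly the paper's non-diagonal case.

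There is, however, a concrete gap: diagonal atoms. Your concluding claim --- that ``for every atom of $\varphi$ we land in an instance covered by the $\gsim{up^{-1}(\varphi)}$ hypothesis at delay $\d$'' --- fails when $\varphi$ is a diagonal $x-y\lleq c$ and exactly one of $up_x,up_y$ is a constant update. In that case $up^{-1}(\varphi)$ is a \emph{non-diagonal} constraint (e.g.\ $up_x=c_1$ and $up_y=w+e$ give $c_1-c-e\lleq w$), so its truth at $v+\d$ genuinely varies with $\d$, whereas the truth of $\varphi$ at $up(v)+\d$ does not (the $\d$'s cancel in the difference). The hypothesis must therefore be instantiated at delay $0$, not $\d$: from $up(v)+\d\models\varphi$ deduce $up(v)\models\varphi$, hence $v\models up^{-1}(\varphi)$, apply the preorder at delay $0$ to get $v'\models up^{-1}(\varphi)$, and climb back up. This is precisely how the paper treats diagonals, and your write-up never carries that case through: it is mentioned once inside an abandoned attempt and then dropped from the final case split. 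Separately, a submitted proof should not contain the two explicitly abandoned false starts (the $up^{\d}$ construction and the $\varphi^{+\d}$ identity you decline to make precise); strip these and write only the three cases --- non-diagonal with copy update, non-diagonal with constant update, diagonal --- followed by the one-line conjunction step.
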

\begin{proof}
  Let $\d \ge 0$ and assume $up(v) + \d \models \varphi$. We need to
  show $up(v') + \d \models \varphi$.

  Suppose $\varphi$ is a diagonal constraint $x - y \lleq c$. Then,
  since $up(v) + \d \models \varphi$, we also have $up(v)$ (without
  the $ + \d$) satisfying $\varphi$. This entails
  $v \models \upinv(\varphi)$, by definition of $\upinv(\varphi)$. By
  hypothesis, we have $v \gsim{\upinv(\varphi)} v'$. Together with
  $v \models \upinv(\varphi)$, we get $v' \models
  \upinv(\varphi)$. This leads to $up(v') \models \varphi$. Again,
  since $\varphi$ is a diagonal, we get $up(v') + \d \models \varphi$.

  Suppose $\varphi$ is a non-diagonal constraint $x \lleq c$ or
  $c \lleq x$. We have respectively $up(v)(x) + \d \lleq c$ or
  $c \lleq up(v)(x) + \d$. If the update to $x$ in $up$ is $x := d$,
  then $up(v)(x) = d = up(v')(x)$ and hence $up(v') + \d$ also
  satisfies $\varphi$.  Else, the update is some $x:= y + d$. Now,
  depending on the constraint, we have either
  $v(y) + d + \d \lleq c $ or $c \lleq v(y) + d + \d$, equivalently
  $v(y) + \d \lleq c - d$ or $c - d \lleq v(y) + \d$. Notice that
  $\upinv(\varphi)$ is $y \lleq c - d$ or $c - d \lleq y$
  respectively. Use $v \gsim{\upinv(\varphi)} v'$ to infer that
  $v'(y) + \d \lleq c - d$ or $c - d \lleq v'(y) + \d$, leading to
  the required result $up(v') + \d \models \varphi$.
  
  When $\varphi = \bigwedge_i \varphi_i$, where each $\varphi_i$ is an
  atomic constraint, we have $up(v) + \d \models \varphi_i$ for every
  $i$. From the previous cases we get $up(v') + \d \models
  \varphi_i$. Therefore, $up(v') + \d \models \varphi$.
\end{proof}

\begin{definition}[$\Gg$-maps]\label{def:sound-g-bounds}
  Let $\Aa = (Q, X, q_0, T, F)$ be a UTA. A \emph{$\Gg$-map} $\Gg_\Aa$
  for UTA $\Aa$ is a tuple $(\Gg_\Aa(q))_{q \in Q}$ with each
  $\Gg_\Aa(q)$ being a set of atomic constraints, such that the
  following conditions are satisfied. For every transition
  $(q, g, up, q') \in T$:
  \begin{itemize}[nosep]
  \item every atomic constraint of $g$ belongs to $\Gg_\Aa(q)$,
  \item $\upinv(0\leq x) \in \Gg_\Aa(q)$ for every $x \in X$,
  \item $\upinv(\varphi) \in \Gg_\Aa(q)$ for every
    $\varphi \in \Gg_\Aa(q')$ \quad (henceforth called the
    \emph{propagation criterion})
  \end{itemize}
  When the UTA $\Aa$ is clear from the context, we write $\Gg$ instead
  of $\Gg_\Aa$.
\end{definition}

The propagation criterion allows to maintain the property described
after Definition~\ref{def:g-simulation} even after the update
occurring at transitions, and leads to a simulation relation on the
configurations of the corresponding UTA, thanks to
Lemma~\ref{lem:upinv-imp-up}.

\begin{definition}[$\Gg$-simulation]\label{def:g-simulation-with-g-map}
  Given a $\Gg$-map $\Gg$, the relation $\gsim{\Gg}$ on the UTA
  semantics defined as $(q, v) \gsim{\Gg} (q',v')$ whenever $q=q'$ and
  $ v \gq v'$, is called the $\Gg$-simulation.
\end{definition}

In general, an automaton may not have \emph{finite} $\Gg$-maps due to
the propagation criterion generating more and more constraints. When a
$\Gg$-map is finite, there is an algorithm to check
$(q, Z) \gq (q, Z')$. The fewer the constraints in a $\Gg(q)$, the
larger is the simulation $\gq$ (c.f.\
Definition~\ref{def:g-simulation}).  Hence there is more chance of
getting $(q, Z) \gq (q, Z')$ which in turn makes the enumeration more
efficient. Moreover, fewer constraints in $\Gg(q)$ give a quicker
simulation test $(q, Z) \gq (q, Z')$.  The goal therefore is to get a
$\Gg$-map as small as possible. Notice that if $\Gg_1$ and $\Gg_2$ are
$\Gg$-maps, then the map $\Gg_{min}$ defined as
$\Gg_{min}(q) := \Gg_1(q) \cap \Gg_2(q)$ is also a $\Gg$-map. A static
analysis of the automaton to get a $\Gg$-map is presented in
\cite{CAV-19}. The analysis performs an iterative fixed-point
computation which gives the smallest $\Gg$-map (for the pointwise
inclusion order) whenever it terminates. A procedure to detect if the
fixed-point iteration will terminate at all is also given in
\cite{CAV-19}.


\section{A new static analysis with reduced propagation of
  constraints}
\label{sec:better-gg-bounds}

In this section we give a refined propagation criterion, which cuts
short certain propagations. We start with a motivating
example. Figure~\ref{fig:non-terminating-G-computation} presents an
automaton and illustrates the fixed-point iteration computing the
smallest $\Gg$-map. Identity updates (like $y := y$) are not
explicitly shown. Only the newly added constraints at each step are
depicted. The first step adds constraints that meet the first two
conditions of Definition~\ref{def:sound-g-bounds}. Note that
$\upinv(0 \le y)$ is $0 \le y$ which is semantically equivalent to
$\top$. So we do not add it explicitly to the $\Gg$-maps.  Consider
two transitions $(q_0, v) \xra{t} (q_1, up(v))$ and
$(q_0, v') \xra{t} (q_1, up(v'))$ with
$t = (q_0, x \le 3, x:= x - 1 , q_1)$, and $up$ being $x := x -
1$. Suppose we require $up(v) \gsim{x - y < 1} up(v')$. By
Definition~\ref{def:g-simulation}, we need to satisfy the condition:
if $up(v) \models x - y < 1$, then $up(v') \models x - y <
1$. Rewriting in terms of $v$: if $v(x) - 1 - v(y) < 1$, then
$v'(x) - 1 - v'(y) < 1$. In other words, we need: if
$v \models x - y < 2$, then $v' \models x - y < 2$. This is achieved
by adding $x - y < 2$, the $\upinv(x - y < 1)$, to $\Gg(q_0)$ in the
second step. This is the essence of the propagation criterion of
Definition~\ref{def:sound-g-bounds}, which asks that for each
$\varphi \in \Gg(q_1)$, we have $\upinv(\varphi) \in \Gg(q_0)$. The
fixed-point computation iteratively ensures this criterion for each
edge of the automaton. As illustrated, the computation does not
terminate in Figure~\ref{fig:non-terminating-G-computation}. There are
three sources of increasing constants: (1) $x \le 3$,
$x \le 4, \dots$, (2) $1 \le x$, $2 \le x, \dots$ and (3)
$x - y < 1, x - y < 2, \dots$.

\begin{figure}[t]
  \centering
  \begin{tikzpicture}[state/.style={draw, circle, inner sep=2pt,
      minimum size = 4mm}]
    \begin{scope}[every node/.style={state}]
      \node (0) at (0,0) {\scriptsize $q_0$}; \node (1) at (3,0)
      {\scriptsize $q_1$}; \node (2) at (6,0) {\scriptsize $q_2$};
    \end{scope}
    \begin{scope}[->, >=stealth]
      \draw (-0.7,0) to (0); \draw (0) to [bend left=30] (1); \draw
      (1) to [bend left=30] (0); \draw (1) to (2);
    \end{scope}
    \node at (1.5,0.7) {\scriptsize $x \le 3$}; \node at (1.5,0.3)
    {\scriptsize $x:=x-1$}; \node at (4.5, 0.2) {\scriptsize
      $x - y < 1$};
  \end{tikzpicture}

  \bigskip

  \scriptsize
  \begin{tikzpicture}[state/.style={rectangle, inner sep=2pt, minimum
      size = 4mm, align=left}]
    \begin{scope}[every node/.style={state}]
      \node (-1) at (-1,0) {$\Gg(q_0) = $ \\ $\Gg(q_1) = $ \\
        $\Gg(q_2) = $};
      \node (1) at (0.5,0) {$ \{ x \le 3, ~1 \le x \} $ \\
        $ \{ x - y < 1 \} $ \\ $ \{ \} $}; \node (2) at (3.2,0)
      {$ \{ ~\dots, ~x - y < 2 \} $ \\
        $ \{ ~\dots, ~x \le 3, ~1 \le x\} $ \\ $ \{ \} $}; \node (3)
      at
      (6.3,0) {$ \{~\dots, ~x \le 4,  ~2 \le x \} $ \\
        $ \{~\dots,~x - y < 2 \} $ \\ $ \{ \} $};

      \node (4) at (9.6,0)
      {$ \{~\dots, ~x - y < 3 \} $ \\
        $ \{~\dots, ~x \le 4, ~2 \le x \} $ \\
        $ \{ \} $}; \node (7) at (11.5, 0) {$\dots$};
    \end{scope}
    \begin{scope}[->]
      \draw (1) to (2); \draw (2) to (3); \draw (3) to (4); \draw (4)
      to (7);
    \end{scope}
  \end{tikzpicture}
  \caption{Example automaton for which the $\Gg$-map computation of
    \cite{CAV-19} does not terminate}
  \label{fig:non-terminating-G-computation}
\end{figure}

We claim that this conservative propagation is unnecessary to get the
required simulation.  Suppose $v \gqzero v'$ and
$(q_0, v) \xra{t} (q_1, up(v))$, with
$t := (q_0, x \le 3, x:= x - 1, q_1)$.  Since $t$ is enabled at $v$,
we have $v(x) \le 3$, hence $v'(x)\le 3$ since guard $x \le 3$ is
present in $\Gg(q_0)$.  We get $v,v' \models x - y \le 3$ as $y \ge 0$
for all valuations. The presence of $x - y < 4, x - y < 5, \dots$ at
$\Gg(q_0)$ is useless as both $v, v'$ already satisfy these
guards. Stopping the propagation of $x - y < 3$ from $\Gg(q_1)$ will
cut the infinite propagation due to (3). A similar reasoning cuts the
propagation of $x \le 2$ from $\Gg(q_1)$ and stops (1). The remaining
source (2) is trickier, but it can still be eliminated. Here is the
main idea.  Consider a constraint $3 \le x \in \Gg(q_0)$ which
propagates unchanged to $\Gg(q_1)$ and then back to $\Gg(q_0)$ as
$up^{-1}(3\leq x)=4 \le x$. This propagation can be cut since
$v\gsim{3\leq x}v'$ already ensures $v\gsim{4 \le x}v'$ for the
valuations that are \emph{relevant}: the ones that satisfy the guard
$x \le 3$ of $t$.  Indeed, $v,v'\models x\le 3$ and
$v\gsim{3\leq x}v'$ implies $v(x)\le v'(x)$ which in turn implies
$v\gsim{4\leq x}v'$.  Overall, it can be shown that
$\Gg(q_0) = \{ x \le 3, ~3 \le x, ~x - y < 2, ~x -y < 3 \}$ and
$\Gg(q_1) = \{x - y < 1 \} \cup \Gg(q_0)$ suffices for the
$\Gg$-simulation.

\subparagraph {Taking guards into account for propagations.}  The
propagation criterion of Definition~\ref{def:sound-g-bounds} which is
responsible for non-termination, is oblivious to the guard that is
present in the transition. We will now present a new propagation
criterion that takes the guard into account and cuts out certain
irrelevant constraints.  Consider a transition $(q, g, up, q')$ and a
constraint $\varphi \in \Gg(q')$. All we require is a constraint
$\psi \in \Gg(q)$ such that $v \gsim{\psi} v'$ \emph{and}
$v \models g$ implies $up(v) \gsim{\varphi} up(v')$. The additional
``and $v \models g$'' was missing in the intuition behind the previous
propagation. Of course, setting $\psi := \upinv(\varphi)$ is
sufficient.  However, the goal is to either eliminate the need for
$\psi$ or find a $\psi$ with a smaller constant compared to
$\upinv(\varphi)$. We will see that in many cases, we can even get the
former, when we plug in the ``and $v \models g$''.

\begin{definition}[pre of an atomic constraint $\varphi$ under a
  ``guard-update'' pair $(g,up)$]
  \label{def:weakest-precondition}
  Let $(g,up)$ be a pair of a guard and an update.  For a constraint
  $\varphi$ we define $\wp(\varphi, g, up)$ to be an atomic constraint
  as given by Table~\ref{tab:optimizations-main}, when $g$ and
  $\upinv(\varphi)$ satisfy corresponding conditions. When the
  conditions of Table~\ref{tab:optimizations-main} do not apply,
  $\wp(\varphi, g, up) = up^{-1}(\varphi)$.

  For a set of constraints $\Gg$, we define $\wp(\Gg, g, up)$ to be
  the set $\bigcup_{\varphi \in \Gg} \{\wp(\varphi, g, up)\}$.
\end{definition}

\begin{table}[t]
  \centering
  \begin{tabular}{|c|c|c|c|}
    \hline
    & $\upinv(\varphi)$ & $g$ contains  & $\wp(\varphi, g, up)$ \\
    \hline
    1. & $x \lleq d$ & $x \lleq_1 c$ & $\top$ \\
    \hline
    2. & $d \lleq x$ & $x \lleq_1 c$ with $c<d$ &
                                                  $c\leq x$ \\
    \hline
    \multirow{2}{*}{3.} & \multirow{2}{*}{$x - y \lleq d~$ or $~d
                          \lleq x - y$ } & $x
                                           \lleq_1
                                           c~$ or $~x
                                           - y
                                           \lleq_1
                                           c~$ or $~e
                                           \lleq_1 x
                                           - y$
                                        &
                                          \multirow{2}{*}{$\top$}
    \\
    & &  s.t. $c < d < e$  & \\
    \hline
  \end{tabular}
  \smallskip
  \caption{Cases where $\upinv(\varphi)$ can be eliminated or replaced
    by a constraint with a smaller constant. We write $\lleq$ and
    $\lleq_1$ to insist that the operator $\lleq$ need not be same as
    the operator $\lleq_1$.}
  \label{tab:optimizations-main}
\end{table}

Our aim is to replace the $\upinv(\varphi)$ in the older propagation
criterion with $\wp(\varphi, g, up)$.  Before showing the correctness
of this approach, we state a useful lemma that follows directly from
the definition of $\Gg$-simulation.

\begin{lemma}\label{lem:g-sim-non-diagonal-LU-like} 
  Let $v, v'$ be valuations.
  \begin{itemize}[nosep]
  \item $v \gsim{x \lleq d} v'$ iff either $v \not\models x \lleq d$
    or $v'(x) \le v(x)$
  \item $v \gsim{d \lleq x} v'$ iff either $v' \models d \lleq x$ or
    $v(x) \le v'(x)$
  \end{itemize}
\end{lemma}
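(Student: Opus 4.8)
The plan is to unfold Definition~\ref{def:g-simulation} for a singleton set $G = \{\varphi\}$ and simplify, observing that the universal quantifier over $\d \ge 0$ collapses to a single condition because the constraint involves only a single clock $x$.

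For the first bullet, consider $\varphi = x \lleq d$. By Definition~\ref{def:g-simulation}, $v \gsim{x \lleq d} v'$ means: for every $\d \ge 0$, $v(x) + \d \lleq d$ implies $v'(x) + \d \lleq d$. First I would argue the reverse direction: if $v \not\models x \lleq d$ then the premise $v(x) + \d \lleq d$ fails for all $\d \ge 0$ (since $\d \ge 0$ only increases $v(x)$), so the implication holds vacuously and $v \gsim{x \lleq d} v'$; and if $v'(x) \le v(x)$, then whenever $v(x) + \d \lleq d$ we also have $v'(x) + \d \le v(x) + \d \lleq d$, giving $v' + \d \models x \lleq d$. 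For the forward direction, I would assume $v \models x \lleq d$ (so the first disjunct fails) and show $v'(x) \le v(x)$. The key observation is that $v(x) + \d \lleq d$ holds for $\d$ in a nonempty interval containing $0$; in particular there is some $\d \ge 0$ with $v(x) + \d \lleq d$ — for instance one can take $\d$ small enough, or simply $\d = 0$ if $v \models x \lleq d$ — and then $v \gsim{x \lleq d} v'$ forces $v'(x) + \d \lleq d$. To conclude $v'(x) \le v(x)$ one wants to push $\d$ up to the supremum of admissible values; the cleanest route is a contradiction argument: if $v'(x) > v(x)$, pick $\d \ge 0$ with $v(x) + \d \le d$ but $v'(x) + \d > d$, which is possible precisely because there is a gap between $v(x)$ and $v'(x)$ (taking $\d$ just below $d - v(x)$, using strictness/non-strictness of $\lleq$ as needed), contradicting the simulation. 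The second bullet, with $\varphi = d \lleq x$, is dual: $v \gsim{d \lleq x} v'$ means for all $\d \ge 0$, $d \lleq v(x) + \d$ implies $d \lleq v'(x) + \d$. If $v' \models d \lleq x$ the conclusion holds for all $\d \ge 0$ regardless; if $v(x) \le v'(x)$ then $d \lleq v(x) + \d \le v'(x) + \d$. Conversely, assuming $v' \not\models d \lleq x$, I would instantiate with a suitable $\d$ (large enough that $d \lleq v(x) + \d$ is forced to be relevant, e.g. $\d = d$ if needed, or $\d$ chosen so the premise holds) and derive $v(x) \le v'(x)$, again most smoothly by contradiction: if $v(x) > v'(x)$ choose $\d$ with $d \lleq v(x) + \d$ but $d > v'(x) + \d$.

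The only mild obstacle is handling the strict versus non-strict cases of $\lleq$ uniformly when choosing the witness $\d$: for $x < d$ one needs $\d$ strictly below $d - v(x)$, while for $x \le d$ one can take $\d$ up to $d - v(x)$ inclusive, and symmetrically for the lower-bound constraints. This is entirely routine — a one-line case split or an appeal to density of $\Rpos$ — and does not affect the structure of the argument. Since both statements are stated as ``iff'', I would present each as two short implications (or, more compactly, observe that both directions follow from the single equivalence ``$v \gsim{x \lleq d} v'$ iff ($\forall \d\ge 0:\ v+\d\models x\lleq d \imp v'+\d\models x\lleq d$)'' combined with the monotonicity of satisfaction of single-clock upper/lower bounds under delay).
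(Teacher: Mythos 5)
Your proposal is correct and follows essentially the same route as the paper: unfold Definition~\ref{def:g-simulation} for the singleton set, get the backward direction by vacuity (when $v\not\models x\lleq d$, resp.\ $v'\models d\lleq x$) or monotonicity of delay, and get the forward direction by contrapositive, exhibiting a witness $\d$ (around $d-v(x)$, with the routine strict/non-strict case split) that breaks the simulation when $v'(x)>v(x)$, resp.\ $v(x)>v'(x)$. The paper's proof is merely terser, asserting the existence of such a $\d$ without constructing it.
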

\begin{proof}
  \begin{itemize}
  \item Let $v \gsim{x \lleq d} v'$. By
    Definition~\ref{def:g-simulation}: if $v'(x) > v(x)$, then
    $v \not\models x \lleq d$. Since otherwise, there is a $\d \ge 0$
    such that $v + \d \models x \lleq d$, but
    $v' + \d \not\models x \lleq d$. For the converse, if
    $v \not \models x \lleq d$, then $v \gsim{x \lleq d} v'$
    vacuously. If $v'(x) \le v(x)$ then for every $\d \ge 0$, we have
    $v + \d \models x \lleq d$ implies $v' + \d \models x \lleq d$.

  \item Let $v \gsim{d \lleq x} v'$. Once again, by
    Definition~\ref{def:g-simulation}: if $v'(x) < v(x)$, then
    $v' \models d \lleq x$. This gives the forward implication. For
    the converse, if $v' \models d \lleq x$, then
    $v \gsim{d \lleq x} v'$ is vacuously true, and when
    $v(x) \le v'(x)$, the condition $v + \d \models d \lleq x$ implies
    $v' + \d \models d \lleq x$ is satisfied.

  \end{itemize}
\end{proof}

Readers familiar with the $LU$-simulation for diagonal-free
automata~\cite{Herbreteau:IandC:2016} may recognize that the above
lemma is almost an alternate formulation of the $LU$-simulation.  The
lemma makes a finer distinction between $<$ and $\le$ in the
constraints whereas $LU$ does not.

The next proposition allows to replace the $\upinv(\varphi)$ in
Definition~\ref{def:sound-g-bounds} by $\wp(\varphi, g, up)$ to get
smaller sets of constraints at each $q$ that still preserve the
simulation. We write $v \simg v'$ for $v \gsim{C_g} v'$, where $C_g$
is the set of atomic constraints in $g$.

\begin{proposition}
  \label{prop:pre-is-sound}
  Let $(g,up)$ be a guard-update pair, $v,v'$ be valuations such that
  $v \models g$ and $v \simg v'$, and $\varphi$ be an atomic
  constraint. Then,
  $v \sqsubseteq_{\scriptscriptstyle{\wp(\varphi,g,up)}} v'$ implies
  $v \gsim{\upinv(\varphi)} v'$.
\end{proposition}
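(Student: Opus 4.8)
The plan is to prove the proposition by a case analysis following the definition of $\wp(\varphi,g,up)$ in Table~\ref{tab:optimizations-main}. The one preliminary fact I will lean on throughout is that the hypotheses $v \models g$ and $v \simg v'$ together force $v' \models g$: indeed $v \simg v'$ is by definition $v \gsim{C_g} v'$, so by the property recorded immediately after Definition~\ref{def:g-simulation} every atomic conjunct of $g$ that $v$ satisfies is also satisfied by $v'$; and, for the same reason, $v \simg v'$ carries with it the singleton relation $v \gsim{x \lleq_1 c} v'$ for each atomic constraint $x \lleq_1 c$ occurring in $g$. If the conditions of Table~\ref{tab:optimizations-main} do not apply then $\wp(\varphi,g,up) = \upinv(\varphi)$ and there is nothing to prove; and if $\upinv(\varphi)$ simplifies to $\top$ or $\bot$ the relation $v \gsim{\upinv(\varphi)} v'$ holds vacuously. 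So only the three rows remain, and I note that in rows~1 and~3 we have $\wp(\varphi,g,up) = \top$, so the added hypothesis $v \gsim{\wp(\varphi,g,up)} v'$ is vacuous and $v \gsim{\upinv(\varphi)} v'$ must be extracted from $v \models g$ and $v \simg v'$ alone, whereas in row~2 we also get to use $v \gsim{c \le x} v'$.

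For the two non-diagonal rows the workhorse is Lemma~\ref{lem:g-sim-non-diagonal-LU-like}. In row~1, where $\upinv(\varphi) = x \lleq d$ and $g$ contains an upper bound $x \lleq_1 c$, the facts $v \models x \lleq_1 c$ and $v \gsim{x \lleq_1 c} v'$ give $v'(x) \le v(x)$ by that lemma, and feeding $v'(x) \le v(x)$ back into the lemma yields $v \gsim{x \lleq d} v'$ for any $d$. Row~2 is the most involved case: here $\upinv(\varphi) = d \lleq x$, $g$ contains $x \lleq_1 c$ with $c < d$, and $\wp(\varphi,g,up) = c \le x$. From $v' \models g$ we have $v'(x) \lleq_1 c < d$, hence $v' \not\models d \lleq x$, so by Lemma~\ref{lem:g-sim-non-diagonal-LU-like} it suffices to prove $v(x) \le v'(x)$. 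If that failed, the lemma applied to $v \gsim{c \le x} v'$ would force $v' \models c \le x$, i.e.\ $c \le v'(x)$; combined with $v'(x) \lleq_1 c$ this pins the operator $\lleq_1$ down to $\le$ and gives $v'(x) = c$, whence $v(x) > v'(x) = c \ge v(x)$ (the last inequality using $v \models x \le c$), a contradiction. Thus $v(x) \le v'(x)$ and the row is closed.

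For the diagonal row~3 the key remark is that a diagonal expression is invariant under time elapse, $(v+\d)(x) - (v+\d)(y) = v(x) - v(y)$, so for a diagonal constraint $\psi$ the relation $v \gsim{\psi} v'$ holds as soon as $v \models \psi$ implies $v' \models \psi$; I will in fact verify that in each situation listed in the row $v'$ satisfies $\upinv(\varphi)$ unconditionally (or $v$ never does). Concretely, if $\upinv(\varphi) = x - y \lleq d$ and $g$ contains $x \lleq_1 c$ or $x - y \lleq_1 c$ with $c < d$, then $v' \models g$ and $y \ge 0$ give $v'(x) - v'(y) < d$, so $v' \models x - y \lleq d$; dually, if $\upinv(\varphi) = d \lleq x - y$ and $g$ contains $e \lleq_1 x - y$ with $d < e$, then $v \models g$ and $v' \models g$ give $d < v(x) - v(y)$ and $d < v'(x) - v'(y)$, so both valuations satisfy $\upinv(\varphi)$. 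Either way $v \gsim{\upinv(\varphi)} v'$.

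I expect no conceptual difficulty: every case closes by a single application of Lemma~\ref{lem:g-sim-non-diagonal-LU-like} or of the delay-invariance of diagonals. The only place requiring genuine care is the bookkeeping of strict versus non-strict inequalities (the $\lleq$ / $\lleq_1$ distinction and the exact operators occurring in $g$, in $\upinv(\varphi)$, and in the table's output column), handled as in the row-2 argument above; and the step it would be easy to overlook is the passage from $v \models g$ to $v' \models g$, on which the soundness of all these guard-aware cuts rests.
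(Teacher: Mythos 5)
Your proof is correct and follows essentially the same route as the paper's: a case analysis over the rows of Table~\ref{tab:optimizations-main}, using Lemma~\ref{lem:g-sim-non-diagonal-LU-like} for the two non-diagonal rows and the delay-invariance of diagonal constraints for row~3. The only cosmetic difference is in row~2, where you argue by contradiction to obtain $v(x) \le v'(x)$ while the paper directly derives $v'(x) = v(x)$ from $v'(x)\le v(x)\le c$; both arguments rest on the same facts.
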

\begin{proof}
  When $\wp(\varphi,g,up) = up^{-1}(\varphi)$, there is nothing to
  prove.  We will now prove the theorem for the combinations given in
  Table~\ref{tab:optimizations-main}.

  \emph{(Case 1).} From the hypothesis $v \gsim{g} v'$, we get
  $v \gsim{x \lleq_1 c} v'$. From the other hypothesis $v \models g$,
  we get $v \models x \lleq_1 c$. Therefore, by using the formulation
  of $v \gsim{x \lleq_1 c} v'$ from
  Lemma~\ref{lem:g-sim-non-diagonal-LU-like}, we get $v'(x) \le
  v(x)$. This entails $v \gsim{x \lleq d} v'$ for all upper bounded
  guards, once again from Lemma~\ref{lem:g-sim-non-diagonal-LU-like}.

  \emph{(Case 2).}  We have $\wp(\varphi,g,up) = c \leq x$ and
  $c < d$.  Moreover, as guard $g$ contains $x \lleq_1 c$, we have
  $v'(x) \le v(x)$ as in Case 1.  Since $v$ satisfies the guard, we
  get: $v'(x) \le v(x) \le c < d$.  From
  Lemma~\ref{lem:g-sim-non-diagonal-LU-like}, for such valuations,
  $v \gsim{c \leq x} v'$ implies $v'(x) = v(x)$.  Hence
  $v \gsim{d \lleq x} v'$.

  \emph{(Case 3).} There are sub-cases depending on whether the guard
  contains a non-diagonal constraint or the diagonal constraints. When
  the guard contains $x \lleq_1 c$, we have $v'(x) \le v(x) \le c$ as
  above. Hence $v'(x - y) \le c$ and $v(x - y) \le c$. Since we are
  given that $c < d$, both $v$ and $v'$ satisfy the diagonal
  constraint $x - y \lleq d$ and neither of them satisfies
  $d \lleq x - y$.  Notice that time elapse preserves the satisfaction
  of diagonal constraints as for every valuation $u$,
  $(u + d)(x - y) = u(x - y)$. From Definition~\ref{def:g-simulation},
  the $\Gg$-simulation for a diagonal constraint $\psi$ is satisfied
  if $v \not \models \psi$ or $v' \models \psi$. Hence,
  $v \gsim{x - y \lleq d} v'$ and $v \gsim{d \lleq x - y} v'$.

  For the other sub-cases of the guard containing $x - y \lleq_1 c$ or
  $e \lleq_1 x - y$, the hypotheses $v \models g$, $v\simg v'$ and the
  fact that $c < d < e$ ensure the same effect, that either $v$ does
  not satisfy the diagonal constraint $\upinv(\varphi)$ or $v'$
  does. Hence, by definition $v \gsim{\upinv(\varphi)} v'$.
\end{proof}

\begin{definition}[Reduced $\Gg$-maps]\label{def:reduced-g-bounds}
  A $\Gg$-map is said to be \emph{reduced} if for every transition
  $(q, g, up, q')$:
  \begin{itemize}[nosep]
  \item every atomic constraint of $g$ belongs to $\Gg(q)$,
  \item $\wp(0\leq x,g,up) \in \Gg(q)$ for every $x \in X$, and
  \item $\wp(\varphi, g, up) \in \Gg(q)$ for every
    $\varphi \in \Gg(q')$ \quad (reduced propagation)
  \end{itemize}
\end{definition}

Recall the definition of $\Gg$-simulation of
Definition~\ref{def:g-simulation-with-g-map}.  This is a relation
$\lug$ defined as $(q, v) \lug (q', v')$ whenever $q=q'$ and
$v \gsim{\Gg(q)} v'$.  The next theorem says that this relation stays
a simulation even when the $\Gg$-map is reduced.

\begin{theorem}
  \label{thm:pre-gives-simulation}
  Let $(\Gg(q))_ {q \in Q }$ be a reduced $\Gg$-map.  The relation
  $\gsim{\Gg}$ is a simulation.
\end{theorem}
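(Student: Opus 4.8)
The plan is to verify the two defining properties of a simulation relation from the definition given in the preliminaries: preservation under delay, and preservation under action transitions. Both will be checked for the relation $\gsim{\Gg}$ where $\Gg$ is a reduced $\Gg$-map.

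\medskip

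\noindent\textbf{Plan of proof.}
First I would recall that $\gsim{\Gg}$ relates $(q,v)$ and $(q',v')$ only when $q=q'$, so I only need to argue that the valuation-level relation $v \gq v'$ behaves correctly. \emph{Delay preservation} is almost immediate: I would show that $v \gq v'$ implies $v+\d \gq v'+\d$ for every $\d \ge 0$. This follows directly from Definition~\ref{def:g-simulation}, since the condition ``for every $\d' \ge 0$ and every $\varphi \in \Gg(q)$, $(v+\d)+\d' \models \varphi$ implies $(v'+\d)+\d' \models \varphi$'' is just a restriction of the same condition for $v \gq v'$ to the delays of the form $\d + \d' \ge 0$; so $\gq$ is closed under time elapse. (In fact $v \gq v'$ and $v + \d \gq v' + \d$ are equivalent.)

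\medskip

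\noindent\textbf{Action preservation.}
The main work is the action case. Fix a transition $t = (q, g, up, q_1)$ and suppose $(q,v) \xra{t} (q_1, up(v))$, which means $v \models g$ and $up(v) \ge 0$, with the target configuration $(q_1, up(v))$. Assume $v \gq v'$. I must produce the matching move from $(q,v')$: first check the guard is enabled at $v'$, then check the update is defined at $v'$, and finally establish $up(v) \gsim{\Gg(q_1)} up(v')$. For the guard: every atomic constraint of $g$ is in $\Gg(q)$ (first clause of Definition~\ref{def:reduced-g-bounds}), so from $v \gq v'$ and the basic property noted after Definition~\ref{def:g-simulation} ($v \gq v'$ and $v \models \varphi$ gives $v' \models \varphi$ for $\varphi \in \Gg(q)$), we get $v' \models g$. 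For definedness of $up$ at $v'$: I need $v'(up_x) \ge 0$ for every $x$, i.e.\ $v' \models \upinv(0 \le x)$. Here is where I apply Proposition~\ref{prop:pre-is-sound} with $\varphi := (0 \le x)$: since $\wp(0 \le x, g, up) \in \Gg(q)$ by the second clause, and $v \gq v'$ implies $v \sqsubseteq_{\scriptscriptstyle{\wp(0\le x,g,up)}} v'$, the proposition (using $v \models g$ and $v \simg v'$, the latter because $C_g \subseteq \Gg(q)$) yields $v \gsim{\upinv(0\le x)} v'$; combined with $v \models \upinv(0 \le x)$ (which holds since $up(v) \ge 0$) this gives $v' \models \upinv(0 \le x)$, hence $up(v') \ge 0$. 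Finally, for the simulation of the successors: take any $\varphi \in \Gg(q_1)$. By the reduced propagation clause $\wp(\varphi, g, up) \in \Gg(q)$, so $v \sqsubseteq_{\scriptscriptstyle{\wp(\varphi,g,up)}} v'$; Proposition~\ref{prop:pre-is-sound} then gives $v \gsim{\upinv(\varphi)} v'$, and Lemma~\ref{lem:upinv-imp-up} (applicable since $up(v) \ge 0$ and $up(v') \ge 0$) upgrades this to $up(v) \gsim{\varphi} up(v')$. As $\varphi \in \Gg(q_1)$ was arbitrary, $up(v) \gsim{\Gg(q_1)} up(v')$, i.e.\ $(q_1, up(v)) \gsim{\Gg} (q_1, up(v'))$.

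\medskip

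\noindent It also remains to note $\gsim{\Gg}$ is a preorder: reflexivity and transitivity of $v \gq v'$ come for free from Definition~\ref{def:g-simulation}, as already observed in the text. The only genuinely delicate point is making sure all the side conditions of Proposition~\ref{prop:pre-is-sound} are met each time it is invoked — in particular that $v \simg v'$, which I get from $C_g \subseteq \Gg(q)$ — and being careful that the guard-enabledness and update-definedness at $v'$ are established \emph{before} invoking Lemma~\ref{lem:upinv-imp-up}, whose hypotheses require $up(v') \ge 0$. No step involves heavy computation; the proof is essentially an orchestration of the three earlier results (the basic property after Definition~\ref{def:g-simulation}, Lemma~\ref{lem:upinv-imp-up}, and Proposition~\ref{prop:pre-is-sound}) along the three clauses of the reduced $\Gg$-map definition.
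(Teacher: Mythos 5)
Your proof is correct and follows essentially the same route as the paper's: delay preservation directly from Definition~\ref{def:g-simulation}, guard enabledness at $v'$ from the first clause, definedness of $up(v')$ from the second clause, and propagation of the simulation to the successors via Proposition~\ref{prop:pre-is-sound} and Lemma~\ref{lem:upinv-imp-up}. Your handling of the $up(v')\ge 0$ step (deducing $v'\models\upinv(0\le x)$ directly from $v\gsim{\upinv(0\le x)}v'$ and $v\models\upinv(0\le x)$, rather than invoking Lemma~\ref{lem:upinv-imp-up} whose hypotheses already assume $up(v')\ge 0$) is in fact slightly more careful than the paper's wording, but it is the same argument in substance.
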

\begin{proof}
  Let $(q, v) \gq (q, v')$. By Definition~\ref{def:g-simulation}, we
  have $(q, v + \d) \gq (q, v'+ \d)$. Now, suppose
  $(q, v) \xra{t} (q_1, v_1)$ with $t = (q, g, up, q_1)$. Since $v$
  satisfies guard $g$ and every atomic constraint of $g$ is present in
  $\Gg(q)$, we have $v'$ satisfying $g$. Moreover, $up(v) \ge
  0$. Since $\wp(0 \le x, g, up)$ is present in $\Gg(q)$ for all $x$,
  Proposition~\ref{prop:pre-is-sound} and Lemma~\ref{lem:upinv-imp-up}
  give $up(v) \sqsubseteq_{\{0 \le x, x \in X\} } up(v')$. This shows
  that $up(v') \ge 0$. For every $\varphi \in \Gg(q')$, we have
  $\wp(\varphi, g, up) \in \Gg(q)$. Once again,
  Proposition~\ref{prop:pre-is-sound} and Lemma~\ref{lem:upinv-imp-up}
  give $up(v) \gqp up(v')$. This proves that $\gsim{\Gg}$ is a
  simulation.
\end{proof}

As in the case of (non-reduced) $\Gg$-maps, notice that if $\Gg_1$ and
$\Gg_2$ are reduced $\Gg$-maps, the map $\Gg_{min}$ given by
$\Gg_{min}(q) = \Gg_1(q) \cap \Gg_2(q)$ is a reduced $\Gg$-map. There
is therefore a smallest reduced $\Gg$-map, given by the pointwise
intersection of all reduced $\Gg$-maps.

\begin{lemma}
  \label{lem:reduced-smallest-is-least-fixed-point}
  The smallest reduced $\Gg$-map with respect to pointwise inclusion
  is the least fixed-point of the following system of equations:
  $$
  \Gg(q)=\hspace{-5mm}\bigcup_{(q, g, up, q')}\hspace{-4mm}
  \{\text{atomic constraints of $g$}\} \cup \{\wp(0\leq x,g,up) \mid x
  \in X \} \cup \{\wp(\varphi, g, up) \mid \varphi \in \Gg(q')\}
  $$
\end{lemma}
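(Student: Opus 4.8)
The plan is to prove the equivalence between being the smallest reduced $\Gg$-map and being the least fixed-point of the displayed system of equations by the standard Knaster--Tarski route, exploiting monotonicity of $\wp$ in its $\Gg$-argument. Let $F$ denote the operator on tuples $(\Gg(q))_{q \in Q}$ of sets of atomic constraints given by the right-hand side of the displayed equations: $F(\Gg)(q)$ is the indicated union over all transitions out of $q$. The domain of $F$ is the complete lattice of $Q$-indexed tuples of sets of atomic constraints, ordered by pointwise inclusion. Since $\wp(\varphi, g, up)$ depends on $\varphi$, $g$, $up$ but not on $\Gg$, and the only occurrence of $\Gg$ on the right-hand side is as ``$\varphi \in \Gg(q')$'', $F$ is monotone: if $\Gg_1(q) \incl \Gg_2(q)$ for all $q$, then $F(\Gg_1)(q) \incl F(\Gg_2)(q)$ for all $q$. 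By Knaster--Tarski, $F$ has a least fixed-point.

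The core of the argument is the observation that a tuple $\Gg$ is a reduced $\Gg$-map precisely when it is a \emph{pre-fixed-point} of $F$, i.e.\ $F(\Gg) \incl \Gg$ pointwise. First I would unfold the definition: the three bullets of Definition~\ref{def:reduced-g-bounds}, ranging over all transitions $(q,g,up,q')$, say exactly that for each such transition the sets $\{\text{atomic constraints of } g\}$, $\{\wp(0 \le x, g, up) \mid x \in X\}$, and $\{\wp(\varphi,g,up) \mid \varphi \in \Gg(q')\}$ are all contained in $\Gg(q)$; taking the union over all transitions out of $q$, this is literally the statement $F(\Gg)(q) \incl \Gg(q)$. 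Hence reduced $\Gg$-maps $=$ pre-fixed-points of $F$. By the Knaster--Tarski characterization, the least fixed-point of $F$ equals the least pre-fixed-point of $F$, which is therefore the smallest reduced $\Gg$-map with respect to pointwise inclusion. This also re-derives the closure-under-intersection remark preceding the lemma, since an arbitrary intersection of pre-fixed-points of a monotone operator is again a pre-fixed-point.

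A couple of small points deserve care. One should check that the lattice really is complete even though individual $\Gg(q)$ are \emph{a priori} arbitrary sets of atomic constraints: this is fine because the powerset of the (countably infinite) set of all atomic constraints is a complete lattice under inclusion, and a finite product of complete lattices is complete. One should also note that $F$ maps tuples of sets of atomic constraints to tuples of sets of atomic constraints, using that $\wp(\varphi,g,up)$ is by Definition~\ref{def:weakest-precondition} always a single atomic constraint (after the usual rewriting of $\upinv(\varphi)$ into constraint syntax, including the degenerate values $\top$, $\bot$), so $F$ is well-defined on the chosen lattice.

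I do not anticipate a genuine obstacle here: the only thing to get right is the bookkeeping that the quantifier ``for every transition $(q,g,up,q')$'' in Definition~\ref{def:reduced-g-bounds} matches the union over transitions out of $q$ in the displayed system, and that ``$\wp(\varphi,g,up) \in \Gg(q)$ for every $\varphi \in \Gg(q')$'' is the same as $\{\wp(\varphi,g,up)\mid \varphi\in\Gg(q')\}\incl\Gg(q)$. The mild subtlety, if any, is purely lattice-theoretic: being precise that ``least fixed-point'' and ``least pre-fixed-point'' coincide for the monotone $F$, which is exactly the content of the Knaster--Tarski theorem and needs no reproof.
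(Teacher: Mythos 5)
Your proof is correct and follows essentially the same route as the paper, whose entire proof is the one-line observation that reduced $\Gg$-maps correspond to solutions of the system; you simply make the lattice-theoretic content explicit. In fact you are slightly more careful than the paper: Definition~\ref{def:reduced-g-bounds} only yields that reduced $\Gg$-maps are \emph{pre}-fixed-points ($F(\Gg)\incl\Gg$), not fixed points, so the Knaster--Tarski identification of the least fixed-point with the least pre-fixed-point, which you invoke explicitly, is exactly the step needed to close the argument.
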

\begin{proof}
  Every solution to this system of equations is a reduced $\Gg$-map
  and every reduced $\Gg$-map is a solution to this system.
\end{proof}

The smallest reduced $\Gg$-map can be computed by a standard Kleene
iteration. For every state $q$ and every $i \ge 0$:
\begin{align*}
  \Gg^0(q)  & ~= ~ \bigcup_{(q, g, up, q')} \{\text{atomic constraints of
              $g$}\} \cup \{\wp(0\leq x,g,up) \mid x \in X\} \\
  \Gg^{i+1}(q) & ~= ~\Gg^i(q) ~\cup~ \bigcup_{(q, g, up, q')} \{
                 \wp(\varphi, g, up) \mid \varphi \in \Gg^i(q')\}
\end{align*}
When $\Gg^{k+1} = \Gg^k$, a fixed-point has been found and $\Gg^k$ is
a reduced map satisfying
Definition~\ref{def:reduced-g-bounds}. Moreover, $\Gg^k$ gives the
least fixed-point to the system of equations of
Lemma~\ref{lem:reduced-smallest-is-least-fixed-point} and hence
$\Gg^k$ is the smallest reduced $\Gg$-map. When $\Gg^{i+1} \neq \Gg^i$
for all $i$, the least fixed-point is infinite and no reduced
$\Gg$-map for the automaton can be finite. For instance, if in the UTA
of Figure~\ref{fig:non-terminating-G-computation}, the guard $x \le 3$
is removed, the smallest reduced $\Gg$-map will be infinite, and the
fixed-point will continue forever, each iteration producing an
$x - y < c$ with increasing constants $c$.

It is not clear apriori how to detect whether the fixed-point
computation will terminate, or will continue forever. For the
non-reduced $\Gg$-maps, \cite{CAV-19} gives an algorithm that runs the
fixed-point computation (using $\upinv$ instead of $\wp$) for a
bounded number of steps and determines whether the computation will be
non-terminating by looking for a certain witness. The reduced
$\Gg$-map fixed-point is different due to
Table~\ref{tab:optimizations-main}, as certain propagations are
disallowed (Cases 1 and 3), or truncated to a constant determined by
the guard (Case 2). These optimizations are responsible for giving
finite $\Gg$-maps even when the non-reduced $\Gg$-maps are
infinite. This makes the termination analysis significantly more
involved.  We postpone this discussion to
Section~\ref{sec:termination}.  In the next section, we identify some
sufficient conditions that make the reduced $\Gg$-maps finite and
describe how it leads to new applications.  These observations throw
more light on the mechanics of the reduced $\Gg$-computation and
provide a preparation to the more technical
Section~\ref{sec:termination}.


\section{Applications of the reduced
  propagation}\label{sec:applications}

We exhibit three subclasses of UTA for which the reduced $\Gg$-maps
are superior than the non-reduced $\Gg$-maps: either reduced
$\Gg$-maps are finite whereas non-reduced $\Gg$-maps are not
guaranteed to be finite, or when both are 
finite, the reduced $\Gg$-map gives a bigger simulation.

\subparagraph
{Timed automata with bounded subtraction.}
Timed automata with diagonal constraints and updates restricted to
classic resets $x := 0$ and subtractions $x:= x - c$ with $c \geq 0$
have been used for modeling certain scheduling
problems~\cite{Fersman:InfComp:2007}. Reachability is undecidable for
this restricted update model~\cite{Bouyer:2004:Updateable}.  An
important result in~\cite{Fersman:InfComp:2007} is that reachability
is decidable for a subclass called timed automata with \emph{bounded
  subtraction}, and this decidability is used for answering the
schedulability questions. Proof of decidability proceeds by
constructing a region equivalence based on a maximum constant derived
from the automaton. We prove that timed automata with bounded
subtraction have finite reduced $\Gg$-maps. This gives an alternate
proof of decidability and a zone-based algorithm using
$\Gg$-simulation for this class of automata. This exercise also brings
out the significance of reduced $\Gg$-maps: without the reduced
computation, we cannot conclude finiteness.

\begin{definition}[Timed Automata with Bounded
  Subtraction~\cite{Fersman:InfComp:2007}]
  A timed automaton with ``subtraction'' is an updatable timed
  automaton with updates restricted to the form $x := 0$ and
  $x:= x - c$ for $c \geq 0$.  Guards contain both diagonal and
  non-diagonal constraints.

  A timed automaton with ``bounded subtraction'' is a timed automaton
  with subtraction such that there is a constant $M_x$ for each clock
  $x$ satisfying the following property for all its reachable
  configurations $(q,v)$: if there exists a transition $(q,g,up,q')$
  such that $v \models g$ and $up_x = x - c$ with $c>0$, then
  $v(x) \le M_x$.
\end{definition}

It is shown in~\cite{Fersman:InfComp:2007} that reachability is
decidable for timed automata with bounded subtraction when the bounds
$M_x$ are known. This definition of bounded subtraction puts a
semantic restriction over timed automata. Indeed, reachability is
decidable only when the bounds $M_x$ are apriori known. The following
is a syntactically restricted class of timed automata, that captures
the bounded subtraction model when the bounds $M_x$ are given.

\begin{definition}[Timed Automata with Syntactically Bounded
  Subtraction] \label{def:syntactically-bdd-subtraction} This is a
  timed automaton with subtraction such that, for every transition
  $(q,g,up,q')$ and clock $x$, if $up_x = x - c$ with $c>0$ then the
  guard $g$ contains an upper bound constraint $x \lleq c'$ for some
  $c' \in \Nat$.
\end{definition}

\begin{lemma}
  \label{lem:bdd-to-syntactically-bdd}
  For every timed automaton with bounded subtraction $\Aa'$ where the
  bound $M_x$ for every clock $x$ is known, there exists a timed
  automaton with syntactically bounded subtraction $\Aa$ such that the
  runs of $\Aa$ and $\Aa'$ are the same.
\end{lemma}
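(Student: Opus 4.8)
The plan is to build $\Aa$ from $\Aa'$ by augmenting each transition's guard with exactly the upper-bound constraints that the semantic boundedness property guarantees are already satisfied at the source configurations. Concretely, for every transition $t' = (q, g, up, q')$ of $\Aa'$ and every clock $x$ with $up_x = x - c$ and $c > 0$, the bounded-subtraction hypothesis tells us that every reachable $(q,v)$ with $v \models g$ also satisfies $v(x) \le M_x$. So I would define $\Aa$ to have the same states, clocks, initial state and accepting states as $\Aa'$, and for each transition $t'$ of $\Aa'$ put into $\Aa$ the transition $t = (q,\, g \wedge \bigwedge_{x\,:\,up_x = x - c,\ c>0} (x \le M_x),\, up,\, q')$. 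By construction $\Aa$ is a timed automaton with syntactically bounded subtraction in the sense of Definition~\ref{def:syntactically-bdd-subtraction}: every subtracting update $up_x = x - c$ on a transition now comes with an explicit upper bound $x \le M_x$ in the guard (and the updates of $\Aa$ are still only resets $x:=0$ and subtractions $x:=x-c$).

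The core of the argument is then to show the two automata have the same runs. Since the only difference is that guards in $\Aa$ are strengthened, every run of $\Aa$ is immediately a run of $\Aa'$: an $\Aa$-transition $t$ fires from $(q,v)$ only if $v$ satisfies the stronger guard $g \wedge \bigwedge (x \le M_x)$, hence a fortiori $g$, and the update and the resulting configuration are identical, so the same delay/action sequence is a run of $\Aa'$. For the converse, I would argue by induction on the length of the run that every reachable configuration of $\Aa'$ is reachable by the identical delay/action sequence in $\Aa$, and vice versa; the base case is the common initial configuration $(q_0, v_0)$ followed by a delay, and since the transition systems use identical updates and the reachable-configuration sets coincide up to the point considered, a transition $t'=(q,g,up,q')$ enabled at a reachable $(q,v)$ of $\Aa'$ has, for each subtracting clock $x$, $v(x) \le M_x$ by the bounded-subtraction property, so $v$ satisfies the augmented guard and the corresponding $t$ is enabled at $(q,v)$ in $\Aa$, yielding the same successor.

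The one point that needs care — and the reason the statement is phrased in terms of \emph{runs} rather than a purely syntactic transformation — is the circularity in the induction: the bounded-subtraction property of $\Aa'$ is stated over the \emph{reachable} configurations of $\Aa'$, and to invoke it when simulating a step in $\Aa$ I need to know that the configuration reached so far is reachable in $\Aa'$. This is exactly what the induction hypothesis provides (the reachable sets coincide up to the current prefix), so the argument closes, but the dependence on the semantic hypothesis must be made explicit. A minor technical nuisance is that $M_x \in \Nat$ is required for $x \le M_x$ to be a legal guard constraint; if some $M_x$ is given as a non-negative real it can be replaced by $\lfloor M_x \rfloor$ only if the values of $x$ at the relevant configurations are integral, which is not guaranteed — so one should simply assume, as is standard and as the phrasing ``the bound $M_x$ \dots is known'' suggests, that the bounds are natural numbers, or round up to $\lceil M_x\rceil$, which still makes the property hold since $v(x)\le M_x \le \lceil M_x\rceil$.
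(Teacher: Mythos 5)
Your construction is exactly the one the paper uses (strengthen each guard with $x \le M_x$ for every clock $x$ subject to a strict subtraction), and your induction on run prefixes correctly fills in the run-equivalence argument that the paper leaves implicit. The proposal is correct and takes essentially the same approach as the paper.
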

\begin{proof}
  Given $\Aa'$ we construct $\Aa$ as follows: for every transition
  $t_{\Aa'} = (q,g,up,q')$ change the guard of this transition to
  $g \wedge (\Land_{x \in B} x \le M_x)$ where $B$ is the set of
  clocks $x$ with update of the form $x:= x - c$ with $c>0$.
\end{proof}

\begin{theorem}
  \label{thm:reduced-g-bounds-finite-syntactically-bdd}
  The smallest reduced $\Gg$-maps are finite for timed automata with
  syntactically bounded subtraction.
\end{theorem}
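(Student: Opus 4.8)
The plan is to show that the Kleene iteration computing the smallest reduced $\Gg$-map stabilizes by bounding, for each state $q$, both the set of clocks (and clock-pairs) that can ever appear in a constraint of $\Gg(q)$ and the magnitude of the constants attached to them. Since atomic constraints over a fixed finite set of clocks with bounded constants form a finite set, this gives finiteness. Fix a timed automaton $\Aa$ with syntactically bounded subtraction, and let $K$ be the largest constant appearing in any guard of $\Aa$; let $C$ be the largest constant appearing in any update $x := c$ or $x := x - c$. The first step is to observe that the only updates are $x := 0$ and $x := x - c$, so $\upinv$ acts in a very controlled way: $\upinv(x \lleq d) = d \lleq c$-style rewriting or $\upinv(x \lleq d) = x \lleq d + c$ when $up_x = x - c$, and crucially $\upinv$ never changes which clocks occur in a diagonal (resets turn a diagonal into a non-diagonal, subtractions keep the same two clocks and only shift the constant). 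So the clock-structure of constraints is stable; I only need to control the constants.

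Next I would do the key case analysis, clock by clock. Partition the clocks into those that are never subtracted on any transition and those that are. For a clock $x$ that is never subtracted, every update to $x$ is either $x := 0$ or identity, so $\upinv(\varphi)$ either kills the constraint (reset — yields $\top$ or $\bot$ after simplification) or leaves it unchanged. Hence no constant blow-up can originate from such a clock; the constants stay bounded by $\max(K, C)$ essentially, because the only atomic constraints on $x$ ever added are guard constraints (bounded by $K$) and unchanged propagations of them. For a clock $x$ that \emph{is} subtracted, the syntactic-boundedness hypothesis is exactly what saves us: whenever a transition has $up_x = x - c$ with $c > 0$, its guard $g$ contains an upper bound $x \lleq c'$. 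Now invoke Table~\ref{tab:optimizations-main}: an incoming constraint $\varphi$ with $\upinv(\varphi)$ of the form $x \lleq d$ is killed (Case~1, $g$ contains $x \lleq_1 c'$), an $\upinv(\varphi)$ of the form $d \lleq x$ is truncated to $c' \leq x$ (Case~2) whenever $c' < d$, and any diagonal $\upinv(\varphi)$ involving $x$ on the "dangerous" side is killed by Case~3 once its constant exceeds $c'$. Thus the only $x$-involving constraints that survive back-propagation across a subtracting edge have constant at most $c' \leq K$. This is what prevents the three non-terminating sources identified for Figure~\ref{fig:non-terminating-G-computation} from recurring.

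The remaining, slightly delicate, point is diagonals $x - y \lleq d$ where $x$ is subtracted but the subtracting transition's guard bounds $x$ and not $y$: then Case~3's hypothesis "$x \lleq_1 c$ with $c < d$" applies (the guard contains the non-diagonal $x \lleq c'$), so for $d > c'$ the whole diagonal is eliminated, and diagonals with $d \le c' \le K$ are finitely many. One must also check the "generator" constraints $\wp(0 \le x, g, up)$: if $up_x = x - c$ then $\upinv(0 \le x) = (c \le x)$, which again gets truncated by Case~2 if the guard's bound $c'$ satisfies $c' < c$, and is otherwise already bounded by $\max(c, c')$; if $up_x = 0$ it is $\top$. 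Putting these bounds together: there is a finite bound $N$ (depending only on $K$ and $C$) such that every constant in every $\Gg^i(q)$ is at most $N$, and the clock-structure is drawn from a fixed finite set; hence $\bigcup_i \Gg^i(q)$ is finite for every $q$, so the iteration stabilizes and the smallest reduced $\Gg$-map is finite.

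I expect the main obstacle to be the bookkeeping in the diagonal cases: one has to verify that every way a diagonal constraint $x - y \lleq d$ (or $d \lleq x - y$) can be produced by a chain of $\upinv$'s — passing through resets that create it from a non-diagonal, and subtractions that shift its constant — is ultimately capped by an application of Table~\ref{tab:optimizations-main} at the \emph{first} subtracting edge it crosses, using the guard bound promised by syntactic boundedness. Phrasing this cleanly (perhaps by induction on the iteration index $i$, maintaining the invariant "every constant in $\Gg^i(q)$ is $\le N$") is where the care is needed; the non-diagonal and generator cases are comparatively routine given Lemma~\ref{lem:g-sim-non-diagonal-LU-like} and the structure of $\wp$.
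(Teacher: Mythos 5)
Your proof is correct and follows essentially the same route as the paper's: both arguments rest on the observation that syntactic boundedness forces an upper-bound guard $x\lleq c'$ on every subtracting edge, so Cases 1--3 of Table~\ref{tab:optimizations-main} either eliminate a propagated constraint or cap its constant by the maximum constant $\Mbar$ of the automaton. The paper packages this as a direct verification that the constant map $\Gg(q)=\Ggbar$ (all atomic constraints with constant at most $\Mbar$) is itself a reduced $\Gg$-map, which is exactly the case analysis you propose as the inductive step of your induction on the Kleene iteration.
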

\begin{proof}
  Let $\Mbar$ 
  be the maximum constant appearing among the guards and updates of
  the given automaton. Define $\Ggbar$ to be the (finite) set of all
  atomic constraints with constant at most $\Mbar$. We will show that
  the \emph{finite} map $\Gg$ assigning $\Gg(q) = \Ggbar$ for all $q$
  is a reduced $\Gg$-map. This then proves the theorem.

  The first two conditions of Definition~\ref{def:reduced-g-bounds}
  are trivially true. It remains to show that
  $\wp(\Ggbar,g,up) \subseteq \Ggbar$ for every transition
  $(q, g, up, q')$. Choose a constraint $\varphi \in \Ggbar$. Note
  that $\wp(\varphi, g, up)$ is a constraint having a larger constant
  than $\varphi$ only if $up$ contains subtractions (since the other
  possible update is only a reset to $0$ in this class). Thus, if $up$
  does not contain subtractions, from the construction of $\Ggbar$ it
  follows that $\wp(\varphi,g,up) \subseteq \Ggbar$. Now, if
  $up_x = x - c$ for some clock $x$ and $c>0$, then $g$ contains
  $x \lleq_1 c_1$ by definition.  If $\upinv(\varphi)$ is some
  $x \lleq d$, then Case 1 of Table~\ref{tab:optimizations-main} gives
  $\wp(\varphi, g, up) = \top$.  If $\upinv(\varphi)$ is $d \lleq x$,
  from Case 2 of the table, we have $\wp(\varphi,g,up)=c_1\leq x$ or
  $\wp(\varphi,g,up)=d\lleq x$ with $d\leq c_1$, which are both
  present in $\Ggbar$ by construction.

  Finally, assume that $\upinv(\varphi)$ is a diagonal constraint
  $x - y \lleq d$ or $d \lleq x - y$ and Case 3 of
  Table~\ref{tab:optimizations-main} does not apply.  We have
  $up_x=x-c_1$ with $c_1\geq0$ and $up_y=y-c_2$ with $c_2\geq0$ (a
  reset for $x$ or $y$ is not possible).  Moreover, if $c_1>0$ (resp.\
  $c_2>0$) then $g$ contains some $x\lleq_1 c'_1$ (resp.\
  $y\lleq_2 c'_2$).  If $c_1>0$ then, since Case 3 does not apply, we
  get $d\leq c'_1\leq \Mbar$ and $up^{-1}(\varphi)$ belongs to
  $\Ggbar$.  If $c_1=0$ and $c_2>0$ then the constraint $\varphi$ is
  respectively $x-y \lleq d+c_2$ or $d+c_2 \lleq x-y$.  Since
  $0\leq d<d+c_2\leq \Mbar$, the constraint $\upinv(\varphi)$ is
  already in $\Ggbar$.
\end{proof}

Lemma~\ref{lem:bdd-to-syntactically-bdd} and
Theorem~\ref{thm:reduced-g-bounds-finite-syntactically-bdd} give an
alternate proof of decidability and more importantly a zone based
algorithm with optimized simulations for this model. The definition of
timed automata with bounded subtraction can be seamlessly extended to
include updates $x:= y - c$ where $c \ge 0$ and $x, y$ are potentially
different clocks. Definition~\ref{def:syntactically-bdd-subtraction},
Lemma~\ref{lem:bdd-to-syntactically-bdd} and Theorem
\ref{thm:reduced-g-bounds-finite-syntactically-bdd} can suitably be
modified to use $y \lleq c'$ instead of $x \lleq c'$. This preserves
the decidability, with similar proofs, even for this extended class.

\subparagraph
{Clock bounded reachability.} 
Inspired by
Theorem~\ref{thm:reduced-g-bounds-finite-syntactically-bdd}, we
consider the problem of clock-bounded reachability: given UTA and a
bound $B \ge 0$, does there exist an accepting run
$(q_0, v_0) \xra{} (q_1, v_1) \xra{} \cdots (q_n, v_n)$ where
$v_i(x) \le B$ for all $i$ and all clocks $x$? This problem is
decidable for the entire class of UTA. The algorithm starts with a
modified zone enumeration: each new zone is intersected with
$\Land_{x} x \le B$ before further exploration. This way, only the
reachable configurations within the given bound are stored. The number
of bounded zones is finite. Hence the enumeration will terminate
without the use of any simulations. On the other hand, for efficiency,
it is important to prune the search through simulations. To use
$\Gg$-simulation, we need a finite $\Gg$-map. Since we are interested
in clock bounded reachability, we can inject the additional guard
$\Land_{x} x \le B$ in all transitions. The following theorem says
that for such automata, the reduced $\Gg$-map will be finite. This is
not true with non-reduced $\Gg$-maps. For instance, consider a
modification of the automaton in
Figure~\ref{fig:non-terminating-G-computation} with all transitions
having $x \le 3 \land y \le 3$. This does not help cutting any of the
three sources of infinite propagation that have been discussed in the
text below the figure.

\begin{theorem}
  \label{thm:clock-bounded}
  Suppose every transition of a UTA has a guard containing an upper
  constraint $x \lleq c$ for every clock. The reduced $\Gg$-map for
  such a UTA is finite.
\end{theorem}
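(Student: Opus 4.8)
The plan is to reuse the strategy from the proof of Theorem~\ref{thm:reduced-g-bounds-finite-syntactically-bdd}: exhibit a single finite reduced $\Gg$-map and invoke the fact (noted just before Lemma~\ref{lem:reduced-smallest-is-least-fixed-point}) that the smallest reduced $\Gg$-map is pointwise contained in every reduced $\Gg$-map. Let $\Mbar$ be the largest constant occurring in the guards and updates of the UTA, and let $\Ggbar$ be the (finite) set of all atomic constraints whose constant is at most $\Mbar$, together with $\top$ and $\bot$. I claim that the constant map $\Gg(q) := \Ggbar$ for every state $q$ is a reduced $\Gg$-map; this immediately yields the theorem.

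The first two conditions of Definition~\ref{def:reduced-g-bounds} are clear: every atomic constraint of a guard has constant at most $\Mbar$, and $\wp(0 \le x, g, up)$ is either $\top$ or of the form $c \le x$ with $c$ a constant appearing in $g$ or in $up$, hence at most $\Mbar$. The work is in the reduced propagation condition: I must show $\wp(\varphi, g, up) \in \Ggbar$ for every transition $(q, g, up, q')$ and every $\varphi \in \Ggbar$. By hypothesis $g$ contains, for each clock $z$, an upper constraint $z \lleq_1 c_z$ with $c_z \le \Mbar$. I would then case on the syntactic shape of $\upinv(\varphi)$. If $\upinv(\varphi)$ simplifies to $\top$ or $\bot$, then $\wp(\varphi,g,up) = \upinv(\varphi) \in \Ggbar$. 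If $\upinv(\varphi)$ is an upper bound $x \lleq d$, then Case~1 of Table~\ref{tab:optimizations-main} always applies (using $x \lleq_1 c_x \in g$), giving $\wp(\varphi,g,up) = \top \in \Ggbar$. If $\upinv(\varphi)$ is a lower bound $d \lleq x$, then either $c_x < d$ and Case~2 gives $\wp(\varphi,g,up) = c_x \le x \in \Ggbar$, or $d \le c_x \le \Mbar$, no case of the table applies, and $\wp(\varphi,g,up) = \upinv(\varphi) = d \le x \in \Ggbar$. Finally, if $\upinv(\varphi)$ is a diagonal $x - y \lleq d$ or $d \lleq x - y$, then either $c_x < d$ and Case~3 applies (since $x \lleq_1 c_x \in g$ forces $x - y \le c_x < d$ on the valuations satisfying $g$), yielding $\wp(\varphi,g,up) = \top \in \Ggbar$, or $d \le c_x \le \Mbar$, no case applies, and $\wp(\varphi,g,up) = \upinv(\varphi) \in \Ggbar$. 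This exhausts all cases.

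The only point deserving care --- and the one I would write out explicitly --- is the recurring claim that \emph{when no case of Table~\ref{tab:optimizations-main} applies, the constant of $\upinv(\varphi)$ is necessarily at most $\Mbar$}. This holds because a constant $d$ exceeding $\Mbar$ in particular exceeds the upper-bound constant $c_x \le \Mbar$ that $g$ is guaranteed to contain on the head clock $x$, and $c_x < d$ is exactly the trigger for Cases~1--3. Note the contrast with Theorem~\ref{thm:reduced-g-bounds-finite-syntactically-bdd}: there updates were restricted to resets and subtractions, so $\upinv$ could only enlarge constants in a controlled way, whereas here $up$ may be an arbitrary $x := y + d$ with $d$ an arbitrary (possibly large negative) integer, so $\upinv(\varphi)$ may have an arbitrarily large constant. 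The upper-bound guard on every clock is precisely what tames this blow-up via the table, and I do not anticipate any further obstacle beyond being careful about which clock's upper bound is matched against which shape of $\upinv(\varphi)$.
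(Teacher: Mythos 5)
Your proposal is correct and follows essentially the same route as the paper: exhibit the constant map $\Gg(q) = \Ggbar$ (all atomic constraints up to a maximum constant), verify it is a reduced $\Gg$-map via the case analysis on Table~\ref{tab:optimizations-main} using the guaranteed upper-bound guard on each clock, and conclude by minimality of the smallest reduced $\Gg$-map. The only (harmless) difference is that you take the bound over guard \emph{and} update constants, whereas the paper observes that the maximum guard constant alone already suffices.
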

\begin{proof}
  Let $M$ be the maximum constant appearing in a guard of the UTA
  (note that we have not considered the constants in the updates). Let
  $\Ggbar$ be the set of all atomic constraints with constant at most
  $M$. We will show that $\wp(\Ggbar, g, up) \incl \Ggbar$ for every
  transition $(q, g, up, q')$ and thus the $\Gg$-map associating
  $\Gg(q) = \Ggbar$ is a reduced $\Gg$-map. The proof proceeds
  similarly to
  Theorem~\ref{thm:reduced-g-bounds-finite-syntactically-bdd},
  although here there is an added convenience that the guard $g$
  contains an upper constraint for every clock. Suppose $g$ contains
  $x \lleq c$. Then, when $\upinv(\varphi)$ is an upper constraint
  $x \lleq d$, $\wp(\varphi, g, up)$ is $\top$ (Case 1); when
  $\varphi$ is $d \lleq x$, $\wp(\varphi, g, up)$ gives a constraint
  with constant atmost $c$; when $\varphi$ is a diagonal constraint
  $x - y \lleq d$ or $d \lleq x - y$ and Case 3 does not apply then
  $d \le c$ and hence $\wp(\varphi, g, up)$ is a constraint in
  $\Ggbar$.
\end{proof}

\subparagraph
{UTA with finite non-reduced $\Gg$-maps.} 
Given a finite set of atomic constraints $G$, the algorithm for
$Z \lug Z'$ first divides $G$ as $G^{nd} \cup G^d$ where $G^{nd}$
and $G^d$ are respectively the subsets of non-diagonal and diagonal
constraints in $G$. From $G^{nd}$, two functions
$L\colon X \mapsto \Nat \cup \{ -\infty\}$ and
$U\colon X \mapsto \Nat \cup \{-\infty\}$ are defined:
$L(x) = \max\{c \mid c \lleq x \in G^{nd} \}$ and
$U(x) = \max\{ c \mid x \lleq c \in G^{nd} \}$. When there is no
$c \lleq x$, $L(x) = -\infty$. Similarly for $U(x)$. Denote these
functions as $L(G)$ and $U(G)$. Once $G$ is rewritten as
$L(G), U(G)$ and $G^{d}$, \cite{CAV-19} gives a procedure to compute
$Z \lug Z'$.

For two $\Gg$-maps $\Gg_1$ and $\Gg_2$ we write
$LU(\Gg_2) \le LU(\Gg_1)$ if for every $q$ and every clock $x$,
$L(\Gg_2(q))(x) \le L(\Gg_1(q))(x)$ and
$U(\Gg_2(q))(x) \le U(\Gg_1(q))(x)$. We write $\Gg_2^d \incl \Gg_1^d$
if $\Gg_2(q)^d \incl \Gg_1(q)^d$ for every $q$.  It can be shown that
for two $\Gg$-maps $\Gg_1$ and $\Gg_2$ with $LU(\Gg_2) \le LU(\Gg_1)$
and $\Gg^d_2 \incl \Gg^d_1$, the $\Gg_2$-simulation is bigger than the
$\Gg_1$-simulation (using the observation made in the proof of
Proposition~\ref{prop:pre-is-sound} for non-diagonals and the direct
Definition~\ref{def:g-simulation} for diagonals). The following
theorem asserts that when the non-reduced $\Gg$-map is finite,
the reduced $\Gg$-map is finite and it induces a
bigger simulation. The proof of this theorem proceeds by showing
that every upper constraint $x \lleq c$ and diagonal constraint added
by the reduced propagation is also added by the non-reduced
propagation, and for every lower constraint $c \lleq x$ in the reduced
$\Gg$, there is some $c' \lleq' x$ in the non-reduced $\Gg$ with
$c \le c'$.

\begin{theorem}\label{thm:sound-finite-reduced-finite}
  When the smallest (non-reduced) $\Gg$-map $\Gg_1$ is finite, the
  smallest reduced $\Gg$-map $\Gg_2$ is also finite. Moreover,
  $LU(\Gg_2) \le LU(\Gg_1)$ and $\Gg_2^d \incl \Gg_1^d$.
\end{theorem}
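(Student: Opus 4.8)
The plan is to run the Kleene iteration $(\Gg_2^i)_{i\ge 0}$ that computes the smallest reduced $\Gg$-map $\Gg_2$ (the iteration displayed right after Lemma~\ref{lem:reduced-smallest-is-least-fixed-point}, with $\wp$) and compare it, step by step, with the fixed non-reduced map $\Gg_1$, which by Definition~\ref{def:sound-g-bounds} is already closed under $\upinv$-propagation. The invariant I would maintain for every $i$ and every state $q$ is exactly the three conditions stated in the paragraph preceding the theorem: (i) every upper constraint $x\lleq c$ of $\Gg_2^i(q)$ lies in $\Gg_1(q)$; (ii) every diagonal constraint of $\Gg_2^i(q)$ lies in $\Gg_1(q)$; (iii) for every lower constraint $c\lleq x$ of $\Gg_2^i(q)$ there is some $c'\lleq_1 x$ in $\Gg_1(q)$ with $c\le c'$. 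This invariant immediately gives $LU(\Gg_2^i)\le LU(\Gg_1)$ and $(\Gg_2^i)^d\incl\Gg_1^d$, uniformly in $i$.

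For the base case, $\Gg_2^0(q)$ consists of the atomic constraints of the guards of transitions out of $q$, which also sit in $\Gg_1(q)$, together with the constraints $\wp(0\le x,g,up)$; such a constraint is either $\top$, or a lower constraint obtained from $\upinv(0\le x)\in\Gg_1(q)$ possibly truncated (Case~2 of Table~\ref{tab:optimizations-main}, which only decreases the constant) — Cases~1 and~3 cannot apply to $\upinv(0\le x)$, which is never an upper or diagonal constraint. So the invariant holds at $i=0$.

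For the inductive step, fix a transition $(q,g,up,q')$ and a constraint $\varphi\in\Gg_2^i(q')$; I must absorb $\wp(\varphi,g,up)$ into the invariant at $q$. The key observation from Definition~\ref{def:weakest-precondition} and Table~\ref{tab:optimizations-main} is that $\wp(\varphi,g,up)$ is always either $\top$, or equal to $\upinv(\varphi)$, or — only in Case~2, where $\upinv(\varphi)$ is itself a lower constraint — a lower constraint with a \emph{smaller} constant than $\upinv(\varphi)$. If $\varphi$ is upper or diagonal, the invariant gives $\varphi\in\Gg_1(q')$, hence $\upinv(\varphi)\in\Gg_1(q)$; and $\upinv$ of an upper constraint is again an upper constraint or $\top/\bot$, while $\upinv$ of a diagonal constraint is a diagonal constraint, or $\top/\bot$, or (when $up$ sends one of the two clocks to a constant) a non-diagonal constraint — in each case $\upinv(\varphi)\in\Gg_1(q)$ already satisfies the invariant, and $\wp(\varphi,g,up)$ is $\top$ or $\upinv(\varphi)$ (Case~2 adds no upper bound), so the invariant is preserved. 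If $\varphi=c\lleq x$ is a lower constraint, the invariant supplies $c'\lleq_1 x\in\Gg_1(q')$ with $c\le c'$, hence $\upinv(c'\lleq_1 x)\in\Gg_1(q)$; this is a lower constraint (or trivial) whose constant dominates that of $\upinv(\varphi)$, which in turn dominates the constant of $\wp(\varphi,g,up)$ — so part (iii) is preserved, while (i) and (ii) are untouched since a lower $\varphi$ never propagates to an upper or diagonal constraint.

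Finiteness and the quantitative conclusion then follow at once: since $\Gg_1$ is finite, $LU(\Gg_1)$ is finite-valued and $\Gg_1^d$ is a finite set, so by the invariant every $\Gg_2^i$ draws its non-diagonal constants from a bounded set and its diagonal part from the fixed finite set $\Gg_1^d$; therefore $\bigcup_i\Gg_2^i$ is finite, the increasing Kleene iteration stabilises, $\Gg_2=\Gg_2^k$ for some $k$, and $\Gg_2$ is finite. Passing the invariant to this fixed point gives $LU(\Gg_2)\le LU(\Gg_1)$ and $\Gg_2^d\incl\Gg_1^d$. I expect the main obstacle to be the bookkeeping in the inductive step: $\upinv$ does not preserve the ``shape'' of a constraint (a diagonal can collapse to a non-diagonal when a clock is reset to a constant, an upper bound can become $\bot$, a lower bound can become $\top$), so one must check, across all these shape changes and all rows of Table~\ref{tab:optimizations-main}, that the \emph{asymmetric} domination between upper and lower constraints in (i)–(iii) is never violated, in particular that whenever $\wp$ yields a genuine lower constraint the corresponding $\upinv(c'\lleq_1 x)$ is genuine as well.
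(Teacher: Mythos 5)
Your proof is correct and follows essentially the same route as the paper: the same three-part invariant (upper and diagonal constraints of the reduced map appear in the non-reduced one, lower constraints are dominated in the constant), established by induction on the Kleene iteration, with finiteness and $LU(\Gg_2)\le LU(\Gg_1)$, $\Gg_2^d\incl\Gg_1^d$ read off from the invariant. The only cosmetic difference is that you compare $\Gg_2^i$ against the full fixed point $\Gg_1$ (using its closure under $\upinv$-propagation) rather than against $\Gg_1^i$ step by step, which changes nothing essential.
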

\begin{proof}
  Let $\Gg^i_1(q)$ and $\Gg^i_2(q)$ be the $\Gg$-set at $q$ during the
  $i^{th}$ iteration of the respective fixed-point computations. We
  will show the following invariants at each iteration of the
  fixed-point computation:
  \begin{itemize}[nosep]
  \item every upper constraint $x \lleq c \in \Gg^i_2(q)$ is also
    present in $\Gg^i_1(q)$,
  \item for every lower constraint $c \lleq x \in \Gg^i_2(q)$, there
    is some $c' \lleq' x \in \Gg^i_1(q)$ with $c\leq c'$,
  \item every diagonal constraint in $\Gg^i_2(q)$ is present in
    $\Gg^i_1(q)$.
  \end{itemize}
  For upper constraints and diagonal constraints, whenever
  $\wp(\varphi, g, up)$ used in the $\Gg_2$ propagation gives a
  non-trivial constraint, it equals $\upinv(\varphi)$. This is the
  constraint used in the $\Gg_1$ computation. This shows the first and
  third invariants. We focus on the lower constraint. At the initial
  step $0$, in addition to the lower constraints coming from guards,
  $\Gg_1(q)$ contains $\upinv(0 \le x)$ for every outgoing transition
  $(q, g, up, q')$ and every clock. The reduced computation $\Gg_2(q)$
  contains all lower constraints present in guards, and
  $\wp(0 \le x, g, up)$. When $\wp(0 \le x, g, up) = \upinv(0 \le x)$,
  this constraint is present in $\Gg_1(q)$ as well. If
  $\wp(0 \le x, g, up) \neq \upinv(0 \le x)$, this is due to Case 2 of
  Table~\ref{tab:optimizations-main}. Hence there is a guard
  $x \lleq c$ in $g$, and $\upinv(0 \le x) = d \le x$ with $c<d$,
  whereas $\wp(0 \le x, g, up) = c\leq x$.  This shows the invariant.
  For the induction step, assume the invariant at $i$.  Pick a
  constraint $c \lleq x$ in $\Gg^i_2(q')$.  By hypothesis, there is
  $c' \lleq' x$ with $c \le c'$ in $\Gg^i_1(q')$.  Notice that
  $\upinv(c \lleq x)$ is either trivial, or is of the form
  $c - \e \lleq z$ when $up_x= z + \e$.  Clearly,
  $c - \e \le c' - \e$.  Thus, when
  $\wp(c \lleq x, g, up) = \upinv(c \lleq x)\in\Gg_2^{i+1}(q)$, then
  $c'-\e\lleq' z\in\Gg_1^{i+1}(q)$ and the invariant holds.  Otherwise
  $\wp(c \lleq x, g, up) = c_1 \leq z$ due to the presence of a guard
  $z \lleq_1 c_1$ in $g$ with $c_1< c - \e\leq c'-\e$.  Hence
  $c_1 \leq z \in \Gg^{i+1}_2(q)$ and
  $c' - \e \lleq' z \in \Gg^{i+1}_2(q)$.  Since this property is true
  for each lower constraint propagation, we get the invariant.
\end{proof}


\section{Termination of the reduced propagation}
\label{sec:termination}

We present an algorithm and discuss the complexity for the problem of
deciding whether the smallest reduced $\Gg$-map of a given automaton
is finite.  Briefly, we present a large enough bound $B$ such that if
the fixed point iteration does not terminate in $B$ steps, it will
never terminate and hence the smallest reduced $\Gg$-map given by the
least fixed-point is infinite.

Let us first assume that there are no strict inequalities in the
atomic constraints present in guards. For the termination analysis, we
can convert all strict inequalities $<$ to weak inequalities
$\le$. The reduced propagation does not modify the nature of the
inequality except in Case 2 of Table~\ref{tab:optimizations-main}
where strict may change to weak.  Any propagation in the original
automaton is preserved in the converted automaton with the same
constants and vice-versa. Hence the $\Gg$-map computation terminates
in one iff it terminates in the other. We denote by $c_\varphi$ the
constant of an atomic constraint $\varphi$.

Let $\Aa = (Q, X, q_0, T, F)$ be some UTA. Let
$M=\max\{c \mid c \text{ occurs in some guard of } \Aa\}$ and
$L=\max\{|d| \mid d \text{ occurs in some update of } \Aa\}$. Let
$\Gg$ be the smallest reduced $\Gg$-map computed by the least
fixed-point of the equations in
Lemma~\ref{lem:reduced-smallest-is-least-fixed-point}. We can show
that this fixed-point computation does not terminate iff a constraint
with a large constant is added to some $\Gg(q)$.

\begin{proposition}\label{prop:termination-large-constant}
  The reduced $\Gg$-map computation does not terminate iff for some
  state $q$, there is an atomic constraint $\varphi\in\Gg(q)$ with a
  constant $c_\varphi>N=\max(M,L)+2L|Q||X|^{2}$.
\end{proposition}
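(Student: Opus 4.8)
The plan is to prove the two directions separately, with the ``if'' direction being essentially immediate: if some $\Gg(q)$ contains a constraint with constant strictly greater than $N$, then certainly the fixed-point has produced a constant exceeding $\max(M,L)$, which is the largest constant that can appear at iteration $0$; so the computation cannot have stabilized at iteration $0$, and more to the point (since $N$ is much larger than any single propagation step can create from the initial constants) the iteration must have run for many steps. Actually the cleaner framing is the contrapositive of the ``only if'' direction: we want to show that \emph{if no constraint with constant $> N$ is ever added, then the computation terminates}. For this it suffices to observe that once we know all generated constraints have constants bounded by $N$, they range over a finite set (finitely many atomic constraints over $X$ with constant in $\{0,\dots,N\}$), the sets $\Gg^i(q)$ are monotonically increasing, and hence the Kleene iteration of Lemma~\ref{lem:reduced-smallest-is-least-fixed-point} must reach a fixed point.

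The substance is therefore in the other direction: \emph{if the computation does not terminate, then some constant exceeds $N$}. I would prove the contrapositive: assume every generated constraint has constant at most $N$; then by the finiteness argument above the iteration terminates. Equivalently, and this is the form I would actually use, I would show the key quantitative claim: \textbf{if at some iteration a constraint $\varphi$ with $c_\varphi > N$ is first generated at a state $q$, then the iteration never stabilizes}; and conversely \textbf{if no constraint ever exceeds $N$, it does}. The forward implication ``non-termination $\Rightarrow$ some constant $> N$'' then follows because non-termination forces infinitely many distinct constraints, hence unboundedly large constants, hence in particular at some point a constant $> N$. So really the only thing with content is: \emph{bounded constants $\Rightarrow$ termination}, which is the finiteness-of-the-search-space observation. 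Thus the proposition reduces to checking that $N$ is a correct threshold, i.e. that there is no ``escape'' where constants grow but only up to exactly $N$ and then the process still runs forever — but that cannot happen, since constants $\le N$ give a finite lattice.

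Given that, I expect the real work — and where I would spend the proof — is not in this proposition at all but will be in the \emph{next} result (presumably showing that the bound $N = \max(M,L) + 2L|Q||X|^2$ is the \emph{right} bound, i.e. that a constant exceeding $N$ is actually \emph{reachable} whenever the process is non-terminating, with this specific polynomial size). For the present statement, the plan is: (i) note the Kleene sequence $\Gg^0 \subseteq \Gg^1 \subseteq \cdots$ is pointwise increasing; (ii) if all generated atomic constraints have constant $\le N$, then each $\Gg^i(q)$ lies in the finite set of atomic constraints over $X$ with constant in $[0,N]$, so the increasing chain stabilizes, i.e. the computation terminates; (iii) contrapositively, non-termination implies some $\Gg^i(q)$ contains a constraint of constant $> N$; (iv) conversely, the presence of a constraint with constant $> N$ means the iteration was not already at a fixed point at the step this constraint was introduced, and since introducing it required a genuinely new constraint, combined with (ii) one concludes the process must in fact fail to terminate. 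The only genuinely delicate point is articulating (iv) — that once a constant this large appears, termination is impossible — which I would justify by a monotonicity/pigeonhole argument: such a constant can only have arisen via a chain of $\wp$-propagations each of which (by Table~\ref{tab:optimizations-main} and $\upinv$) changes the constant by at most $L$, so it witnesses a propagation path of length $> 2L|Q||X|^2 / L$ through the product of states and constraint-shapes, which forces a cycle that strictly increases some constant, and hence unbounded growth. This cycle-extraction step is the main obstacle, and it is what the remainder of Section~\ref{sec:termination} will have to make precise.
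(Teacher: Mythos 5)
Your treatment of the easy direction is fine: if every generated constant is at most $N$, the $\Gg^i$ live in a finite lattice and the monotone Kleene iteration stabilises; contrapositively, non-termination forces unboundedly large constants and in particular one exceeding $N$. This matches the paper, which dismisses that implication in one line. The problem is the other direction, which you first mislabel as ``essentially immediate,'' then correctly isolate as point (iv), and then defer (``the real work \dots is not in this proposition at all but will be in the next result''). It is in this proposition, and your sketch of it has a genuine gap.

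Concretely: extracting, by pigeonhole on pairs $(q,\context{\varphi})$, a repeated pair with a strictly larger constant does \emph{not} by itself give ``unbounded growth.'' The reduced propagation $\wp$ is guard-sensitive — that is the entire point of Table~\ref{tab:optimizations-main} — so when you try to traverse the cycle again with constants shifted by $\delta=c_j-c_i$, a guard could now trigger Case 1, 2 or 3 and truncate or kill the propagation; moreover a constraint whose constant drifts low can change shape (e.g.\ $x-y\le d$ with $d<0$ rewrites to a lower diagonal), so the pigeonhole classes themselves are not stable. The paper's proof handles this by choosing the cycle inside the suffix of the propagation sequence after the \emph{last} occurrence of a constant $\le\max(M,L)$ (Lemma~\ref{lem:large-cycle}), so that every constant on the cycle exceeds both $M$ and $L$; then (Lemma~\ref{lem:large-cycle-iteration}) along the cycle one always has $\wp=\upinv$, no case of the table applies before or after the shift, and no upper/lower or diagonal/non-diagonal switch can occur, so the shifted sequence is again a valid propagation sequence and can be pumped forever. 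This is also exactly where the value of $N$ is used: climbing from $\le\max(M,L)$ to $>N$ in steps of size at most $2L$ (note $2L$, not $L$ as you wrote — a diagonal constraint picks up two update offsets) forces more than $|Q||X|^{2}$ strictly increasing ``record'' constants, enough to pigeonhole. Without the ``all constants on the cycle are large'' condition and the verification that $\wp$ coincides with $\upinv$ there, your step ``forces a cycle that strictly increases some constant, and hence unbounded growth'' does not go through.
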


For the analysis, we make use of strings of the form $x \le$, $\le x$,
$x -y \le$ and $\le x - y$ where $x,y \in X$ and call them
\emph{contexts}.  Given a context $\context{\varphi}$ and a constant
$c$, we denote by $\context{\varphi}[c]$ the atomic constraint
obtained by plugging the constant into the context.

In the proof, we shall use the notion of propagation sequence, which
is a sequence
$(q_i,\context{\varphi}_i[c_i]) \to
(q_{i+1},\context{\varphi}_{i+1}[c_{i+1}]) \to \cdots \to
(q_j,\context{\varphi}_j[c_j])$ such that for all $i\leq k<j$ we have
$\context{\varphi}_{k+1}[c_{k+1}]=\wp(\context{\varphi}_k[c_k],g_k,up_k)$
for some transition $(q_{k+1},g_k,up_k,q_k)$ of $\Aa$.

\begin{proof}[Proof of Proposition~\ref{prop:termination-large-constant}.]
  The left to right implication of
  Proposition~\ref{prop:termination-large-constant} is clear.
  Conversely, assume that $\context{\varphi}[c]\in\Gg(q)$ for some
  $(q,\context{\varphi}[c])$ with $c>\max(M,L)+2L|Q||X|^{2}$.
  Consider the smallest $n\geq0$ such that
  $\context{\varphi}[c]\in\Gg^{n}(q)$.  There is a propagation
  sequence
  $\pi= (q_0,\context{\varphi}_0[c_0]) \to
  (q_1,\context{\varphi}_1[c_1]) \to \cdots \to
  (q_n,\context{\varphi}_n[c_n])$ such that
  $\context{\varphi}_0[c_0]\in\Gg^{0}(q_0)$ and
  $(q_n,\context{\varphi}_n[c_n])=(q,\context{\varphi}[c])$.  Notice
  that $\context{\varphi}_i[c_i]\in\Gg^{i}(q_i)$ for all
  $0\leq i\leq n$.  We first show that the propagation sequence $\pi$
  contains a positive cycle with large constants.
  
\begin{lemma}\label{lem:large-cycle}
  We can find $0<i<j\leq n$ such that
  $(q_i,\context{\varphi}_i)=(q_j,\context{\varphi}_j)$, $c_i<c_j$ and
  $\max(M,L)<c_k$ for all $i\leq k\leq j$.
\end{lemma}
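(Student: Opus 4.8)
The plan is to locate the cycle inside the suffix of $\pi$ on which all constants are large, and to obtain the strict increase $c_i<c_j$ via a pigeonhole argument on (state, context) pairs. First note that every constraint of $\Gg^{0}$ has constant at most $\max(M,L)$: atomic constraints of guards have constant $\le M$, and $\wp(0\le x,g,up)$ is either trivial, or equal to $\upinv(0\le x)$ — which is $-d\le y$ with $|d|\le L$ when $up_x=y+d$ — or, via Case~2 of Table~\ref{tab:optimizations-main}, a constraint $c\le y$ with $c\le M$. Since $c_0\le\max(M,L)$ while $c_n=c>\max(M,L)$, let $i_0$ be the largest index with $c_{i_0}\le\max(M,L)$ and set $i^{*}=i_0+1$; then $1\le i^{*}\le n$ and $c_k>\max(M,L)$ for all $i^{*}\le k\le n$.

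The second ingredient is two facts about the steps of $\pi$ in the range $[i_0,n]$. (i) Every such step uses the default clause $\wp(\context{\varphi}_k[c_k],g_k,up_k)=\upinv(\context{\varphi}_k[c_k])$: the outputs of Cases~1 and~3 of Table~\ref{tab:optimizations-main} are $\top$, hence not of the form $\context{\varphi}[c]$, and the output of Case~2 is a constraint with constant $\le M<c_{k+1}$, so none of them can be $\context{\varphi}_{k+1}[c_{k+1}]$. (ii) When it is not trivial, $\upinv(\context{\varphi}[c])$ has the form $\context{\psi}[c+\Delta]$ with $\Delta\in\Int$ depending only on $(\context{\varphi},up)$ and $|\Delta|\le 2L$ (one has $|\Delta|\le L$ for non-diagonal contexts, and $|\Delta|=|d-e|\le 2L$ for a diagonal context $x-y$ when $up_x=z+d$, $up_y=w+e$). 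Consequently $c_{i^{*}}\le c_{i_0}+2L\le\max(M,L)+2L$ and $|c_{k+1}-c_k|\le 2L$ for every step inside $[i^{*},n]$.

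For the core argument, suppose for contradiction there is no $i<j$ in $[i^{*},n]$ with $(q_i,\context{\varphi}_i)=(q_j,\context{\varphi}_j)$ and $c_i<c_j$; equivalently, the constants at successive occurrences of any fixed (state, context) pair are non-increasing. Order the pairs occurring in $[i^{*},n]$ by their first occurrence as $s_1=i^{*}<s_2<\dots<s_m$, where $m$ is at most the number $P$ of distinct (state, context) pairs, and $P\le|Q||X|^{2}$ (there are at most $|X|^{2}$ contexts). By induction on $t$ we get $c_{s_t}\le\max(M,L)+2Lt$: for $t=1$ this is $c_{i^{*}}\le\max(M,L)+2L$; for $t\ge2$, the pair occurring at position $s_t-1\in[i^{*},s_t-1]$ has its first occurrence at some $s_u$ with $u<t$, so $c_{s_t-1}\le c_{s_u}\le\max(M,L)+2Lu$ by the non-increasing property and the induction hypothesis, whence $c_{s_t}\le c_{s_t-1}+2L\le\max(M,L)+2Lt$. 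Since every position $k\in[i^{*},n]$ shares its pair with some $s_u$, we obtain $c_k\le\max(M,L)+2LP\le N$ for all such $k$, contradicting $c_n=c>N$. Therefore such $i<j$ exist; as $[i,j]\subseteq[i^{*},n]$ we get $\max(M,L)<c_k$ for all $i\le k\le j$, and $i\ge i^{*}\ge1$, which is exactly the claim.

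The delicate part is the bookkeeping behind (i)--(ii) and the count $P\le|Q||X|^{2}$: one must run through all entries of Table~\ref{tab:optimizations-main} together with the possible forms of $up_x$ (a constant, or $y+d$) to confirm that once a constant exceeds $\max(M,L)$ only the default clause can produce it, that this clause acts on constants as an additive shift of size $\le 2L$, and that exactly $|Q||X|^{2}$ (state, context) pairs can ever carry such a large constant — this last point being what makes the additive term $2L|Q||X|^{2}$ in $N$ the correct one. A minor additional subtlety is that along $[i^{*},n]$ a diagonal context may turn into a non-diagonal one (when one of its two clocks is reset to a constant) but never the reverse, so the suffix consists of a diagonal block followed by a non-diagonal block; this is harmless for the argument above but should be kept in mind during the case analysis.
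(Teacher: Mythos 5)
Your overall strategy is sound, and your pigeonhole step is a valid variant of the paper's: you assume for contradiction that constants are non-increasing across repeated occurrences of a fixed (state, context) pair and bound $c_{s_t}$ by induction on first occurrences, whereas the paper extracts an explicit strictly increasing subsequence of more than $|Q||X|^{2}$ record-breaking indices and applies pigeonhole to it directly. Your facts (i) and (ii) — that in the large-constant suffix only the default clause $\upinv$ can fire, and that it shifts the constant by at most $2L$ — are correct and match the paper.

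There is, however, a genuine gap in the count $P\le|Q||X|^{2}$. Your justification, ``there are at most $|X|^{2}$ contexts,'' is false as stated: the contexts are of the four forms $x \le$, $\le x$, $x-y\le$ and $\le x - y$, giving $2|X|+2|X|(|X|-1)=2|X|^{2}$ of them and hence a priori $2|Q||X|^{2}$ pairs. With $P=2|Q||X|^{2}$ your induction only yields $c_k\le\max(M,L)+4L|Q||X|^{2}$, which does not contradict $c_n>N=\max(M,L)+2L|Q||X|^{2}$; the factor of $2$ is not slack you can absorb, since $N$ is fixed by Proposition~\ref{prop:termination-large-constant}. The missing ingredient is that in the large-constant suffix the \emph{polarity} of the constraint cannot flip: for instance, if $\context{\varphi}_k[c_k]=(x-y\le c_k)$ propagated to $\context{\varphi}_{k+1}[c_{k+1}]=(c_{k+1}\le y'-x')$ via $up_k$ containing $x:=x'+d$ and $y:=y'-e$, then $c_k+c_{k+1}=d+e\le 2L$, impossible when $c_k,c_{k+1}>L$ (and similarly an upper, resp.\ lower, diagonal cannot become a lower, resp.\ upper, non-diagonal). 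Hence all contexts in the suffix are upper bounds, or all are lower bounds, which restores the count $|X|+|X|(|X|-1)=|X|^{2}$. The switch you do flag — diagonal to non-diagonal — is the harmless one; the upper/lower switch is the one that must be excluded, and you neither state nor prove it.
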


\begin{proof}
  First, since $\context{\varphi}_0[c_0]\in\Gg^{0}(q_0)$ we have
  $0\leq c_0\leq\max(M,L)$.  We consider the last occurrence of a
  small constant in the propagation sequence.  More precisely, we
  define $m=\max\{k \mid 0\leq k<n \wedge c_k\leq\max(M,L)\}$. Hence,
  $c_k>\max(M,L)$ for all $m<k\leq n$.
  
  Notice that, for $m<k<n$, the constraint in the sequence cannot
  switch from an upper diagonal to a lower diagonal and vice-versa.
  Indeed, assume that $\context{\varphi}_k[c_k]=(x-y\leq c_k)$ and
  $\context{\varphi}_{k+1}[c_{k+1}]=(c_{k+1}\leq y'-x')$.  Then the
  update $up_k$ contains $x:=x'+d$, $y:=y'-e$ with $c_{k+1}=d+e-c_k$.
  This is a contradiction with $d,e\leq L$ and $c_k,c_{k+1}>L$.
  Similarly, we can show that an upper (resp.\ lower) diagonal
  constraint cannot switch to a lower (resp.\ upper) non-diagonal
  constraint.  On the other hand, it is possible to switch once from
  an upper (resp.\ lower) diagonal constraint to an upper (resp.\
  lower) non-diagonal constraint.
  
  The other remark is that $|c_{k+1}-c_k|\leq2L$ for all $m\leq k<n$.
  Since $c_m\leq\max(M,L)$ and $c_n>\max(M,L)+2L|Q||X|^{2}$, we find
  an increasing sequence $m<i_1<i_2<\cdots<i_\ell\leq n$ with
  $c_{i_1}<c_{i_2}<\cdots<c_{i_\ell}$ and $\ell>|Q||X|^{2}$.  As
  noticed above, the $\context{\varphi}_k$ are either all upper
  constraints or all lower constraints, hence the set
  $\{(q_k,\context{\varphi}_k) \mid m<k\leq n\}$ contains at most
  $|Q||X|^{2}$ elements ($|X|$ for non-diagonals and $|X|(|X|-1)$ for
  diagonals). Therefore, we find $i,j\in\{i_1,\ldots,i_\ell\}$ such
  that $i<j$ and
  $(q_i,\context{\varphi}_i)=(q_j,\context{\varphi}_j)$. Recall that
  $c_k>\max(M,L)$ for all $m<k\leq n$.
\end{proof}

The next step is to show that a positive cycle with large constants
can be iterated resulting in larger and larger constants.

\begin{lemma}\label{lem:large-cycle-iteration}
  Let
  $(q_i,\context{\varphi}_i[c_i]) \to
  (q_{i+1},\context{\varphi}_{i+1}[c_{i+1}]) \to \cdots \to
  (q_j,\context{\varphi}_j[c_j])$ be a propagation sequence with
  $(q_i,\context{\varphi}_i)=(q_j,\context{\varphi}_j)$,
  $\delta=c_j-c_i>0$ and $M<c_k$ for all $i\leq k\leq j$.  Then,
  $(q_i,\context{\varphi}_i[c_i+\delta]) \to
  (q_{i+1},\context{\varphi}_{i+1}[c_{i+1}+\delta]) \to \cdots \to
  (q_j,\context{\varphi}_j[c_j+\delta])$ is also a propagation
  sequence.
\end{lemma}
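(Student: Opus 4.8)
The plan is to show that increasing every constant along the cycle by the same amount $\delta$ preserves each individual propagation step, i.e.\ that $\context{\varphi}_{k+1}[c_{k+1}+\delta] = \wp(\context{\varphi}_k[c_k+\delta], g_k, up_k)$ whenever $\context{\varphi}_{k+1}[c_{k+1}] = \wp(\context{\varphi}_k[c_k], g_k, up_k)$, for each $i \le k < j$. Since a propagation sequence is exactly a chain of such steps, establishing this link-by-link immediately gives the conclusion. The key observation making this work is that all the constants involved — $c_k$, $c_{k+1}$, and after shifting $c_k+\delta$, $c_{k+1}+\delta$ — are strictly larger than $M$ (and than $L$), which is the regime in which the behaviour of $\wp$ becomes ``affine'' in the constant.

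First I would recall how $\wp(\context{\varphi}_k[c_k], g_k, up_k)$ is computed: one first forms $\upinv(\context{\varphi}_k[c_k])$, which only shifts the constant by an amount bounded by $2L$ depending on $up_k$ (adding the $d$'s from the atomic updates $x := y+d$) and possibly changes the context $\context{\varphi}_k$ to $\context{\varphi}_{k+1}$ — but this shift and this context change depend only on $up_k$ and on $\context{\varphi}_k$, not on the size of the constant. Thus $\upinv(\context{\varphi}_k[c_k+\delta]) = \context{\varphi}_{k+1}[c_{k+1}' + \delta]$ where $\context{\varphi}_{k+1}[c_{k+1}']$ is $\upinv(\context{\varphi}_k[c_k])$ before applying any rule from Table~\ref{tab:optimizations-main}. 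Then I would go through the three rows of Table~\ref{tab:optimizations-main}: Case 1 produces $\top$ and applies when $\upinv(\varphi)$ is an upper bound $x \lleq d$ — this is insensitive to the value of $d$, so it applies to $c_{k+1}'$ iff it applies to $c_{k+1}' + \delta$; but in fact since the sequence lies in the regime $c_k > \max(M,L)$ and $\top$ carries no constant, the step producing $\context{\varphi}_{k+1}[c_{k+1}]$ with $c_{k+1} > M$ cannot be a Case 1 step, so this case does not occur along the cycle. Case 3 likewise produces $\top$, so it cannot be a step of the cycle either. This leaves Case 2 and the default ``no rule applies'' branch.

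For the default branch, $\wp(\context{\varphi}_k[c_k], g_k, up_k) = \upinv(\context{\varphi}_k[c_k]) = \context{\varphi}_{k+1}[c_{k+1}]$, so $c_{k+1}' = c_{k+1}$; I need the condition ``no row of Table~\ref{tab:optimizations-main} applies'' to be preserved under the shift. Rows 1 and 3 fire when the guard contains a bound with a constant strictly below $d = c_{k+1}'$; since $c_{k+1}' > M \ge$ every guard constant, no such row fires at $c_{k+1}'$, and for the same reason none fires at $c_{k+1}' + \delta$. Row 2 fires when $\upinv(\varphi) = d \lleq x$ and $g$ contains $x \lleq_1 c$ with $c < d$; again $c < d = c_{k+1}' \le c_{k+1}' + \delta$ still holds after shifting, but then the output would be $c \lleq x$ with $c \le M < c_{k+1}$, contradicting $c_{k+1} > M$ — so Case 2 cannot be a step of the cycle either, and the default branch is the only surviving possibility. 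Hence for every $i \le k < j$ the step is the default one, $c_{k+1} = c_{k+1}'$, and the computation $\upinv(\context{\varphi}_k[c_k+\delta]) = \context{\varphi}_{k+1}[c_{k+1}+\delta]$ together with the fact that no table row fires at $c_{k+1}+\delta$ gives $\wp(\context{\varphi}_k[c_k+\delta], g_k, up_k) = \context{\varphi}_{k+1}[c_{k+1}+\delta]$, as required.

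The main obstacle is the bookkeeping around $\upinv$: one must be careful that $\upinv$ of an atomic constraint can produce something outside the constraint grammar and needs rewriting (as discussed after Definition~\ref{def:upinv}), e.g.\ $z+d-y \lleq c$ becomes $z-y \lleq c-d$ and then, if the constant goes negative, flips to $d-c \lleq y-z$. I would argue that because $c_k > L \ge |d|, |e|$ throughout the cycle, no such sign flip occurs for the constants on the cycle, so the context $\context{\varphi}_{k+1}$ and the constant-shift $c_{k+1}-c_k$ are genuinely independent of the magnitude of $c_k$, and the same rewriting (without sign flip) applies verbatim to $c_k+\delta$. Once this uniformity is nailed down, iterating the cycle is immediate, and this is exactly what is needed to finish Proposition~\ref{prop:termination-large-constant}.
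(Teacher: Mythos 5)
Your proposal is correct and follows essentially the same route as the paper's proof: since every constant along the cycle exceeds $M$, each step must be the default branch $\wp(\context{\varphi}_k[c_k],g_k,up_k)=\upinv(\context{\varphi}_k[c_k])$, the substitution $\upinv$ commutes with adding $\delta$, and the cases of Table~\ref{tab:optimizations-main} still do not apply at the shifted constants (your stray remark that Case~1 needs a guard constant below $d$ is inaccurate, but harmless, since you also correctly note Case~1 is constant-insensitive and cannot occur on the cycle). Your explicit treatment of possible sign flips of diagonals under $\upinv$ (using $c_k>L$, which is not in the lemma's literal hypothesis but is exactly what Lemma~\ref{lem:large-cycle} provides where the lemma is applied) addresses a point the paper's proof silently glosses over.
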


\begin{proof}
  Let $i\leq k<j$ and $(q_{k+1},g_k,up_k,q_k)$ be a transition of
  $\Aa$ yielding the propagation from $k$ to $k+1$:
  $\context{\varphi}_{k+1}[c_{k+1}]=\wp(\context{\varphi}_k[c_k],g_k,up_k)$.
  Since $c_{k+1}>M$, none of the three cases of
  Table~\ref{tab:optimizations-main} applies: if
  $\context{\varphi}_{k+1}$ is one of $x\leq$, $\leq x$, $x-y\leq$ or
  $\leq x-y$ then $g_k$ does not contain $x\leq c$ or $x-y\leq
  c$. Hence, we have
  $\context{\varphi}_{k+1}[c_{k+1}]=up^{-1}(\context{\varphi}_k[c_k])$.
  We deduce that
  $\context{\varphi}_{k+1}[c_{k+1}+\delta]=up^{-1}(\context{\varphi}_k[c_k+\delta])$.
  Since $c_{k+1}+\delta>M$ the cases of
  Table~\ref{tab:optimizations-main} do not apply and we get
  $\context{\varphi}_{k+1}[c_{k+1}+\delta]=\wp(\context{\varphi}_k[c_k+\delta],g_k,up_k)$.
\end{proof}

This allows to conclude the proof of
Proposition~\ref{prop:termination-large-constant}.  Using
Lemma~\ref{lem:large-cycle} we obtain from $\pi$ a positive cycle with
large constants. This cycle can be iterated forever thanks to
Lemma~\ref{lem:large-cycle-iteration}. We deduce that
$\context{\varphi}_i[c_i+k\delta]\in\Gg^{i}(q_i)$ for all $k\geq0$ and
the reduced $\Gg$-computation does not terminate.
\end{proof}

\subparagraph*{Algorithm to detect
  termination.}\label{termination-algo}
Proposition~\ref{prop:termination-large-constant} gives a termination
mechanism: run the fixed-point computation $\Gg^{0},\Gg^{1},\ldots$,
stop if either it stabilises with $\Gg^{n}=\Gg^{n+1}$ or if we add
some constraint $\varphi\in\Gg^{n}(q)$ with $c_\varphi>N$.  The number
of pairs $(q,\varphi)$ with $c_\varphi\leq N$ is $2N|Q||X|^{2}$ (the
factor 2 is for upper or lower constraints).  Therefore, the
fixed-point computation stops after at most $2N|Q||X|^{2}$ steps and
the total computation time is $\mathsf{poly}(M,L,|Q|,|X|)$.  If the
constants occurring in guards and updates of the UTA $\Aa$ are encoded
in unary, the static analysis terminates in time
$\mathsf{poly}(|\Aa|)$.  If the constants are encoded in binary,
(non-)termination of the $\Gg$-computation can be detected in
$\textsc{NPspace}=\PSPACE$: it suffices to search for a propagation
sequence
$(q_0,\varphi_0)\to(q_1,{\varphi}_1)\to\cdots\to(q_n,{\varphi}_n)$
such that ${\varphi}_0\in\Gg^{0}(q_0)$ and $c_{\varphi_n}>N$.  For
this, we only need to store the current pair $(q_k,\varphi_k)$, guess
some transition $(q_{k+1},g_k,up_k,q_k)$ of $\Aa$, and compute the
next pair $(q_{k+1},\varphi_{k+1})$ with
$\varphi_{k+1}=\wp(\varphi_k,g_k,up_k)$.  This can be done with
polynomial space.  We can also show a matching $\PSPACE$ lower-bound.

\subparagraph*{Lower bound.}  We now show that when constants are
encoded in binary, deciding termination of the reduced propagation is
$\PSPACE$-hard. To do this, we give a reduction from the control-state
reachability of bounded one-counter automata.

A \emph{bounded one-counter
  automaton}~\cite{Haase:FundInf-RP:2016,FearnleyJ15:2015:IandC} is
given by $(L, \ell_0, \Delta, b)$ where $L$ is a finite set of states,
$\ell_0$ is an initial state, $\Delta$ is a set of transitions and
$b \ge 0$ is the global bound for the counter. Each transition is of
the form $(\ell, p, \ell')$ where $\ell$ is the source and $\ell'$ the
target state of the transition, $p \in [-b, +b]$ gives the update to
the counter. A run of the counter automaton is a sequence
$(\ell_0, c_0) \to (\ell_1, c_1) \to \cdots \to (\ell_n, c_n)$ such
that $c_0 = 0$, each $c_i \in [0, b]$ and there are transitions
$(\ell_i, p_i, \ell_{i+1})$ with $c_{i+1} = c_i + p_i$. All constants
used in the automaton definition are encoded in binary. Reachability
problem for this model asks if there exists a run starting from
$(\ell_0, 0)$ to a given state $\ell_t$ with any counter value
$c_t$. This problem is known to be
$\PSPACE$-complete~\cite{FearnleyJ15:2015:IandC}. We will now reduce
the reachability for bounded one-counter automata to the problem of
checking if the fixed-point computing the smallest reduced $\Gg$-map
terminates (i.e, whether the smallest reduced $\Gg$-map is finite).

From a bounded one counter automaton $\Bb = (L, \ell_0, \Delta, b)$ we
construct a UTA $\Aa_{\Bb}$. States of $\Aa_\Bb$ are
$L \cup \{\ell'_0, \ell'_t\}$ where $\ell'_0$ and $\ell'_t$ are new
states not in $L$. There are two clocks $x, y$. For each transition
$(\ell, p, \ell')$ of $\Bb$, there is a transition
$(\ell', g, up, \ell)$ with guard $x \le b \land y \le 0$ and updates
$x:= x - p$ and $y:= y$. We add some extra transitions using the new
states $\ell'_0$ and $\ell'_t$: (1)
$\ell_0 \xra{x - y \le 0} \ell'_0$, (2) $\ell'_t \xra{} \ell_t$ and
(3) $\ell'_t \xra{~x:= x, y:= y + 1~} \ell'_t$.

Consider Case 3 of Table~\ref{tab:optimizations-main}. A guard of the
form $x \le b$ disallows propagation of constraints $x - y \le d$ with
$ d > b$. But, it can allow $d$ to go smaller and smaller, and at one
point the constant becomes negative and the constraint gets rewritten:
$x - y \le b, x - y \le b - 1, \dots, x - y \le 0, 1 \le y - x, 2 \le
y -x$, etc. The presence of a constraint $y \le 0$ will eliminate
$1 \le y - x, 2 \le y -x$, etc. once again due to Case 3. We make use
of this facility to simulate a bounded one-counter automaton.

\begin{lemma}\label{lem:counter-to-timed}
  For every run
  $(\ell_0, 0) \to (\ell_1, c_1) \to \cdots \to (\ell_n, c_n)$ in
  $\Bb$, there is a propagation sequence
  $(\ell_0, x - y \le 0) \to (\ell_1, x - y \le c_1) \to \cdots \to
  (\ell_n, x - y \le c_n)$ in $\Aa_{\Bb}$.
\end{lemma}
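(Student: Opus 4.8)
The plan is to build the required propagation sequence directly, mirroring each step of the counter run by the corresponding reversed transition of $\Aa_\Bb$, and checking that along this sequence the operator $\wp$ coincides with $\upinv$. First I would fix the counter run together with the witnessing transitions $(\ell_i, p_i, \ell_{i+1})$ of $\Bb$, recording the two facts that matter: $c_{i+1} = c_i + p_i$, and $0 \le c_i \le b$ for every $i$ (this boundedness is the only feature of a genuine run we will use, and it is essential). By the construction of $\Aa_\Bb$, each $(\ell_i, p_i, \ell_{i+1})$ gives a transition $(\ell_{i+1}, g, up, \ell_i)$ of $\Aa_\Bb$ with $g = (x \le b) \wedge (y \le 0)$, $up_x = x - p_i$ and $up_y = y$; since propagation runs backwards along transitions, this is exactly the transition needed to justify a step $(\ell_i, x - y \le c_i) \to (\ell_{i+1}, x - y \le c_{i+1})$ of a propagation sequence.

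The heart of the argument is the per-step computation. We have $\upinv(x - y \le c_i) = (x - p_i) - y \le c_i$, which rewrites to $x - y \le c_i + p_i = x - y \le c_{i+1}$, a legitimate atomic constraint because $c_{i+1} \ge 0$. It remains to verify that $\wp(x - y \le c_i, g, up)$ equals this value, i.e.\ that none of the three cases of Table~\ref{tab:optimizations-main} applies. Cases~1 and~2 require $\upinv(\varphi)$ to be a non-diagonal constraint, which it is not. Case~3 would require $\upinv(\varphi) = x - y \lleq d$ together with an upper bound $x \lleq_1 c$, or a diagonal $x - y \lleq_1 c$ or $e \lleq_1 x - y$, in $g$ with $c < d$; here $d = c_{i+1}$, the only upper bound on $x$ in $g$ is $x \le b$ with $b \ge c_{i+1} = d$, and $g$ contains no diagonal constraint, so the condition $c < d$ fails and Case~3 does not apply (the conjunct $y \le 0$ is irrelevant to Case~3 for the diagonal $x - y \lleq d$, whose relevant variable is $x$). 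Hence $\wp(x - y \le c_i, g, up) = x - y \le c_{i+1}$ and the step is valid.

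Concatenating these steps for $i = 0, \ldots, n-1$ and using $c_0 = 0$ yields the propagation sequence $(\ell_0, x - y \le 0) \to \cdots \to (\ell_n, x - y \le c_n)$ stated in the lemma; I would also note in passing that $x - y \le 0 \in \Gg^0(\ell_0)$, being the guard of the transition $\ell_0 \xra{x - y \le 0} \ell'_0$, so that the sequence actually witnesses $x - y \le c_n \in \Gg(\ell_n)$, as the ensuing pumping argument needs. The only genuinely delicate point is the case analysis against Table~\ref{tab:optimizations-main}: one must ensure that the guard bound $b$ is never small enough to fire Case~3 along a bona fide counter run, and this is precisely where the invariant $c_i \le b$ is consumed; everything else is routine substitution.
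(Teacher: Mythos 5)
Your proof is correct and follows essentially the same route as the paper's: proceed step by step along the counter run, compute $\upinv(x-y\le c_i)=x-y\le c_{i+1}$ for the corresponding transition $(\ell_{i+1},g,up,\ell_i)$, and use $0\le c_{i+1}\le b$ to rule out the cases of Table~\ref{tab:optimizations-main}, so that $\wp$ coincides with $\upinv$ at every step. The paper additionally flags the borderline case $c_{i+1}=0$, where the propagated constraint can also be written as $0\le y-x$ and the guard atom $y\le 0$ then becomes the one relevant to Case~3 — propagation still survives because the condition $c<d$ is strict — a detail your blanket claim that $y\le 0$ is irrelevant glosses over, but it does not affect correctness.
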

\begin{proof}
  Let $\Gg$ be the smallest reduced $\Gg$-map of $\Aa_{\Bb}$. We will
  show by induction that for every $i$, there is a constraint
  $x - y \le c_i$ in $\Gg(\ell_i)$.  Due to the edge
  $\ell_0 \xra{x - y \le 0} \ell'_0$, we have
  $x - y \le 0 \in \Gg(\ell_0)$. Suppose the hypothesis is true for
  some $j > 0$, that is, we have $x - y \le c_j \in
  \Gg(\ell_j)$. Since we have
  $(\ell_j, c_j) \to (\ell_{j+1}, c_{j+1})$ there is a transition
  $(\ell_j, p, \ell_{j+1})$ in $\Bb$, and $c_{j+1} = c_j + p$ with
  $0 \le c_{j+1} \le b$. By construction, there is a transition
  $(\ell_{j+1}, g, up, \ell_j)$ in $\Aa_{\Bb}$ with $up_x = x - p$ and
  guard $x \le b \land y \le 0$. Hence the constraint
  $\varphi = x - y \le c_j$ at $\ell_j$ should be propagated to
  $\ell_{j+1}$. We have $\upinv(\varphi) = x - y \le c_j + p$, that is
  $x - y \le c_{j+1}$. Since $0 \le c_{j+1} \le b$, Case 3 of
  Table~\ref{tab:optimizations-main} does not apply: the constraint
  $x\leq b$ in $g$ does not cut the propagation since $c_{j+1}\leq b$,
  and the constraint $y\leq0$ in $g$ does not apply as well (if
  $c_{j+1}=0$ the constraint $up^{-1}(\varphi)$ can also be written as
  $0\leq y-x$).  Therefore,
  $\wp(x - y \le c_{j}, g, up) = x - y \le c_{j+1} \in
  \Gg(\ell_{j+1})$.
\end{proof}

\begin{lemma}\label{lem:timed-to-counter}
  For every propagation sequence
  $(\ell_0, x - y \le 0) \to (\ell_1, x - y \le c_1) \to \cdots \to
  (\ell_n, x - y \le c_n)$ in $\Aa_{\Bb}$ with $\ell_i \in L$ for
  $0 \le i \le n$, there is a run
  $(\ell_0, 0) \to (\ell_1, c_1) \to \cdots \to (\ell_n, c_n)$ in
  $\Bb$.
\end{lemma}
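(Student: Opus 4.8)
This is the converse of Lemma~\ref{lem:counter-to-timed}, and the argument mirrors it: an induction on the length $n$ of the propagation sequence, showing that for each $0\le k<n$ the pair $(\ell_k,c_k)\to(\ell_{k+1},c_{k+1})$ is a legal step of $\Bb$, with the invariant $0\le c_k\le b$ maintained throughout. The base case is immediate: the sequence starts at $(\ell_0,x-y\le 0)$, so $c_0=0$, which is the initial counter value of $\Bb$.

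For the inductive step, the first thing I would do is determine which transitions of $\Aa_\Bb$ can realise a single propagation step between two states of $L$. In the construction, each of the three extra transitions has $\ell'_0$ or $\ell'_t$ as an endpoint, and neither of these lies in $L$; so the only transitions of $\Aa_\Bb$ with both endpoints in $L$ are the ``main'' transitions $(\ell',g,up,\ell)$ coming from a counter transition $(\ell,p,\ell')$ of $\Bb$, with $g=(x\le b\wedge y\le 0)$, $up_x=x-p$, $up_y=y$. A propagation step from $\ell_k$ to $\ell_{k+1}$ uses a transition whose source is $\ell_{k+1}$ and whose target is $\ell_k$ (a propagation sequence runs against the transition direction), and since $\ell_k,\ell_{k+1}\in L$ this transition must be a main transition associated with a counter transition $(\ell_k,p,\ell_{k+1})$ of $\Bb$; moreover $x-y\le c_{k+1}=\wp(x-y\le c_k,g,up)$.

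What remains is to evaluate $\wp$ on this single transition shape. We have $\upinv(x-y\le c_k)=(x-p)-y\le c_k$, i.e.\ $x-y\le c_k+p$. The key fact supplied by the hypothesis is that the propagation sequence records the result of this step with context ``$x-y\le$'', that is, as an upper diagonal constraint with a constant in $\Nat$. This rules out two scenarios: that $c_k+p<0$, since then $x-y\le c_k+p$ would have to be rewritten as the lower diagonal $-(c_k+p)\le y-x$, whose context is not ``$x-y\le$''; and that Case~3 of Table~\ref{tab:optimizations-main} fires --- it is the only table case applicable when $\upinv(\varphi)$ is a diagonal, and firing it would yield $\top$ rather than an $x-y\le$ constraint. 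Since $g$ contains the upper bound $x\le b$ and no diagonal constraint, Case~3 fails exactly when $c_k+p\le b$. Hence $c_k+p\ge 0$, Case~3 does not apply, $\wp(x-y\le c_k,g,up)=\upinv(x-y\le c_k)=x-y\le c_k+p$, and therefore $c_{k+1}=c_k+p$ with $0\le c_{k+1}\le b$. Thus $(\ell_k,c_k)\to(\ell_{k+1},c_{k+1})$ is a legal step of $\Bb$ using the transition $(\ell_k,p,\ell_{k+1})$; chaining these steps over $k=0,\dots,n-1$ gives the desired run.

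I do not expect a genuine obstacle here; the points that need attention are the directionality of the propagation sequence relative to the transitions of $\Aa_\Bb$, and the bookkeeping around the constant $0$ --- the constraints $x-y\le 0$ and $0\le y-x$ denote the same set, and one should verify that reading it in the ``$x-y\le$'' form is consistent with Case~3 not applying (it is, because the only upper bound on $x$ in $g$ is $x\le b$ with $b\ge 0$). Beyond that, the proof is a direct unwinding of Definition~\ref{def:weakest-precondition} and Table~\ref{tab:optimizations-main} on the one transition pattern that can occur.
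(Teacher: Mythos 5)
Your proof is correct and follows essentially the same route as the paper's: induction on $n$, identifying the unique transition shape of $\Aa_\Bb$ realising each propagation step, and extracting $0\le c_{k+1}\le b$ from the facts that the result is recorded as an upper diagonal $x-y\le c_{k+1}$ (ruling out the rewrite to $-(c_k+p)\le y-x$) and that Case~3 of Table~\ref{tab:optimizations-main} does not fire against the guard $x\le b$. Your explicit justification that only the ``main'' transitions can connect two states of $L$ is a point the paper leaves implicit, but the argument is otherwise the same.
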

\begin{proof}
  The proof is by induction on $n$.  The base case $n = 0$ is trivial.
  Let $n > 0$ and suppose the lemma is true up to $n-1$.  Consider the
  propagation
  $(\ell_{n-1}, x - y \le c_{n-1}) \to (\ell_{n}, x -y \le c_{n})$.
  This implies there is a transition $(\ell_{n}, g, up, \ell_{n-1})$
  in $\Aa_{\Bb}$ with $up_x = x - p$ for $p=c_{n} - c_{n-1}$ and
  $\wp(x - y \le c_{n-1}, g , up) = x - y \le c_{n}$.  By
  construction, every $g$ is $x \le b \land y \le 0$.  As
  $\wp(x - y \le c_{n-1}, g , up)$ is non-trivial, firstly
  $c_{n} \le b$ (otherwise Case 3 of
  Table~\ref{tab:optimizations-main} will apply) and secondly
  $0 \le c_{n}$.  To see this, suppose $c_{n} < 0$, the constraint
  $x - y \le c_{n}$ gets rewritten as $-c_{n} \le y - x$ and the
  propagation would give $(\ell_{n},-c_{n}\leq y-x)$ contrary to what
  was assumed.  Now, we consider the counter automaton $\Bb$.  By
  induction hypothesis, there is a run up to $(\ell_{n-1}, c_{n-1})$.
  From the transition $(\ell_{n}, g, up, \ell_{n-1})$ of $\Aa_{\Bb}$,
  we know there is a transition $(\ell_{n-1}, p , \ell_{n})$.  We have
  seen that $0 \le c_{n} \le b$.  Hence there is a step
  $(\ell_{n-1}, c_{n-1}) \to (\ell_{n}, c_{n})$ in $\Bb$, giving an
  extension to the run.
\end{proof}

\begin{proposition}\label{thm:diagonal-pspace-hard}
  The final state is reachable in the counter automaton $\Bb$ iff the
  smallest reduced $\Gg$-map of $\Aa_{\Bb}$ is infinite.
\end{proposition}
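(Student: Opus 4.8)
The plan is to combine Lemmas~\ref{lem:counter-to-timed} and~\ref{lem:timed-to-counter} with a structural analysis of $\Aa_{\Bb}$ that pins down the self-loop~(3) at $\ell'_t$ as the only place where the reduced $\Gg$-map computation can diverge. For the easy direction, suppose $\ell_t$ is reachable, say through a run $(\ell_0,0)\to\cdots\to(\ell_n,c_n)$ of $\Bb$ with $\ell_n=\ell_t$ and hence $0\le c_n\le b$. By Lemma~\ref{lem:counter-to-timed} this gives a propagation sequence ending at $(\ell_t,x-y\le c_n)$, so $x-y\le c_n\in\Gg(\ell_t)$. Transition~(2), having trivial guard and identity update, then forces $x-y\le c_n\in\Gg(\ell'_t)$; and since the self-loop~(3) has trivial guard and update $x:=x$, $y:=y+1$, no case of Table~\ref{tab:optimizations-main} applies and $\wp(x-y\le c,\top,up)=\upinv(x-y\le c)=x-(y+1)\le c=x-y\le c+1$. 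Iterating~(3) puts $x-y\le c_n+k$ into $\Gg(\ell'_t)$ for every $k\ge 0$, so $\Gg(\ell'_t)$, and therefore the smallest reduced $\Gg$-map, is infinite.

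For the converse I would first argue that $\Gg(q)$ is finite for every state $q\ne\ell'_t$. The state $\ell'_0$ has no outgoing transition, so $\Gg(\ell'_0)=\emptyset$. For $q\in L$, every outgoing transition carries the guard $x\le b\wedge y\le 0$ (when $q=\ell_0$ there is in addition transition~(1), whose only effect is to place $x-y\le 0$ in $\Gg^0(\ell_0)$, because $\Gg(\ell'_0)=\emptyset$). Since $\Aa_{\Bb}$ has no resets, $\upinv$ preserves the shape of a constraint, and an induction over the Kleene iteration, reading off Cases~1--3 of Table~\ref{tab:optimizations-main}, shows that every atomic constraint put into $\Gg(q)$ for $q\in L$ has constant at most $b$: upper bounds $x\lleq d$ and upper diagonals $x-y\lleq d$ are sent to $\top$ once $d>b$ by the guard $x\le b$ (Cases~1 and~3), lower bounds $d\lleq x$ are truncated to $b$ (Case~2), every constraint mentioning $y$ is cut once its constant exceeds $0$, and a genuine lower diagonal $d\lleq x-y$ with $d>0$ never arises because the only diagonal ever born is $x-y\le 0$ at $\ell_0$ and whenever $\upinv$ of an upper diagonal turns negative it becomes a lower bound on $y-x$ that Case~3 kills via the guard $y\le 0$. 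Hence if the map is infinite it is $\Gg(\ell'_t)$ that is infinite.

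It then remains to analyse $\Gg(\ell'_t)$. Its outgoing transitions are only~(2), which copies $\Gg(\ell_t)$ into $\Gg(\ell'_t)$ unchanged, and the self-loop~(3), under which $x$-bounds are unchanged and $y$-bounds strictly decrease (hence become $\bot$ after finitely many iterations); so the non-diagonal part of $\Gg(\ell'_t)$ is finite, and $\Gg(\ell'_t)$ being infinite forces it to contain a diagonal on $x-y$. By the previous paragraph that diagonal is obtained from a diagonal in $\Gg(\ell_t)$ by applying~(3) repeatedly, and every diagonal in $\Gg(\ell_t)$ has the form $x-y\le c$ with $0\le c\le b$. Tracing such a diagonal back through the fixed-point iteration, each step that introduces a diagonal into some $\Gg(\ell)$, $\ell\in L$, must come through a main transition (and an application of $\upinv$) from a diagonal at a $\Bb$-predecessor of $\ell$, and the recursion bottoms out only at $x-y\le 0\in\Gg^0(\ell_0)$. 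This yields a propagation sequence $(\ell_0,x-y\le 0)\to\cdots\to(\ell_t,x-y\le c)$ with all intermediate states in $L$, to which Lemma~\ref{lem:timed-to-counter} applies, producing a run of $\Bb$ that reaches $\ell_t$.

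I expect the main obstacle to be exactly the localisation step in the converse direction: verifying carefully that no constraint created at a state of $L$ can acquire a constant exceeding $b$ (so that the reduced $\Gg$-map of $\Aa_{\Bb}$ can only diverge at $\ell'_t$), and that any such divergence is necessarily triggered by a diagonal that has been propagated all the way from $\ell_0$. These two facts rest on a somewhat delicate case analysis over Table~\ref{tab:optimizations-main}; once they are in place, Lemmas~\ref{lem:counter-to-timed} and~\ref{lem:timed-to-counter} close the equivalence with no further work.
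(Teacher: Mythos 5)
Your proposal is correct and follows essentially the same route as the paper: the forward direction pumps a diagonal $x-y\le c_t$ from $\Gg(\ell_t)$ through the guard-free self-loop at $\ell'_t$ via Lemma~\ref{lem:counter-to-timed}, and the converse shows via Table~\ref{tab:optimizations-main} (guards $x\le b\wedge y\le 0$) that $\Gg$ is finite at every state other than $\ell'_t$, localizes the divergence to diagonals $x-y\le c$ seeded from $\Gg(\ell_t)$, traces them back to $x-y\le 0$ at $\ell_0$, and invokes Lemma~\ref{lem:timed-to-counter}. The paper organizes the converse by enumerating how a constraint involving $y$ can reach $\ell'_t$, but this is the same analysis as your trace-back, so there is no substantive difference.
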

\begin{proof}
  Let $\Gg$ be the smallest reduced $\Gg$-map of $\Aa_{\Bb}$.

  Suppose the final state is reachable in $\Bb$, with a run
  $(\ell_0, c_0) \to \cdots \to (\ell_t, c_t)$. From
  Lemma~\ref{lem:counter-to-timed}, there is a propagation giving
  $x - y \le c_t$ in $\Gg(\ell_t)$. Due to the extra transitions
  $\ell'_t \to \ell_t$ and $\ell'_t \xra{x:= x, y:=y + 1} \ell'_t$ in
  $\Aa_{\Bb}$ (with no guards), we deduce that
  $x - y \le c_t + i \in \Gg(\ell'_t)$ for all $i \ge 0$.

  Suppose $\Gg$ is infinite. There is some state $\ell$ with an
  infinite $\Gg(\ell)$. Firstly $\ell \neq \ell'_0$, since there are
  no outgoing transitions from $\ell'_0$ and therefore
  $\Gg(\ell'_0) = \emptyset$. Pick some state
  $\ell \in L \setminus \{\ell'_0, \ell'_t\}$. Every outgoing
  transition from $\ell$ has guard $x \le b \land y \le 0$, except in
  the case $\ell_0 \xra{x - y \le 0} \ell'_0$. But since
  $\Gg(\ell'_0) = \emptyset$, we can forget this transition as far as
  propagation is concerned.  Hence Table~\ref{tab:optimizations-main}
  ensures that we have $d \le b$ for every atomic constraint
  $x - y \leq d$ or $d \leq x - y$ propagated to $\Gg(\ell)$.
  Moreover, if at all there is a propagated constraint $d \leq y - x$
  or $y - x \leq d$, then $d = 0$ since $y \le 0$ is present in all
  outgoing guards.  This shows that the number of diagonal constraints
  is finite in $\Gg(\ell)$.  In the construction, a diagonal
  constraint $x - y \le c$ always propagates as a diagonal since all
  updates are of the form $x := x - p$ and $y:= y$.  Coming to
  non-diagonals. Since we have $x \le b$ and $y\leq0$ in all guards of
  outgoing transitions from $\ell$, Case 1 disallows propagation of
  any other upper constraint to $\ell$, and Case 2 bounds the possible
  constants of lower constraints $d\leq x$ or $d\leq y$ in
  $\Gg(\ell)$.  This gives finite $\Gg(\ell)$ for
  $\ell \in L \cup \{\ell'_0\}$. Therefore, the only possibility is to
  have $\Gg(\ell'_t)$ infinite, due to the self-loop on $\ell'_t$ with
  update $up$ being $x:= x$ and $y:= y + 1$. The infinite number of
  constraints arises due to $y := y + 1$ and hence should come from a
  constraint that involves $y$. Since there are no guards in this
  self-loop, $\wp(\varphi, \top, up) = \upinv(\varphi)$.  For any
  constraint $\context{\varphi}[c]$, we have
  $\upinv(\context{\varphi}[c]) = \context{\varphi}[c - 1]$ if $y$
  occurs with a positive sign in $\varphi$ and
  $\context{\varphi}[c+1]$ if $y$ occurs with a negative sign in
  $\varphi$. There are three ways a constraint involving $y$ reaches
  $\ell'_t$ during the propagation. One of them is $y \le 0$ which
  could come from $\ell_t$. But the pre of this is $y \le -1$ and
  hence is trivial. Another possibility is from $\wp(0\leq y,\top,up)$
  used in the initialization step $\Gg^{0}$. But this gives $-1\leq y$
  which is trivial. The only other way to have a propagation is to
  start from $x - y \le 0 \in \Gg(\ell_0)$, reach some
  $x - y \le c_t \in \Gg(\ell_t)$ with $0 \le c_t \le b$. This then
  passes on to $\Gg(\ell'_t)$. Starting from this, we get constraints
  $x - y \le c_t + i$ for $i \ge 0$ in $\Gg(\ell'_t)$. This gives a
  propagation sequence
  $(\ell_0, x - y \le 0) \to \cdots (\ell_t, x - y \le c_t)$. From
  Lemma~\ref{lem:timed-to-counter}, there is a corresponding run in
  the counter automaton $\Bb$.
\end{proof}

\begin{theorem}\label{thm:termination}
  Deciding termination of the reduced $\Gg$-map computation for a
  given UTA $\Aa$ is in $\textsc{Ptime}$ if the constants in $\Aa$ are
  encoded in unary, and $\PSPACE$-complete if the constants are
  encoded in binary.
\end{theorem}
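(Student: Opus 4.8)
The plan is to assemble Theorem~\ref{thm:termination} from the pieces already developed in this section. The statement has two halves: an upper bound matching the stated complexity in each encoding, and a lower bound for the binary case.

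\textbf{Upper bounds.} For the unary encoding, I would invoke Proposition~\ref{prop:termination-large-constant}: the reduced $\Gg$-map is infinite if and only if a constraint with constant exceeding $N=\max(M,L)+2L|Q||X|^2$ is ever produced. Running the Kleene iteration $\Gg^0,\Gg^1,\dots$ and halting as soon as either $\Gg^n=\Gg^{n+1}$ (terminates, answer ``finite'') or some constraint with constant $>N$ appears (answer ``infinite'') is therefore a decision procedure. Since the number of pairs $(q,\varphi)$ with $c_\varphi\le N$ is $2N|Q||X|^2$, the fixed point stabilises within that many steps if it stabilises at all, and each step manipulates sets of that size, so the whole computation runs in $\mathsf{poly}(M,L,|Q|,|X|)$. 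When constants are written in unary, $M$ and $L$ are at most $|\Aa|$, so this is $\mathsf{poly}(|\Aa|)$, i.e.\ \textsc{Ptime}. For the binary encoding the same bound $N$ is only exponential in $|\Aa|$, so one cannot afford to materialise the whole iteration; instead I would observe that non-termination is witnessed by a \emph{propagation sequence} $(q_0,\varphi_0)\to\cdots\to(q_n,\varphi_n)$ with $\varphi_0\in\Gg^0(q_0)$ and $c_{\varphi_n}>N$, and such a sequence can be guessed on the fly in \textsc{NPspace}: store only the current pair $(q_k,\varphi_k)$ (polynomially many bits, since the constant stays bounded by $N+2L$), guess an incoming transition, compute $\varphi_{k+1}=\wp(\varphi_k,g_k,up_k)$, and accept once the constant passes $N$. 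By Savitch's theorem $\textsc{NPspace}=\PSPACE$, giving membership in $\PSPACE$. (That a short such witness exists when the map is infinite is exactly the content of the proof of Proposition~\ref{prop:termination-large-constant}, which builds a positive cycle with large constants from any overflowing constraint.)

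\textbf{Lower bound.} For $\PSPACE$-hardness under binary encoding I would reduce from control-state reachability in bounded one-counter automata, which is $\PSPACE$-complete~\cite{FearnleyJ15:2015:IandC}. Given $\Bb=(L,\ell_0,\Delta,b)$, build the UTA $\Aa_\Bb$ exactly as described above: two clocks $x,y$; for each counter transition $(\ell,p,\ell')$ a reversed UTA transition $(\ell',\,x\le b\wedge y\le 0,\,x:=x-p;\,y:=y,\,\ell)$; plus the gadget transitions $\ell_0\xra{x-y\le0}\ell'_0$, $\ell'_t\to\ell_t$, and the self-loop $\ell'_t\xra{x:=x,\,y:=y+1}\ell'_t$. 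The constraint $x-y\le c$ tracks the counter value $c$, the guard $x\le b$ enforces the upper bound via Case~3 of Table~\ref{tab:optimizations-main}, and $y\le0$ discards any lower-diagonal rewriting $-c\le y-x$ so that only non-negative counter values survive (again Case~3). Lemma~\ref{lem:counter-to-timed} and Lemma~\ref{lem:timed-to-counter} establish the two directions of the correspondence between counter runs and propagation sequences among states of $L$, and Proposition~\ref{thm:diagonal-pspace-hard} packages this into: $\ell_t$ is reachable in $\Bb$ iff the smallest reduced $\Gg$-map of $\Aa_\Bb$ is infinite. Since $\Aa_\Bb$ is polynomial-size in $\Bb$ (with constants inherited in binary), this is a polynomial-time reduction, so deciding non-termination — equivalently, deciding termination — of the reduced $\Gg$-computation is $\PSPACE$-hard. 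Combining membership and hardness yields the binary case; the unary case is the \textsc{Ptime} bound above.

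\textbf{Main obstacle.} The genuinely non-trivial work is already behind us: it lies in Proposition~\ref{prop:termination-large-constant} (that an ``overflow'' of the constant is necessary for non-termination, proved via extracting a positive large-constant cycle and iterating it through Lemmas~\ref{lem:large-cycle} and~\ref{lem:large-cycle-iteration}) and in the correctness of the reduction (Propositions~\ref{thm:diagonal-pspace-hard}). Given those, the remaining step is essentially bookkeeping: checking that the witnessing propagation sequence can be stored in polynomial space in the binary case — one must note the constant along such a sequence never exceeds $N+2L$, since the first time it would exceed $N$ we stop — and checking that the reduction is polynomial-time. I expect the only place a careful reader might stumble is the claim that it suffices to search for a \emph{bounded-length} propagation sequence in \textsc{NPspace} rather than simulating the full fixed point; this is justified because, per Proposition~\ref{prop:termination-large-constant}, infiniteness of the map is equivalent to the existence of \emph{some} pair $(q,\varphi)$ with $c_\varphi>N$ in $\Gg(q)$, and any such pair is reached by \emph{some} propagation sequence from $\Gg^0$, which the nondeterministic machine can guess edge by edge.
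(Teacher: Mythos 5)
Your proposal is correct and follows essentially the same route as the paper: the unary \textsc{Ptime} bound via the bounded Kleene iteration of Proposition~\ref{prop:termination-large-constant}, the binary $\PSPACE$ upper bound via an $\textsc{NPspace}$ guess of a propagation sequence overflowing $N$, and hardness via the reduction of Proposition~\ref{thm:diagonal-pspace-hard}. Your extra remark that the stored constant never exceeds $N+2L$ along the guessed sequence is a correct (and slightly more explicit) justification of the polynomial-space claim than the paper gives.
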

\begin{proof}
  The algorithm to detect termination given in
  Page~\pageref{termination-algo} discusses the upper bound,
  $\textsc{Ptime}$ when constants are in unary and $\PSPACE$ when the
  constants are in binary.  The $\PSPACE$ lower-bound is proved in
  Proposition~\ref{thm:diagonal-pspace-hard}, hence the problem is
  $\PSPACE$-complete.
\end{proof}


\section{Experiments}
\label{sec:experiments}

\begin{table}[t]
  \centering
  \begin{tabular}{|l|c|c|c|c|c|}
    \hline
    & & \multicolumn{2}{c|}{New static analysis} &
                                                   \multicolumn{2}{c|}{Static analysis
                                                   of~\cite{CAV-19}}
    \\
    \hline
    Model & Schedulable? & \# nodes & time & \# nodes  & time \\
    \hline

    SporadicPeriodic-5 & Yes & 677 & 1.710s & - & - \\
    SporadicPeriodic-20 & No & 852 & 1.742s & - & - \\
    \hline
    Mine-Pump & Yes & 31352 & 7m 23.509s & - & -  \\
    \hline
    \multicolumn{6}{|l|}{\emph{Flower} task triggering automaton: (computation
    time, deadline)}   \\
    (1,2), (1,2), (1,2) & No & 212 & 0.057s & - & - \\
    (1,10), (1,10), (1,10), (1,4) & Yes & 105242 & 8m 57.256s & -
                         & - \\
    \hline
    \multicolumn{6}{|l|}{\emph{Worst-case} task triggering automaton: (computation
    time, deadline)}   \\
    (1,2), (1,2), (1,2) & No & 20  & 0.050s & - & - \\
    (1,10), (1,10), (1,10), (1,4) & Yes & 429 &
                                                0.454s
          & - & - \\
    12 copies of (1,20) & Yes & 786 & 12m 5.250s & - & - \\
    \hline
    $\Aa_{gain} \times 3$ & \textsf{N/A} & 24389 & 7.611s & 24389 & 12.402s \\
    $\Aa_{gain} \times 4$ & \textsf{N/A} & 707281 & 14m 12.369s & 707281 & 27m 13.540s  \\
    \hline
  \end{tabular}
  \caption{
    \# nodes is the number of nodes enumerated during a
    breadth-first-search; ``-'' denote that there was no answer for 20
    minutes; \textsf{N/A} denotes not-applicable, as $\Aa_{gain}$ is
    not a scheduling problem.}
  \label{tab:experiments-new}
\end{table}

We report on experiments conducted using the open source tool
TChecker~\cite{tchecker}. The models are given as networks of timed
automata which communicate via synchronized actions. We have
implemented the new static analysis discussed in
Section~\ref{sec:better-gg-bounds}. The older static analysis and zone
enumeration with the $\Gg$-simulation were already
implemented~\cite{CAV-19}.

\subparagraph
{Compositionality of static analysis.} Both these
static analyses are performed individually on each component. For each
local state $q_i$ a map $\Gg(q_i)$ is computed. During the zone
enumeration the product of the automata is computed on-the-fly. Each
node is of the form $(q, Z)$ where
$q = \langle q_1, q_2, \dots, q_k \rangle$ is a tuple of local states,
one from each component of the network and $Z$ is a zone over all
clocks of the network. The $\Gg$-map is then taken as
$\Gg(q) = \bigcup_i \Gg(q_i)$. This approach creates a problem when
there are shared clocks. A component $i$ might update $x$ and another
component $j \neq i$ might contain a guard with $x$. The $\Gg$-maps
computed component-wise will then not give a sound simulation. In our
experiments, we construct models without shared clocks.

\subparagraph
{Benchmarks.} Our primary benchmarks are
models of task scheduling problems using the Earliest-Deadline-First
(EDF) policy. Each task has a computation time and a deadline. Tasks
are released either periodically or via a specification given as a
timed automaton. The goal is to verify if for a given set of released
tasks, all of them can be finished within their deadline. Preemption
of tasks is allowed. This problem has been encoded as a reachability
in a network of timed automata that uses bounded
subtraction~\cite{Fersman:InfComp:2007}.  The main challenge is to
model preemption. Each task $t_i$ has an associated clock $c_i$ which
is reset as soon as the task starts to execute. While $t_i$ is
running, and some other task $t_j$ preempts $t_i$, the clock $c_i$
continues to elapse time. When $t_j$ is done, an update
$c_i := c_i - C_j$ is performed, where $C_j$ denotes the computation
time of $t_j$. This way, when $t_i$ is scheduled again, clock $c_i$
maintains the computation time that has elapsed since it was
started. Whenever the EDF scheduler has to choose between task $t_i$
and $t_j$, it chooses the one which is closest to its deadline. To get
this, when $t_i$ is released, a clock $d_i$ is reset. Task $t_i$ is
prioritized over $t_j$ if $D_i - d_i < D_j - d_j$ where $D_i, D_j$ are
the deadlines. We have constructed a model for the EDF scheduler based
on these ideas, which we explain in more detail in
Section~\ref{sec:benchmarks}.  For the experiments in
Table~\ref{tab:experiments-new}, we consider some task release
strategies given in the literature (SporadicPeriodic from TIMES tool,
and a variant of Mine-Pump from \cite{GERDSMEIER2001143}) and also
create some of our own (Flower and Worst-case task triggers). he last
model $\Aa_{gain}$ is an automaton with reset-to-zero only updates
illustrating the gain when both static analyses terminate. More
details about the models are given in Section~\ref{sec:benchmarks}.

\subparagraph
{Comparison.} For all the EDF examples, the
old static analysis did not terminate, as seen in the last two columns
of Table~\ref{tab:experiments-new}. This is expected since the model
contains an update of the form $x := x - C$ which repeatedly adds
guards $x \le K, x \le K + C, x \le K + 2C, \dots$. The new static
analysis cuts this out, since the update $x := x - C$ occurs along
with a guard $x \le D$, making it a timed automaton with bounded
subtraction. The $\Aa_{gain}$ example runs with both the static
analyses. However, the new static analysis minimizes the propagation
of diagonal constraints. The time taken by the simulation test used in
the zone enumeration phase is highly sensitive to the number of
diagonal constraints. Fewer diagonals therefore result in a faster
zone enumeration.  We have also tried our new static analysis for
standard benchmarks of diagonal-free timed automata and observed no
gain. In these models, the distance between a clock reset and a
corresponding guard (in a component automaton) is
small, usually within one or two transitions. Hence resets already cut
out most of the guards and the optimizations of
Table~\ref{tab:optimizations-main} do not seem to help here. We expect
to gain primarily in the presence of updates or diagonal
constraints. We also remark that the last experiment cannot be
performed on the TIMES tool which is built for scheduling problems and
the previous ones cannot be modeled in other timed automata tools
UPPAAL, PAT and Theta since they cannot handle timed automata with
subtraction updates. Our prototype therefore subsumes existing tools
in terms of modeling capability.

\section{Benchmarks}
\label{sec:benchmarks}

We give a more detailed account of the models that we have considered
in our experiments. We first start with our EDF scheduling model, and
then describe the model for the last experiment in Table
\ref{tab:experiments-new}.

\subsection{Modelling EDF schedulability}

As mentioned in Section~\ref{sec:experiments}, the scheduling problem
is to know if a given set of tasks can all be scheduled within their
deadline through the EDF policy. We place a restriction on the task
modeling: at any point of time, there is only one instance of a
particular task. This restriction is not upfront
considered in \cite{Fersman:InfComp:2007} where multiple instances
of a task are allowed. However, there is a bound on the number of
instances. Hence different instances of a task can be renamed and
modeled in our setting. We model the EDF scheduling
problem using a network of timed automata consisting of a
\emph{scheduler}, a \emph{task handler} for each task, and a
\emph{task release} automaton.

\begin{figure}[h!]
  \centering
  \begin{tikzpicture}
    \node (0) at (0,0) {\includegraphics[height=200mm,
      width=135mm]{./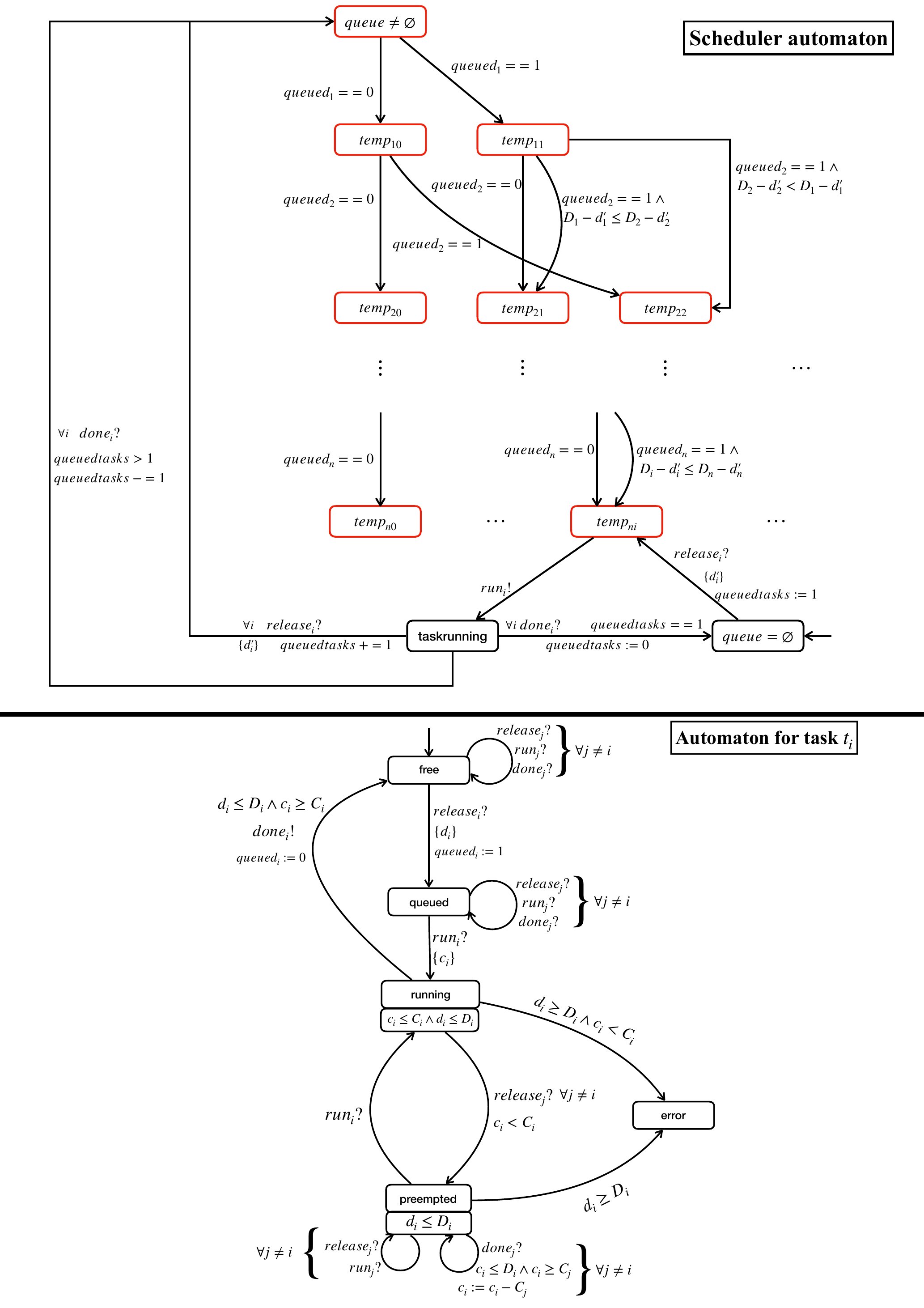}};
  \end{tikzpicture}
  \caption{EDF scheduler and task handler}
  \label{fig:edf-scheduler}
\end{figure}

\subparagraph
{Scheduler and task handler.}
Figure~\ref{fig:edf-scheduler} shows the scheduler and the task
handler automata. We use a boolean variable $queued_i$ to denote
whether task $t_i$ has been added to the queue ($queued_i = 1$) or not
($queued_i = 0$). The scheduler automaton selects which task gets
executed depending on the time left to reach its deadline. The
scheduler automaton has a state to remember that the task queue is
empty ($queue = \emptyset$). The state $taskrunning$ denotes that one
of the tasks (present in the queue) is being executed. The remaining
states, all marked red, constitute the gadget that chooses the task to
be executed. We mark a state red to denote that it is a
\emph{committed state}, meaning no time is allowed to elapse in this
state~\cite{Bengtsson:Springer:2004}. Assuming there are $n$ tasks, we
have $n$ \emph{layers} of these states. In the $i^{th}$ layer, there
are $(i+1)$ many states, $temp_{i0}, temp_{i1}, \dots, temp_{ii}$.  We
assume that the set of tasks is ordered. The state $temp_{ij}$ denotes
that among the first $i$ tasks, $t_1, t_2, \dots, t_i$ (some of these
tasks might not be queued), task $t_j$ (which must be queued) has the
closest deadline among the tasks that are queued. The state
$temp_{i0}$ denotes that none of the first $i$ tasks are queued. Note
that after checking the first $i$ tasks, if the task with the earliest
deadline is $t_j$, then after checking the $(i+1)^{th}$ task, the task
with the earliest deadline can either be $t_{(i+1)}$ or remain as
$t_j$.  The automaton thus has a transition from $temp_{ij}$ to
$temp_{(i+1)(i+1)}$ checking if the deadline of $t_{i+1}$ is
(strictly) closer than the deadline of $t_j$. This is checked by the
guard $D_{i+1}-d_{i+1}' < D_j - d_j'$, where $d_j'$ is a clock that
gets reset as soon as the task $t_j$ gets added to the queue (i.e. on
every transition with synchronization $release_j$). Otherwise, after
checking the $(i+1)^{th}$ task, $t_j$ remains to be the task with the
earliest deadline. This is possible in two scenarios - (i) the task
$t_{(i+1)}$ is not present in the queue ($queued_{i+1}=0$) or (ii) the
deadline of $t_{(i+1)}$ is atleast as far as the deadline of $t_j$
(checked by the guard $D_j - d_j' \le D_{i+1} - d_{i+1}'$).  There are
these two edges from $temp_{ij}$ to $temp_{(i+1)j}$, for every
$i = 1,2,\dots,n-1$ and $1 \le j \le i$.  The states $temp_{ni}$ (for
$i=1,2,\dots,n$) denote that among all the $n$ tasks, the task $t_i$
has the earliest deadline and hence this task must be executed. This
is ensured using the transition $temp_{ni} \to taskrunning$ with the
synchronization $run_i$.  This \emph{triangle-like} gadget chooses the
task with the earliest deadline.

While at the state $taskrunning$, if a task gets released, then we
again need to choose the task, among the queued tasks (including this
newly queued task), with the earliest deadline. For this, there is an
edge $taskrunning \to `queue\neq \emptyset$' with the synchronization
$release_i$, for every $i = 1,2,\dots,n$.  After a task gets finished,
there are two cases - either the queue becomes empty (the edges
$taskrunning \to `queue=\emptyset$') or the queue remains non-empty
(the edges $taskrunning \to `queue \neq \emptyset$'). These edges are
synchronized using the signal $done_i$ for every $i = 1,2,\dots,n$.

The task handler automaton for task $t_i$ maintains
the state of a task, whether the task has been added to the task queue
($queued$), whether it is running ($running$) or it has been preempted
($preempted$) or if the deadline of this task has been violated
($error$).  The state $free$ denotes that the task has not been
released yet.  As soon as the task gets released, we reset a clock
$d_i$ that tracks the deadline of the task.  This is done in the
transition $free \to queued$.  (This clock $d_i$ is essentially a copy
of the clock $d_i'$ present in the scheduler automaton.  We use these
copies to avoid using shared clocks.)  We use another clock $c_i$, to
maintain the execution time of the task.  This clock gets reset as
soon as the task starts to be executed, on the transition
$queued \to running$.  Since a task can be released while another task
is being executed, assuming there are $n$ many tasks, there are
$(n-1)$ many edges $running \to preempted$ on each of the signals
$release_j$ where $j \neq i$.  While a task is preempted, there can be
other tasks getting released, run or finished.  In order to account
for the cases for releasing and executing of other tasks, there is a
self loop on each of the signals $release_j$ and $run_j$, for
$j \neq i$. Since while the task $t_i$ is preempted, no other instance
of $t_i$ gets released into the queue, there is no self loop on either
$release_i$ or $run_i$. Note, while the automaton is in the state
$preempted$, the clock $c_i$ keeps elapsing time, although the task is
not being executed.  So, while the task $t_i$ is preempted, when
another task $t_j$ gets finished (synchronized using the signal
$done_j$), we deduct the computation time of $t_j$ from the clock
$c_i$. This is done using the update $c_i := c_i - C_j$ on the edge
$preempted \xrightarrow{done_j?} preempted$. In this edge, we also
have the guard $c_i \le D_i$.  This ensures that this automaton (for
task $t_i$) is a timed automaton with bounded subtraction. In the
state $running$, the invariant $c_i \le C_i \wedge d_i \le D_i$
ensures that the task is yet to finish its execution and also the
deadline has not been violated. The invariant at the state $preempted$
also ensures that, while at this state, the deadline of the task has
not been violated. There are two edges to the $error$ state, from
$running$ and from $preempted$, both checking that the deadline is
going to be (or has been) violated. The edge $running \to free$
ensures that the task has been executed within its deadline. The self
loops on the states $free$ and $queued$ ensure that there is no
deadlock when another task gets released or being executed or has been
done.

\subparagraph
{Task release automata.}
The release of a
task is controlled using a task release automaton. We have considered
four such automata for our experiments, corresponding to the first
four rows in Table~\ref{tab:experiments-new}. Figure~\ref{fig:flower}
and Figure~\ref{fig:worst-case-task} give the third and fourth rows:
\emph{Flower} and \emph{Worst-case} task triggering automata
respectively. Both of these automata ensure that a task $t_i$ gets
released only if it is not present in the task queue (i.e.
$queued_i = 0$). The first two rows of the table come
from existing task release strategies in the literature, which we
discuss below.

A \emph{periodic task} with a period $P$ is a task that gets released
every $P$ time units.  In order to maintain the fact that not more
than one instance of a task gets added to the task queue, the
deadlines for every periodic task, that we can consider, is lesser
than or equal to its period.  A set of periodic tasks is modeled using
the automaton in Figure~\ref{fig:periodic-tasks}.

The \emph{SporadicPeriodic} model is from the tool TIMES. This example
has three periodic tasks and one sporadic task, $A$, that is
controlled by the task releasing automaton given in
Figure~\ref{fig:sporadic-periodic}. We have run multiple examples for
different values of the constant $N$ present in the task releasing
automaton and for each of these models, the answer (of whether the
tasks are schedulable or not) matches with the answer of TIMES. The
\emph{Mine-Pump} example is presented in
\cite{GERDSMEIER2001143}. This example has six periodic tasks, written
as $(computation\ time, deadline, period)$, that are $(58,200,200)$,
$(37,250,250)$, $(37,300,300)$, $(39,350,350)$, $(33,800,800)$,
$(33,1000,1000)$. We only consider the first five tasks for our
experiment. In this example, our algorithm says that it is
schedulable, which matches with the result given in
\cite{GERDSMEIER2001143}.

\begin{figure}[t]
  \centering
  \begin{tikzpicture}[state/.style={circle, thick, draw, inner
      sep=2pt, minimum size = 3mm}]
    \begin{scope}[every node/.style={state}]
      \node (0) at (0,0) {\footnotesize $q_0$};
    \end{scope}
    \begin{scope}[->,>=stealth, thick]
      \draw (-1,0) to (0); \draw [rounded corners=5] (0) to (1,0.5) to
      (1,-0.5) to (0);
    \end{scope}
    \node at (2.2,0.3) {$queued_i==0$}; \node at (1.8,-0.3)
    {$release_i!$}; \node at (0.5,0.9) {$\vdots$}; \node at (0.5,-0.7)
    {$\vdots$};
  \end{tikzpicture}
  \caption{$\Aa_{flower}$: non-deterministically releases a task
    whenever the task is not already present in the queue}
  \label{fig:flower}
\end{figure}
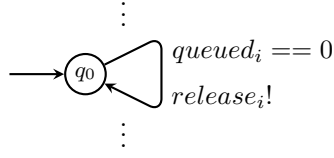

\begin{figure}[t]
  \centering
  \begin{tikzpicture}[state/.style={circle, thick, draw, inner
      sep=2pt, minimum size = 3mm}]
    \begin{scope}[every node/.style={state}]
      \node (0) at (0,0) {\footnotesize $r_0$}; \node (1) at (2,0)
      {\footnotesize $r_1$}; \node (2) at (4,0) {\footnotesize $r_1$};
      \node (3) at (8,0) {\footnotesize $r_n$}; \node (4) at (11,0)
      {\footnotesize $t_i$};
    \end{scope}
    \begin{scope}[->,>=stealth, thick]
      \draw (-1,0) to (0); \draw (0) to (1); \draw (1) to (2); \draw
      (2) to (5,0); \draw (6,0) to (3); \draw [bend left = 30] (3) to
      (4); \draw [bend left = 30] (4) to (3);
    \end{scope}
    \node at (1,0.2) {\scriptsize $x\le0$}; \node at (1,-0.2)
    {\scriptsize $release_1!$}; \node at (3,0.2) {\scriptsize
      $x\le0$}; \node at (3,-0.2) {\scriptsize $release_2!$}; \node at
    (7,0.2) {\scriptsize $x\le0$}; \node at (7,-0.2) {\scriptsize
      $release_n!$}; \node at (5.5,0) {\dots}; \node at (9.5,0.7)
    {\scriptsize $done_i?$}; \node at (9.5,0.3) {\scriptsize $\{x\}$};
    \node at (9.5,-0.3) {\scriptsize $x\le0$}; \node at (9.5,-0.7)
    {\scriptsize $release_i!$}; \node at (9.5,1.5) {\vdots}; \node at
    (9.5,-1.5) {\vdots};
  \end{tikzpicture}
  \caption{$\Aa_{wc}$: first release all the tasks in zero time, then
    release a task as soon as the queued instance finishes its
    execution}
  \label{fig:worst-case-task}
\end{figure}
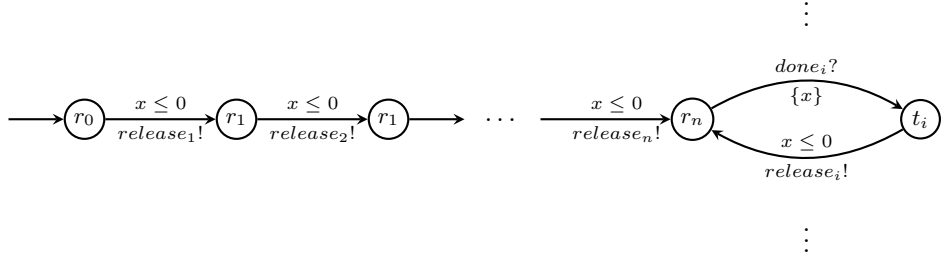

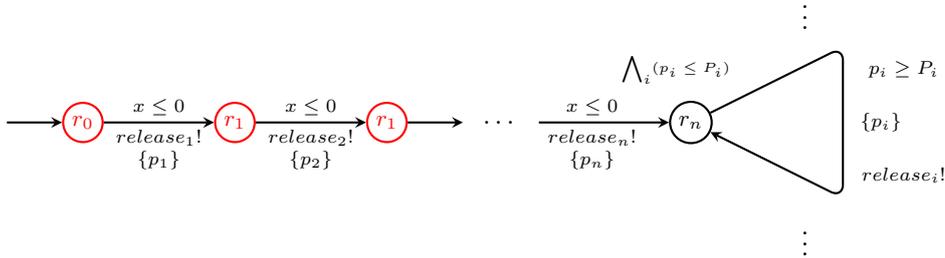
\begin{figure}[h!]
  \centering
  \begin{tikzpicture}[state/.style={circle, thick, draw, inner
      sep=2pt, minimum size = 3mm}]
    \begin{scope}[every node/.style={state}]
      \node [color=red] (0) at (0,0) {\footnotesize $r_0$}; \node
      [color=red] (1) at (2,0) {\footnotesize $r_1$}; \node
      [color=red] (2) at (4,0) {\footnotesize $r_1$}; \node (3) at
      (8,0) {\footnotesize $r_n$};
    \end{scope}
    \begin{scope}[->,>=stealth, thick]
      \draw (-1,0) to (0); \draw (0) to (1); \draw (1) to (2); \draw
      (2) to (5,0); \draw (6,0) to (3); \draw [rounded corners=5] (3)
      to (10,1) to (10,-1) to (3);
    \end{scope}
    \node at (1,0.2) {\scriptsize $x\le0$}; \node at (1,-0.2)
    {\scriptsize $release_1!$}; \node at (1,-0.5) {\scriptsize
      $\{p_1\}$}; \node at (3,0.2) {\scriptsize $x\le0$}; \node at
    (3,-0.2) {\scriptsize $release_2!$}; \node at (3,-0.5)
    {\scriptsize $\{p_2\}$}; \node at (6.7,0.2) {\scriptsize $x\le0$};
    \node at (6.7,-0.2) {\scriptsize $release_n!$}; \node at
    (6.7,-0.5) {\scriptsize $\{p_n\}$}; \node at (7.8,0.7) {\tiny
      $\bigwedge_i(p_i \le P_i)$}; \node at (5.5,0) {\dots}; \node at
    (10.8,0.7) {\scriptsize $p_i \ge P_i$}; \node at (10.5,0)
    {\scriptsize $\{p_i\}$}; \node at (10.8,-0.7) {\scriptsize
      $release_i!$}; \node at (9.5,1.5) {\vdots}; \node at (9.5,-1.5)
    {\vdots};
  \end{tikzpicture}
  \caption{$\Aa_{periodic}$: task releasing automaton for a set of
    periodic tasks $\{t_i \mid i=1,2,\dots,n\}$. The state $r_n$ has
    an invariant $\bigwedge_i(p_i \le P_i)$. The states
    $r_0, \dots, r_{n-1}$ are marked red to denote that they are
    \emph{committed states}. This automaton first releases all tasks
    in zero time, then releases the task $t_i$ after every $P_i$ time
    units, for $i=1,2,\dots,n$, $P_i$ being the period of task $t_i$.}
  \label{fig:periodic-tasks}
\end{figure}

\begin{figure}[h!]
  \centering
  \begin{tikzpicture}[state/.style={rectangle, thick, draw, inner
      sep=2pt, minimum size = 3mm, rounded corners=2}]
    \begin{scope}[every node/.style={state}]
      \node [align=center] (0) at (-0.5,0) {\footnotesize $OFFSET$ \\
        \footnotesize $y \le 60$}; \node [align=center] (1) at (4,0)
      {\footnotesize $LOC\_2$ \\ \footnotesize $x \le 3$}; \node
      [align=center] (2) at (9,0) {\footnotesize $LOC\_3$ \\
        \footnotesize $y \le 60$};
    \end{scope}
    \begin{scope}[->,>=stealth, thick]
      \draw (-2,0) to (0); \draw (0) to (1); \draw (1) to (2); \draw
      [rounded corners=4] (2) to (9,-1) to (4,-1) to (1); \draw
      [rounded corners=4] (1) to (2.5,1.5) to (5.5,1.5) to (1);
    \end{scope}
    \node at (1,0.2) {\scriptsize $y \geq 60$}; \node at (1.8,-0.2)
    {\scriptsize $\{x,y\}~;~n:=0$}; \node at (2.5,0.2) {\scriptsize
      $release_A!$}; \node at (6.5,0.2) {\scriptsize
      $x == 3 \wedge n==N-1$}; \node at (5.5,-0.8) {\scriptsize
      $y \ge 60$}; \node at (7,-0.8) {\scriptsize $release_A!$}; \node
    at (6.5,-1.2) {\scriptsize $\{x,y\}~;~n:=0$}; \node at (4,2)
    {\scriptsize $x==3 \wedge n<N-1$}; \node at (4,1.7) {\scriptsize
      $release_A!$}; \node at (4,1.3) {\scriptsize $\{x\}~;~n:=n+1$};
  \end{tikzpicture}
  \caption{Task automaton releasing the sporadic task for the example
    \emph{SporadicPeriodic} of the tool TIMES. There is one sporadic
    task $(1,3)$ controlled by the above figure. There are three
    periodic tasks $(5,20,20),(8,28,30),(5,30,30)$. We vary the
    variable $N$ giving models SporadicPeriodicN. }
  \label{fig:sporadic-periodic}
\end{figure}
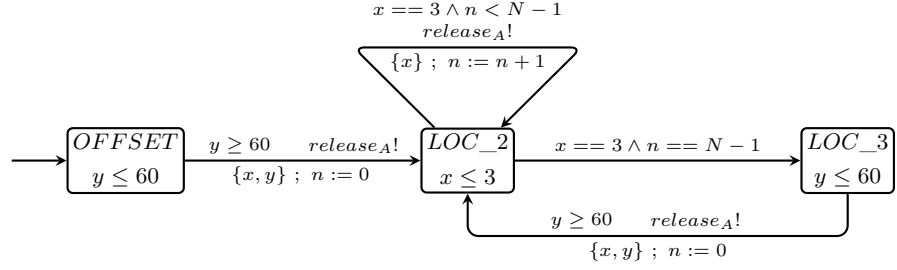

\subsection{Comparison with previous static analysis.}

The automaton $\Aa_{gain}$ present in Figure~\ref{fig:a-gain} is used
to compare the static analysis presented in this paper with the static
analysis presented in \cite{CAV-19}. We check reachability of a
non-existent state, since this forces an entire enumeration of all
nodes modulo simulation and therefore is not sensitive to the order of
exploration of transitions. We use $\Gg^{new}$ to denote the reduced
$\Gg$-map computed by the algorithm of this paper and $\Gg^{old}$ to
denote the $\Gg$-map computed by the algorithm of \cite{CAV-19}.  For
the state $q_2$ of $\Aa_{gain}$,
$\Gg^{old} (q_2) = \{ x-y_1 \le 35, x - y_2 \le 55, x - y_1 \le 30,
x-y_2 \le 65, x \le 50 \}$, whereas
$\Gg^{new}(q_2) = \{x-y_1 \le 35, x-y_1 \le 30, x \le 50\}$. This is
because the propagation of the diagonals $x-y_2 \le 55$ and
$x - y_2 \le 65$ get cut by the guard $x \le 50$ present in the
outgoing transition from $q_2$, due to the optimization presented in
Table~\ref{tab:optimizations-main}. Similarly, for the state $q_1$,
$\Gg^{old}(q_1) = \{ x-y_1 \le 35, x - y_1 \le 30, x \le 20, x \le 40,
x \le 50, x \le 60, x \le 70, x \le 55, x \le 65 \}$, whereas
$\Gg^{new}(q_1) = \{x \le 20\}$. This difference gets amplified when
we consider a product of several copies of $\Aa_{gain}$.  These
reductions in the size of $\Gg$-map indeed show up in the experiments.
As presented in Table~\ref{tab:experiments-new}, the new algorithm
takes lesser time than the old one, while exploring the same size zone
graph. This is because a set $G$ containing more diagonal constraints
results in slower checks for the simulation $\lug$. Since the new
$\Gg$-map does not necessarily provide a coarser simulation than the
old $\Gg$-map, the number of zones enumerated by both the algorithms
are the same, however the new algorithm explores it faster.

\begin{figure}[t]
  \centering
  \begin{tikzpicture}
    \node (0) at (0,0) {\includegraphics[height=100mm,
      width=135mm]{./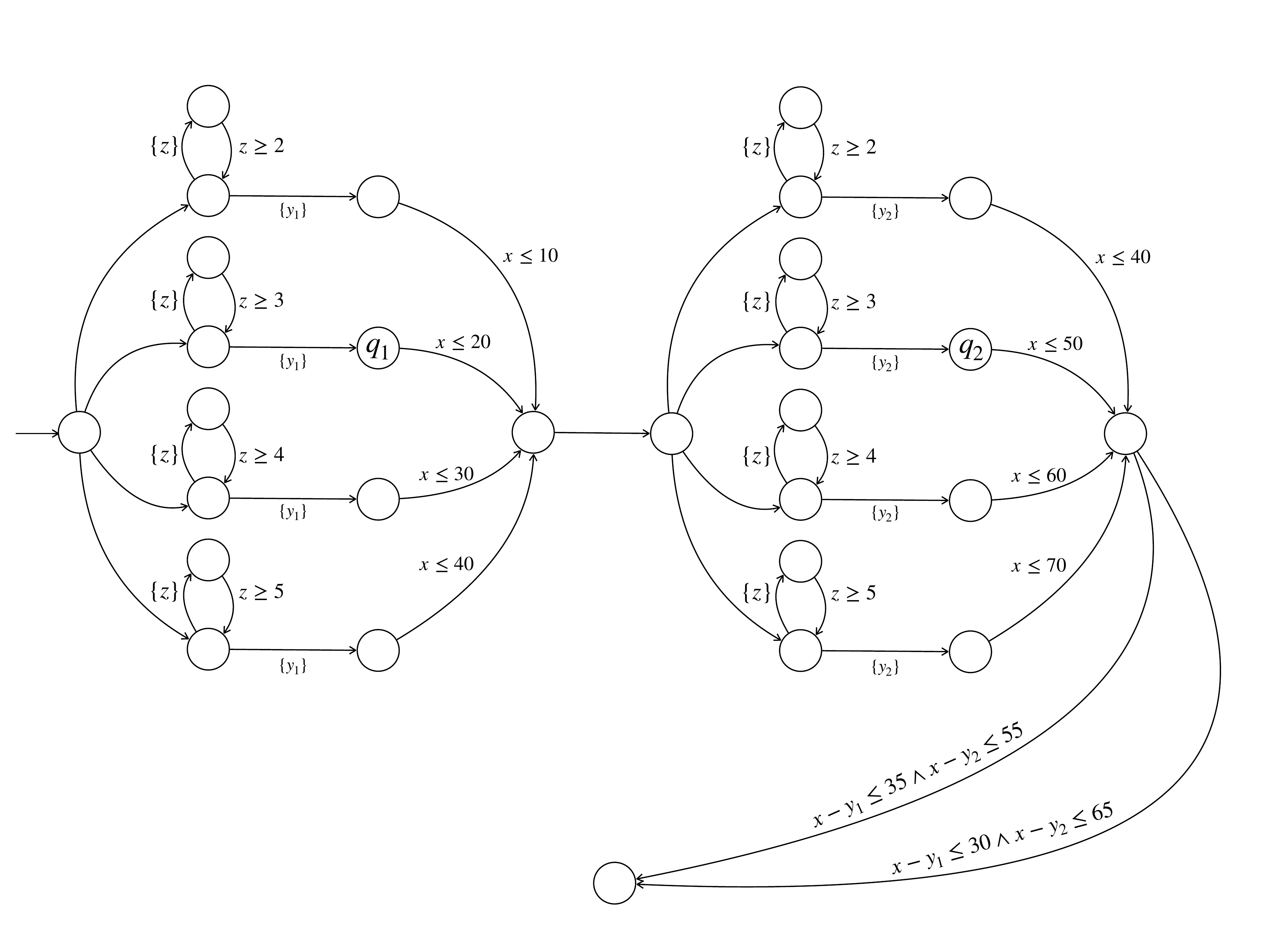}};
  \end{tikzpicture}
  \caption{The automaton $\Aa_{gain}$}
  \label{fig:a-gain}
\end{figure}


\section{Conclusion}
\label{sec:conclusion}

We have presented a static analysis procedure for UTA. Our method
terminates for a wider class of UTA, and hence makes powerful
simulations applicable to this wider class of timed systems. We have
experimented with a prototype implementation. At a technical level, we
get a unifying framework to show decidability that covers the
decidable subclasses of \cite{Bouyer:2004:Updateable}, 
\cite{Fersman:InfComp:2007} and \cite{CAV-19}, which are the only
known decidable classes upto our knowledge and
provides a high-level technique to extend to broader classes: to show
decidability, check if there is a finite reduced $\Gg$-map (c.f. proof
of Theorem \ref{thm:reduced-g-bounds-finite-syntactically-bdd} and the
subsequent remark). Earlier route via regions requires a more involved
low-level reasoning to show the correctness of the region
equivalence. From a practical perspective, we have a prototype with a
richer modeling language and a more efficient way to handle updates
than the existing real-time model checkers. As future work, we plan to
engineer the prototype to make it applicable for bigger models and
release the implementation and benchmarks in the public domain.

\newpage
\bibliographystyle{plainurl} \bibliography{UTA-better-static-analysis}

\begin{thebibliography}{10}

\bibitem{Maler:2006:Scheduling}
Yasmina Abdeddaim, Eugene Asarin, and Oded Maler.
\newblock Scheduling with timed automata.
\newblock {\em Theoretical Computer Science}, 354(2):272--300, 2006.

\bibitem{Alur:TCS:1994}
Rajeev Alur and David~L. Dill.
\newblock A theory of timed automata.
\newblock {\em Theoretical Computer Science}, 126(2):183--235, 1994.

\bibitem{TIMES-Tool}
Tobias Amnell, Elena Fersman, Leonid Mokrushin, Paul Pettersson, and Wang Yi.
\newblock {TIMES:} {A} tool for schedulability analysis and code generation of
  real-time systems.
\newblock In {\em {FORMATS}}, volume 2791 of {\em Lecture Notes in Computer
  Science}, pages 60--72. Springer, 2003.

\bibitem{Behrmann:TACAS:2003}
Gerd Behrmann, Patricia Bouyer, Emmanuel Fleury, and Kim~G. Larsen.
\newblock {Static guard analysis in timed automata verification.}
\newblock In {\em Tools and Algorithms for the Construction and Analysis of
  Systems (TACAS)}, volume 2619 of {\em Lecture Notes in Computer Science},
  pages 254--270. Springer, 2003.

\bibitem{Behrmann:STTT:2006}
Gerd Behrmann, Patricia Bouyer, Kim~G. Larsen, and Radek Pel{\'a}nek.
\newblock Lower and upper bounds in zone-based abstractions of timed automata.
\newblock {\em International Journal on Software Tools for Technology
  Transfer}, 8(3):204--215, 2006.

\bibitem{Bengtsson:Springer:2004}
Johan Bengtsson and Wang Yi.
\newblock Timed automata: Semantics, algorithms and tools.
\newblock In {\em Lectures on Concurrency and Petri Nets, Advances in Petri
  Nets}, volume 3098 of {\em Lecture Notes in Computer Science}, pages 87--124.
  Springer, 2004.

\bibitem{Bojanczyk:ICALP:2012}
Mikolaj Bojanczyk and Slawomir Lasota.
\newblock A machine-independent characterization of timed languages.
\newblock In {\em {ICALP} {(2)}}, volume 7392 of {\em Lecture Notes in Computer
  Science}, pages 92--103. Springer, 2012.

\bibitem{Bouyer:2004:forwardanalysis}
Patricia Bouyer.
\newblock Forward analysis of updatable timed automata.
\newblock {\em Formal Methods in System Design}, 24(3):281--320, 2004.

\bibitem{Bouyer:2005:Conciseness}
Patricia Bouyer and Fabrice Chevalier.
\newblock On conciseness of extensions of timed automata.
\newblock {\em Journal of Automata, Languages and Combinatorics},
  10(4):393--405, 2005.

\bibitem{Bouyer:2004:Updateable}
Patricia Bouyer, Catherine Dufourd, Emmanuel Fleury, and Antoine Petit.
\newblock Updatable timed automata.
\newblock {\em Theoretical Computer Science}, 321(2-3):291--345, 2004.

\bibitem{Daws:TACAS:1998}
Conrado Daws and Stavros Tripakis.
\newblock Model checking of real-time reachability properties using
  abstractions.
\newblock In {\em Tools and Algorithms for the Construction and Analysis of
  Systems (TACAS)}, volume 1384 of {\em Lecture Notes in Computer Science},
  pages 313--329. Springer, 1998.

\bibitem{Dill:1990:DBM}
David~L. Dill.
\newblock Timing assumptions and verification of finite-state concurrent
  systems.
\newblock In {\em Automatic Verification Methods for Finite State Systems},
  volume 407 of {\em Lecture Notes in Computer Science}, pages 197--212.
  Springer, 1989.

\bibitem{FearnleyJ15:2015:IandC}
John Fearnley and Marcin Jurdzinski.
\newblock Reachability in two-clock timed automata is pspace-complete.
\newblock {\em Inf. Comput.}, 243:26--36, 2015.

\bibitem{Fersman:InfComp:2007}
Elena Fersman, Pavel Krc{\'{a}}l, Paul Pettersson, and Wang Yi.
\newblock Task automata: Schedulability, decidability and undecidability.
\newblock {\em Inf. Comput.}, 205(8):1149--1172, 2007.

\bibitem{Gastin:2018:CONCUR}
Paul Gastin, Sayan Mukherjee, and B.~Srivathsan.
\newblock Reachability in timed automata with diagonal constraints.
\newblock In {\em International Conference on Concurrency Theory (CONCUR)},
  volume 118 of {\em LIPIcs}, pages 28:1--28:17. Schloss Dagstuhl -
  Leibniz-Zentrum fuer Informatik, 2018.

\bibitem{CAV-19}
Paul Gastin, Sayan Mukherjee, and B.~Srivathsan.
\newblock Fast algorithms for handling diagonal constraints in timed automata.
\newblock In Isil Dillig and Serdar Tasiran, editors, {\em Computer Aided
  Verification}, pages 41--59, Cham, 2019. Springer International Publishing.

\bibitem{GERDSMEIER2001143}
Thorsten Gerdsmeier and Rachel Cardell-Oliver.
\newblock Analysis of scheduling behaviour using generic timed automata.
\newblock {\em Electronic Notes in Theoretical Computer Science}, 42:143 --
  157, 2001.
\newblock Computing: The Australasian Theory Symposium (CATS 2001).

\bibitem{Haase:FundInf-RP:2016}
Christoph Haase, Jo{\"{e}}l Ouaknine, and James Worrell.
\newblock Relating reachability problems in timed and counter automata.
\newblock {\em Fundam. Inform.}, 143(3-4):317--338, 2016.

\bibitem{Hatvani:2014:AVOCSjournal}
Leo Hatvani, Alexandre David, Cristina~Cerschi Seceleanu, and Paul Pettersson.
\newblock Adaptive task automata with earliest-deadline-first scheduling.
\newblock {\em {ECEASST}}, 70, 2014.

\bibitem{Henzinger:1998:JCSS:Hybrid}
Thomas~A. Henzinger, Peter~W. Kopke, Anuj Puri, and Pravin Varaiya.
\newblock What's decidable about hybrid automata?
\newblock {\em J. Comput. Syst. Sci.}, 57(1):94--124, 1998.

\bibitem{tchecker}
Fr{\'e}d{\'e}ric Herbreteau and Gerald Point.
\newblock {TC}hecker.
\newblock URL: \url{https://github.com/fredher/tchecker}.

\bibitem{Herbreteau:2013:CAV}
Fr{\'e}d{\'e}ric Herbreteau, B.~Srivathsan, and Igor Walukiewicz.
\newblock Lazy abstractions for timed automata.
\newblock In {\em Computer Aided Verification (CAV)}, volume 8044 of {\em
  Lecture Notes in Computer Science}, pages 990--1005. Springer, 2013.

\bibitem{Herbreteau:IandC:2016}
Fr{\'{e}}d{\'{e}}ric Herbreteau, B.~Srivathsan, and Igor Walukiewicz.
\newblock Better abstractions for timed automata.
\newblock {\em Information and Computation}, 251:67--90, 2016.

\bibitem{Yi:2007:FORMATS:Schedulability-extended}
Pavel Krc{\'{a}}l, Martin Stigge, and Wang Yi.
\newblock Multi-processor schedulability analysis of preemptive real-time tasks
  with variable execution times.
\newblock In {\em {FORMATS}}, volume 4763 of {\em Lecture Notes in Computer
  Science}, pages 274--289. Springer, 2007.

\bibitem{Larsen:1997:UPPAAL}
Kim~Guldstrand Larsen, Paul Pettersson, and Wang Yi.
\newblock {UPPAAL} in a nutshell.
\newblock {\em International Journal on Software Tools for Technology
  Transfer}, 1(1-2):134--152, 1997.

\bibitem{Hourglass}
Yuki Osada, Tim French, Mark Reynolds, and Harry Smallbone.
\newblock Hourglass automata.
\newblock In {\em GandALF}, volume 161 of {\em {EPTCS}}, pages 175--188, 2014.

\bibitem{Ocan:CAV:19:abstraction-refinement}
Victor Roussanaly, Ocan Sankur, and Nicolas Markey.
\newblock Abstraction refinement algorithms for timed automata.
\newblock In {\em {CAV} {(1)}}, volume 11561 of {\em Lecture Notes in Computer
  Science}, pages 22--40. Springer, 2019.

\bibitem{PAT-ModelChecker}
Jun Sun, Yang Liu, Jin~Song Dong, and Jun Pang.
\newblock Pat: Towards flexible verification under fairness.
\newblock In {\em Proceedings of the 21th International Conference on Computer
  Aided Verification (CAV'09)}, volume 5643 of {\em Lecture Notes in Computer
  Science}, pages 709--714. Springer, 2009.

\bibitem{theta-fmcad2017}
Tam\'as T\'oth, \'{A}kos Hajdu, Andr\'as V\"or\"os, Zolt\'an Micskei, and
  Istv\'an Majzik.
\newblock Theta: a framework for abstraction refinement-based model checking.
\newblock In Daryl Stewart and Georg Weissenbacher, editors, {\em Proceedings
  of the 17th Conference on Formal Methods in Computer-Aided Design}, pages
  176--179, 2017.
\newblock \href {http://dx.doi.org/10.23919/FMCAD.2017.8102257}
  {\path{doi:10.23919/FMCAD.2017.8102257}}.

\bibitem{KRONOS}
Sergio Yovine.
\newblock Kronos: A verification tool for real-time systems. ({K}ronos user's
  manual release 2.2).
\newblock {\em International Journal on Software Tools for Technology
  Transfer}, 1:123--133, 1997.

\end{thebibliography}

\end{document}